\newenvironment{customthm}[1]
  {\innercustomthm}
  {\endinnercustomthm}
\patchcmd{\section}{\scshape}{\bfseries}{}{}
\renewcommand{\@secnumfont}{\bfseries}
\newtheorem{theorem}{Theorem}
\newtheorem{corollary}{Corollary}
\newtheorem{definition}{Definition}
\newtheorem{lemma}{Lemma}
\newtheorem{proposition}{Proposition}
\newtheorem{claim}{Claim}
\theoremstyle{remark}
\newtheorem{example}{Example}
\def\D{\mathcal{D}}
\def\X{\mathcal{X}}    
\def\S{\mathcal{S}}   \def\C{\mathcal{C}}
    \def\T{\mathcal{T}}
\def\t{\tilde t}
\def\d{\tilde d}
\def\c{\tilde c}
\def\xii{\hat \xi}
\def\xij{\bar \xi}
\def\im{initial matching}
\newcommand{\df}[1]{\textbf{\textit{#1}}}
\newcommand{\abs}[1]{\left| #1 \right|}
\newcommand{\wt}[1]{\widetilde{#1}}
\newcommand{\chd}{Ch_d}
\newcommand{\ieh}[1]{{\color{purple} IEH: #1 }}
\newcommand{\fk}[1]{{\color{red} FK: #1 }}
\newcommand{\mby}[1]{{\color{blue} MBY: #1 }}
\begin{document}

\title[Interdistrict School Choice]{Interdistrict School Choice: A Theory of Student Assignment$^{\dagger}$}

\author[Hafalir, Kojima, and Yenmez]{Isa E. Hafalir \and Fuhito Kojima  \and M. Bumin Yenmez$^{*}$}

\thanks{\emph{Keywords}: Interdistrict school choice, student assignment, stability, efficiency.\\
We thank Mehmet Ekmekci, Haluk Ergin, Yuichiro Kamada, Kazuo Murota, Tayfun S\"{o}nmez, Utku \"{U}nver, Rakesh Vohra, and the audiences at various seminars and conferences. Lukas Bolte, Ye Rin Kang, and especially Kevin Li provided superb research assistance. Kojima acknowledges financial support from the National Research Foundation through its Global Research Network Grant (NRF-2016S1A2A2912564).
Hafalir is affiliated with the UTS Business School, University of Technology Sydney, Sydney, Australia; Kojima is with the Department of Economics, Stanford University, 579 Serra Mall, Stanford, CA, 94305; Yenmez is with the Department of Economics, Boston College, 140 Commonwealth Ave, Chestnut Hill, MA, 02467. Emails: \texttt{isa.hafalir@uts.edu.au}, \texttt{fkojima@stanford.edu},
\texttt{bumin.yenmez@bc.edu}.}


\begin{abstract}
Interdistrict school choice programs---where a student can be assigned to a school outside of her district---are widespread in the US, yet the market-design literature has not considered such programs. We introduce a model of interdistrict school choice and present two mechanisms that produce \emph{stable} or \emph{efficient} assignments. We consider three categories of policy goals on
assignments and identify when the mechanisms can achieve them. By introducing a novel framework of interdistrict school choice, we provide a new avenue of research in market design.
\end{abstract}

\date{\today, First draft: July 15, 2017}

\maketitle



\section{Introduction}
School choice is a program that uses preferences of children and their parents over public schools to assign children to schools.
It has expanded rapidly in the United States and many other countries in the last few decades. Growing popularity and interest in school choice stimulated research in market design,  which has not only studied this problem in the abstract, but also contributed to designing specific assignment mechanisms.\footnote{\label{fn:SchoolChoiceImplementation}See \citet{abdul05,abdulka05b,abdulkadiroglu/pathak/roth:09} for details of the implementation of these new school choice procedures in New York and Boston.}

Existing market-design research about school choice is, however, limited to \textit{intradistrict} choice, where each student is
assigned to a school only in her own district. In other words, the literature has not studied \textit{interdistrict} choice,
where a student can be assigned to a school outside of her district. This is a severe limitation for at least two reasons.
First, interdistrict school choice is widespread: some form of it is practiced in 43 U.S.
states.\footnote{See \url{http://ecs.force.com/mbdata/mbquest4e?rep=OE1705}, accessed on July 14, 2017.} Second, as we illustrate in detail below, many policy goals in school choice impose constraints across districts in reality, but the existing literature assumes away such constraints. This omission limits our ability to analyze these policies of interest.

In this paper, we propose a model of interdistrict school choice. Our paper  builds upon matching models
in the tradition of \citet{gale62}.\footnote{We use the terms \emph{assignment} and \emph{matching} interchangeably for the rest of
the paper.} We study mechanisms and interdistrict admissions
rules to assign students to schools under which a
variety of policy goals can be established, an approach similar to the intradistrict school choice literature \citep{abdulson03}. In our setting, however, policy goals are defined on the district level---or sometimes even over multiple districts---rather than the individual school level, placing our model outside of the standard setting.  To facilitate the analysis in this setting, we model the problem as matching with contracts \citep{hatfi04}  between students and districts in which a contract specifies the particular school within the district that the student attends.\footnote{One might suspect that an interdistrict school choice problem can readily be reduced to an intradistrict problem by relabeling a district as a school. This is not the case because, among other things, which school within a district a student is matched with matters for that student's welfare.}

Following the school choice literature, we begin our analysis by considering \emph{stability} (we also consider efficiency, as explained later). To define stability in our  framework, we assume that each district is endowed with an admissions rule represented by a choice function over sets of contracts. 
We focus our attention on the student-proposing deferred-acceptance mechanism (SPDA) of \cite{gale62}. In our setting, this mechanism is not only stable but also strategy-proof---i.e., it renders truthtelling a weakly dominant strategy for each student.


In this context, we formalize a number of important policy goals. The first is \emph{individual rationality} in the sense that
every student is matched with a weakly more preferred school than  the school she is initially matched with
(in the absence of interdistrict school choice). This is an important requirement, because if an interdistrict school choice
program harms students, then public opposition is expected and the program may not be sustainable. The second policy is what we call
\emph{the balanced-exchange policy}: The number of students that each district receives from the other districts must be the same as the number of students that it sends to the others. Balanced exchange is also highly
desired by school districts in practice. This is because each district's funding depends on the number
of students that it serves and, therefore, if
the balanced-exchange policy is not satisfied, then some districts may lose funding, possibly making the interdistrict
school choice program impossible. For each of these policy goals, we identify the necessary and sufficient condition for achieving that goal under SPDA as a restriction on district admissions rules.

Last, but not least, we also consider a requirement that there be enough student diversity in each district.
In fact, diversity appears to be the main motivation for many interdistrict school choice programs.\footnote{We refer to
\citet{Wells09} for a review and discussion of interdistrict integration programs.}
To put this into context, we note that the lack of diversity is prevalent under intradistrict school choice programs
even though they often seek diversity by controlled-choice constraints.\footnote{Examples
of controlled school choice include Boston before 1999, Cambridge, Columbus, and Minneapolis. See \cite{abdulson03} for details of these programs as well as analysis of controlled school choice.} This is perhaps unsurprising given that only  residents of the given district can participate in intradistrict school choice and there is often severe residential segregation. In fact, a number of studies such as \cite{rivkin94} and \cite{clotfelter1999public,clotfelter11} attribute the majority---as high as 80 percent for some data and measure---of racial and ethnic segregation in public schools to disparities between school districts rather than within school districts. Given this concern, many interdistrict choice programs explicitly list achieving diversity as their main goal.

A case in point is the \emph{Achievement and Integration (AI) Program} of the Minnesota Department of Education (MDE).
Introduced in 2013, the AI program incentivizes school districts for integration. A district is required to
participate in this program if the proportion of a racial group in the district is considerably higher than
that in a neighboring district. In particular, every year the MDE commissioner analyzes fall enrollment data
from every district and, when a district and one of its adjoining districts have a difference of 20 percent or
higher in the proportion of any group of enrolled \emph{protected students} (American Indian, Asian or Pacific Islander,
Hispanic, Black, not of Hispanic origin, and White, not of Hispanic origin),  the district with the higher percentage
is required to be in the AI program.\footnote{In Minnesota's AI program, if the difference in the proportion of
protected students at a school is 20 percent or higher than a school in the same district, the school
with the higher percentage is considered a \emph{racially identifiable school} (RIS) and districts with RIS schools
also need to participate in the AI program. In this paper, we focus on diversity issues across districts rather
than within districts. Diversity problems within districts are studied in the controlled school choice literature
that we discuss below.} In the 2015-16 school year, more than 120 school districts participated in this program
(Figure \ref{fig:AI Figure}, taken from MDE's website, shows school districts in the Minneapolis-Saint Paul
metro area that take part in this program).
\begin{figure}[htb]
 \centering
 \includegraphics[scale=0.12]{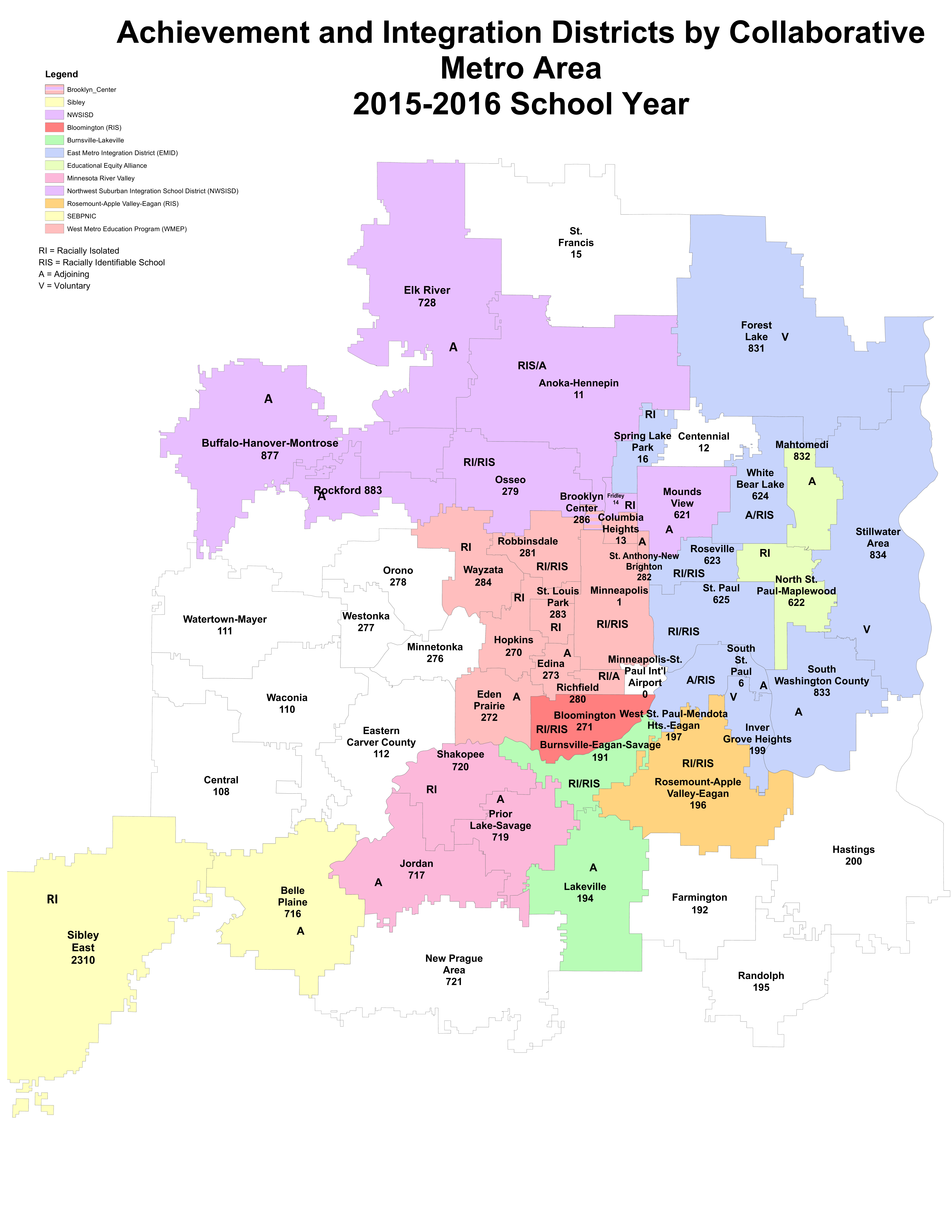}
\caption{Minnesota-Saint Paul metro area school districts participating in the AI program. The districts with the same color are adjoining districts that exchange students with one another.}
\label{fig:AI Figure}
\end{figure}

Motivated by Minnesota's AI program, we consider a policy goal requiring that  the difference in the proportions of each student type
across districts be within a given bound. Then, we provide a necessary and sufficient condition
for SPDA to satisfy the diversity policy. The condition provided is one on district admissions rules that have a structure of
type-specific ceilings, an analogue of the class of choice rules analyzed by \citet{abdulson03} and \citet{ehayeyi14} in the context of a more standard intradistrict school-choice problem.

Next, we turn our attention to efficiency. Given that the distributional policy goals work as constraints on matchings, we use the concept of \emph{constrained efficiency}. We say that a matching is constrained efficient if it satisfies the policy goal and is not Pareto dominated by any matching that satisfies the same policy goal. In addition, we require individual rationality and strategy-proofness.\footnote{Without individual rationality, all the other desired properties can be attained by a serial dictatorship.}
We first demonstrate an impossibility result; when the diversity policy is given as type-specific ceilings at the district level, there is no mechanism that satisfies the policy goal, constrained efficiency, individual rationality, and strategy-proofness.
By contrast, a version of the top trading cycles mechanism (TTC) of \cite{shasca74} satisfies these properties when the policy goal satisfies M-convexity,
a concept in discrete mathematics \citep{Murota:SIAM:2003}. We proceed to show that the balanced-exchange policy and an alternative form of diversity policy---type-specific ceilings at the individual school level instead of at the district level---are M-convex,
so TTC satisfies the desired properties for these policies. The same conclusion holds even when both of these policy goals are imposed simultaneously.

We also consider the case when there is a policy function that measures how well a matching
satisfies the policy goal. For example, diversity of a matching can be measured as its distance to an ideal distribution
of students. We show that TTC
satisfies the same desirable properties when the policy function satisfies \emph{pseudo M-concavity}, a notion
of concavity for discrete functions that we introduce. Furthermore, we show that there is an equivalence between two approaches based on the M-convexity of the policy set and the pseudo M-concavity of the policy function.
Therefore, both results can naturally be applied in different settings depending on how the policy goals are stated.

\subsection*{Related Literature}
Our paper is closely related to the controlled school choice literature that studies student diversity in schools in a given district. \citet{abdulson03} introduce a policy that imposes type-specific ceilings on each school. This policy has been analyzed by \citet{abdulkadirouglu2005college}, \citet{Ergin06}, and \citet{koj12}, among others. More accommodating policies using reserves rather than type-specific ceilings have been proposed and analyzed by
\citet{hayeyi13} and \citet{ehayeyi14}. The latter paper finds difficulties associated with hard floor constraints, an issue further analyzed by \citet{fragiadakis/iwasaki/troyan/ueda/yokoo:12} and \citet{frapet17}.\footnote{In addition to the works discussed above,
recent studies on controlled school choice and other two-sided matching problems with diversity concerns include \citet{westkamp10}, \citet{echyen12}, \citet{sonmez_rotc2011}, \citet{komson12}, \citet{dur_boston}, \citet{dur16}, and \citet{ngvoh17}.} In addition to sharing the motivation of achieving diversity, our paper is related to this literature in that we extend the type-specific reserve and ceiling constraints to district admissions rules. In contrast to this literature, however, our policy goals are imposed on districts rather than individual schools, which makes our model and analysis different from the existing ones.

The feature of our paper that imposes constraints on sets of schools (i.e., districts), rather than individual schools, is shared by several recent studies in matching with constraints. \citet{kamakoji-basic} study a model where the number of doctors who can be matched with hospitals in each region has an upper bound constraint. Variations and generalizations of this problem are studied by  \citet{goto2014improving,goto:16},
 \citet{biro:tcs:2010}, and
 \citet{kamakoji-concepts,kamakoji-iff}, among others.  While sharing the broad interest in constraints, these papers are different from ours in at least two major respects. First, they do not assume a set of hospitals is endowed with a well-defined choice function, while each school district has a choice function in our model. Second, the policy issues studied in these papers and those studied in ours are different given differences in the intended applications. These differences render our analysis distinct from those of the other papers, with none of their results implying ours and vice versa.

One of the notable features of our model is that district admissions rules do not necessarily satisfy the standard assumptions in the literature, such as \emph{substitutability}, which guarantee the existence of a stable matching. In fact, even a seemingly reasonable district admissions rule may violate substitutability because a district can choose at most one contract associated with the same student---namely just one contract representing one school that the student can attend. Rather, we make weaker assumptions following the approach of \cite{hatkom14}. This issue is playing an increasingly prominent role in matching with contracts literature; for example, in matching with constraints \citep{kamakoji-basic}, college admissions \citep{aygtur16,yen14}, and postgraduate admissions \citep{harosh16}, to name just a few.

Our analysis of Pareto efficient mechanisms is related to a small but rapidly growing literature that uses discrete
optimization techniques for matching problems. Closest to ours is \citet{suzuki17},  who show that a version of
TTC satisfies desirable properties if the constraint satisfies M-convexity.\footnote{See
\citet{kurata2016pareto} for an earlier work on TTC in a more specialized setting involving floor constraints at individual schools.}
Our analysis on efficiency builds upon and generalizes theirs. While the use of discrete
convexity concepts for studying efficient object allocation is still rare, it has been utilized
in an increasing number of matching problems such as two-sided matching with possibly
bounded transfer \citep{Fujishige:2006,Fujishige:2007}, matching with substitutable choice functions \citep{murota:metr:2013},
matching with constraints \citep{kojima-tamura-yokoo}, and trading networks \citep{candogan2016competitive}.

There is also a recent literature on segmented matching markets in a given district. \cite{mantur14} study a setting where
different clearinghouses can be coordinated, but not integrated in a centralized clearinghouse, and show how a stable
matching can be achieved. In a similar setting, \cite{durkes18} study sequential mechanisms
and show that these mechanisms lack desired properties. In another work, \cite{ekyen14} study the incentives
of a school to join a centralized clearinghouse. In contrast to these papers, we study which interdistrict school choice policies
can be achieved when districts are integrated.

At a high level, the present paper is part of research in resource allocation under constraints.
Real-life auction problems often feature constraints \citep{milgrom2009assignment}, and a great deal of attention was paid to cope with complex constraints in a recent FCC auction for spectrum allocation \citep{milgrom2014deferred}. Auction and exchange markets under constraints are analyzed by \citet{BLM04}, \citet{GPZ18}, and \citet{KSY18}. Handling constraints is also a subject of a series of papers on probabilistic assignment mechanisms \citep{bckm:10,che2013generalized,pycia2015decomposing,akbarpour2017approximate,nguyen2016assignment}. Closer to ours are \citet{dur2015two} and \citet{dur2018erasmus}. They consider the balance of incoming and outgoing members---a requirement that we also analyze---while modeling exchanges of members of different institutions under constraints. Although the differences in the model primitives and exact constraints make it impossible to directly compare their studies with ours, these papers and ours clearly share  broad interests in designing mechanisms under constraints.

The rest of the paper is organized as follows. Section \ref{sec:model} introduces the model. In Sections \ref{sec:stability} and \ref{sec:TTC}, we study when the policy goals can be satisfied together with stability and constrained efficiency, respectively. Section \ref{sec:conclusion} concludes. Additional results, examples, and omitted proofs are presented in the Appendix.

\section{Model}\label{sec:model}

In this section, we introduce our concepts and notation.

\subsection{Preliminary Definitions}
There exist finite sets of students $\S$, districts $\D$, and schools $\C$. Each student $s$ and school $c$ has 
a home district denoted by $d(s)$ and $d(c)$, respectively. Each student $s$ has a type $\tau(s)$ that can represent different aspects of the student such as the gender, race, socioeconomic status, etc. The set of all types is finite and denoted by $\T$. Each school $c$ has a capacity $q_c$, which is the maximum number of students that the school can enroll. There exist at least two school districts with one or more schools. For each district $d$, $k_d$ is the number of students whose home district is $d$. In each district, schools have sufficiently large capacities to accommodate all
students from the district, i.e., for every district $d$, $k_d\leq \sum_{c : d(c)=d} q_c$. For each type $t$,
$k^t$ is the number of type-$t$ students.

We model interdistrict school choice as a matching problem between students and districts. However, merely identifying the district with which a student is matched leaves the specific school she is enrolled in unspecified. To specify which school within a district the student is matched with, we use the notion of contracts: A contract $x=(s,d,c)$ specifies a student $s$, a district $d$, and a school $c$ within this district, i.e., $d(c)=d$.\footnote{For ease of exposition, a contract will sometimes be denoted by a pair $(s,c)$ with the understanding that the district associated with the contract is the home district of school $c$.} For any contract $x$, let $s(x)$, $d(x)$, and $c(x)$ denote the student, district, and school associated with this contract, respectively. Let $\X \equiv
\{(s,d,c) | d(c)=d\}$ denote the set of all contracts. For any set of contracts $X$, let $X_s$ denote the set of all
contracts in $X$ associated with student $s$, i.e., $X_s=\{x\in X| s(x)=s\}$. Similarly, let $X_d$ and $X_c$ denote the sets of all
contracts in $X$ associated with district $d$ and school $c$, respectively.

Each district $d$ has an \df{admissions rule} that is represented by a choice function $\chd$. Given a set of contracts $X$, the district chooses a subset of contracts associated with itself, i.e., $\chd (X)=\chd(X_d) \subseteq X_d$.

Each student $s$ has a strict preference order $P_s$ over all schools and the outside option of being unmatched, which is denoted by $\emptyset$. Likewise, $P_s$ is also used to rank contracts associated with $s$. 
Furthermore, we assume that
the outside option is the least preferred outcome, so for every contract $x$ associated with $s$, $x \mathrel{P_s} \emptyset$. The corresponding weak order is denoted by $R_s$. More precisely, for any two contracts $x,y$ associated with $s$, $x \mathrel{R_s} y$ if $x \mathrel{P_s} y$ or $x=y$.

A \df{matching} is a set of contracts. A matching $X$ is \df{feasible for students} if there exists at most one contract associated with every student in $X$. 
A matching $X$ is \df{feasible} if it is feasible for students and the number of contracts associated with every school in $X$ is at most its capacity, i.e., for any $c\in \C$, $\abs{X_c}\leq q_c$. 
We assume that there exists a feasible \df{\im} $\tilde{X}$ such that every student has exactly one contract.\footnote{In Appendix \ref{sec:cen}, we also consider the case when the initial matching for each district is constructed using student preferences and district admissions rules.} For any student $s$, if $\tilde{X}_s=\{(s,d,c)\}$  for some district $d$ and school $c$, then $c$ is called the \df{initial school} of $s$.

A \df{problem} is a tuple $(\S,\D,\C,\T,\{d(s),\tau(s),P_s\}_{s\in \S},\{Ch_{d}\}_{d\in \D}, \{d(c),q_c\}_{c\in \C},\tilde{X})$.
In what follows, we assume that all the components of a problem are publicly known except for student preferences. Therefore, we sometimes refer to a problem by the student preference profile which we denote as $P_{\S}$. The preference profile of a subset of students $S\subseteq \S$ is denoted by $P_S$.

\subsection{Properties of Admissions Rules}

A district admissions rule $\chd$ is \df{feasible} if it always chooses a feasible matching. It is \df{acceptant} if, for any contract $x$
associated with district $d$ and matching $X$ that is feasible for students; and if $x$ is rejected from $X$, then  at $\chd(X)$, either
\begin{itemize}
\item the number of students assigned to school $c(x)$ is equal to $q_{c(x)}$, or
\item the number of students assigned to district $d$ is at least $k_{d}$.
\end{itemize}

In words, when a district admissions rule is acceptant, a contract $x=(s,d,c)$ can be rejected by district $d$ from a set which is feasible for students only if either the capacity of school $c$ is filled or district $d$ has accepted at least $k_d$ students. Equivalently, if neither of these two conditions is satisfied, then the district has to accept the student. Throughout the paper, we assume that admissions rules are feasible and acceptant.\footnote{In Section \ref{sec:diversity}, we assume a weaker notion of acceptance when the admissions rule limits the number of students of each type that the district can accept.}

A district admissions rule satisfies \df{substitutability} if, whenever a contract is chosen from a set, it is also chosen from any subset containing that contract \citep{kelso82,roth84}. More formally, a district admissions rule $Ch_d$ satisfies substitutability  if, for every $x \in X\subseteq Y \subseteq \X$ with $x\in Ch_d(Y)$, it must be that $x\in Ch_d(X)$. A district admissions rule satisfies \df{the law of aggregate demand} (LAD) if the number of contracts chosen from a set is weakly greater than that of any of its subsets
\citep{hatfi04}. Mathematically, a district admissions rule $Ch_d$ satisfies LAD if, for every $X\subseteq Y \subseteq \X$, $\abs{Ch_d(X)}\leq \abs{Ch_d(Y)}$.\footnote{\cite{alkan02} and \cite{alkan03} introduce related monotonicity conditions.}
A \df{completion} of a district admissions rule $Ch_d$ is another admissions rule $Ch'_d$ such that for every matching $X$ either $Ch'_d(X)$ is equal to $Ch_d(X)$ or it is not feasible for students \citep{hatkom14}. Throughout the paper, we assume that district admissions rules have completions that satisfy substitutability and LAD.\footnote{\cite{hatkoj10} introduce other notions of weak substitutability.} In Appendix \ref{sec:examples}, we provide classes of district admissions rules that satisfy our assumptions.

\subsection{Matching Properties, Policy Goals, and Mechanisms}

A feasible matching $X$ satisfies \df{individual rationality} if every student weakly prefers her outcome in $X$ to
her initial school, i.e., for every student $s$, $X_s  \mathrel{R_s} \tilde{X}_s$.

A \df{distribution} $\xi \in \mathbb Z_+^{|\C|\times |\T|}$ is a vector such that the entry for school $c$ and type $t$ is denoted by $\xi_c^t$. The entry $\xi_c^t$ is interpreted as the number of type-$t$ students in school $c$ at $\xi$. Furthermore, let $\xi_d^t  \equiv \sum_{c:d(c)=d} \xi_c^t$, which is interpreted as the number of type-$t$ students in district $d$ at $\xi$. Likewise, for any feasible matching $X$,  the \df{distribution associated with $X$} is $\xi(X)$ whose $c,t$ entry $\xi_c^t(X)$ is the number of type-$t$ students assigned to school $c$ at $X$. Similarly, $\xi_d^t(X)$ denotes the number of type-$t$ students assigned to district $d$ at $X$.

We represent a distributional policy goal $\Xi$ as a set of distributions. The policy
that each student is matched without assigning any school more students than its capacity
is denoted by $\Xi^0$, i.e., $\Xi^0 \equiv \{\xi | \sum_{c,t}{\xi_c^t}=\sum_d k_d \text{ and }  q_c \geq \sum_t \xi_c^t \text{ for all } c\}$. A matching $X$ \df{satisfies the policy goal} $\Xi$ if the distribution associated with $X$ is in $\Xi$.

A feasible matching $X$ \df{Pareto dominates} another feasible matching $Y$ if every student weakly prefers
her outcome in $X$ to her outcome in $Y$ and at least one student strictly prefers the former to the latter.
Given a distributional policy goal, a feasible matching $X$ that satisfies the policy goal satisfies \df{constrained efficiency}
if there exists no feasible matching that satisfies the policy goal and Pareto dominates $X$.

A matching $X$ is \df{stable} if it is feasible and
\begin{itemize}
\item districts would choose all contracts assigned to them, i.e., $\chd(X)=X_d$ for every district $d$, and
\item there exist no student $s$ and no district $d$ who would like to match with each other, i.e., there exists no contract $x=(s,d,c) \notin X$ such that $x \mathrel{P_s} X_s$ and $x\in \chd(X\cup \{x\})$.
\end{itemize}

Stability was introduced by \cite{gale62} for the college admissions problem. In the context of assigning students to public schools, it is viewed as a fairness notion \citep{abdulson03}.

A \df{mechanism} $\phi$ takes a profile of student preferences as input and produces a feasible matching. The outcome for student $s$ at the reported preference profile $P_{\S}$ under mechanism $\phi$ is denoted as $\phi_s(P_{\S})$. A mechanism $\phi$ satisfies \df{strategy-proofness} if no student can misreport her preferences and get a strictly more preferred contract. More formally, for every student $s$ and preference profile $P_{\S}$,  there exists no preference $P'_s$ such that $\phi_s(P'_s,P_{\S\setminus \{s\}}) \mathrel{P_s} \phi_s(P_{\S})$. For any property on matchings, a mechanism satisfies the property if, for every preference profile, the matching produced by the mechanism satisfies the property.

\section{Achieving Policy Goals with Stable Outcomes}\label{sec:stability}
To achieve stable matchings with desirable properties, we use a generalization of the deferred-acceptance
algorithm of \cite{gale62}.\medskip

\paragraph{\textbf{Student-Proposing Deferred Acceptance Algorithm}}
\begin{description}
  \item[Step 1] Each student $s$ proposes a contract $(s,d,c)$ to district $d$ where $c$ is her most preferred school. Let $X_d^1$ denote the set of contracts proposed to district $d$. District $d$ tentatively accepts contracts in $\chd(X_d^1)$ and permanently rejects the rest. If there are no rejections, then stop and return $\cup_{d \in \mathcal D} \chd(X_d^1)$ as the outcome.
  \item[Step $\mathbf{n}$ ($\mathbf{n>1}$)] Each student $s$ whose contract was rejected in Step $n-1$ proposes a contract $(s,d,c)$ to district $d$ where $c$ is her next preferred school. If there is no such school, then the student does not make any proposals. Let $X_d^n$ denote the union of the set of contracts that were tentatively accepted by district $d$ in Step $n-1$ and the set of contracts that were proposed to district $d$ in Step $n$. District $d$ tentatively accepts contracts in $\chd(X_d^n)$ and permanently rejects the rest. If there are no rejections, then stop and return $\cup_{d \in \mathcal D} \chd(X_d^n)$.
\end{description}

The student-proposing deferred acceptance mechanism (SPDA) takes a profile of student preferences as input
and produces the outcome of this algorithm at the reported student preference profile.
When district admissions rules have completions that satisfy substitutability and LAD, SPDA is stable and strategy-proof \citep{hatkom14}. Therefore, when we analyze SPDA, we assume that students report their preferences truthfully.

We illustrate SPDA using the following example. We come back to this example later to study the effects of interdistrict school choice.

\begin{example}\label{ex:simple}
Consider a problem with two school districts, $d_1$ and $d_2$. District $d_1$ has school $c_{1}$ with capacity one and school $c_{2}$ with capacity two. District $d_2$ has school $c_{3}$ with capacity two. There are four students: students $s_{1}$ and $s_{2}$ are from district $d_{1}$, whereas students $s_{3}$ and $s_{4}$ are from district $d_{2}$. The \im{} is $\{(s_1,c_1),(s_2,c_2),(s_3,c_3),(s_4,c_3)\}$.

Given any set of contacts, district $d_1$ chooses students who have contracts with school $c_1$ first and then chooses from the remaining students who have contracts with school $c_2$. For school $c_1$, the district prioritizes students in the order $s_3\succ s_4 \succ s_1 \succ s_2$ and chooses one applicant if there is any. For school $c_2$, the district prioritizes students according to the order $s_1 \succ s_2 \succ s_3 \succ s_4$ and chooses as many applicants as possible without going over the school's capacity while ignoring the contracts of the students who have already been accepted at school $c_1$. 
 Likewise, district $d_2$ prioritizes students according to the order $s_3 \succ s_4 \succ s_1 \succ s_2$ and chooses as many applicants as possible without going over  the capacity of school $c_3$. 
These admissions rules are feasible and acceptant, and they have completions that satisfy substitutability and LAD.\footnote{In Appendix \ref{section:genex}, we provide a general class of admissions rules, including this one as a special case. We show that these admissions rules are feasible and acceptant, and they have completions that satisfy substitutability and LAD.}
In addition, student preferences are given by the following table,
\[
\begin{tabular}
[c]{llll}
\underline{$P_{s_{1}}$} & \underline{$P_{s_{2}}$} & \underline{$P_{s_{3}}$} &
\underline{$P_{s_{4}}$}\\
$c_{1}$ & $c_{3}$ & $c_{1}$ & $c_{2}$\\
$c_{2}$ & $c_{1}$ & $c_{2}$ & $c_{1}$\\
$c_{3}$ & $c_{2}$ & $c_{3}$ & $c_{3}$
\end{tabular}
\]
\smallskip

\noindent
which means that, for instance, student $s_1$ prefers $c_1$ to $c_2$ to $c_3$.

In this problem, SPDA runs as follows. At the first step, student $s_1$ proposes to district $d_1$ with contract $(s_1,c_1)$, student $s_2$ proposes to district $d_2$ with contract $(s_2,c_3)$, student $s_3$ proposes to district $d_1$ with contract $(s_3,c_1)$, and student $s_4$ proposes to district $d_1$ with contract $(s_4,c_2)$. District $d_1$ first considers contracts associated with school $c_1$, $(s_1,c_1)$ and $(s_3,c_1)$, and tentatively accepts $(s_3,c_1)$ while rejecting $(s_1,c_1)$ because student $s_3$ has a higher priority than student $s_1$ at school $c_1$. Then district $d_1$ considers contracts of the remaining students associated with school $c_2$. In this case, there is only one such contract, $(s_4,c_2)$, which is tentatively accepted. District $d_2$ considers contract $(s_2,c_3)$ and tentatively accepts it. The tentative matching is $\{(s_2,c_3), (s_3,c_1), (s_4,c_2)\}$. Since there is a rejection, the algorithm
proceeds to the next step.

At the second step, student $s_1$ proposes to district $d_1$ with contract $(s_1,c_2)$. District $d_1$ first considers contract $(s_3,c_1)$ and tentatively accepts it. Then district $d_1$ considers contracts $(s_1,c_2)$ and $(s_4,c_2)$ and tentatively accepts them both. District $d_2$ does not have any new contracts, so tentatively accepts $(s_2,c_3)$. Since there is no rejection, the
algorithm stops. The outcome of SPDA is $\{(s_1,c_2), (s_2,c_3), (s_3,c_1), (s_4,c_2)\}$.
\qed
\end{example}

In the rest of this section, we formalize three policy goals and characterize 
 conditions under which SPDA satisfies them.

\subsection{Individual Rationality}\label{sec:ir}
In our context, individual rationality requires that every student 
 is matched with a weakly more preferred school than her initial school. As a result, SPDA does not necessarily satisfy individual rationality even though each student is either unmatched or matched with a school that is more preferred than being unmatched.

If individual rationality is violated so that some students prefer their initial schools to the outcome of SPDA, then there may
be public opposition that harm interdistrict school choice efforts. For this reason, individual rationality is a desirable property for
policymakers. The following condition proves to play a crucial role for achieving this property.

\begin{definition}
A district admissions rule $\chd$ \df{respects the \im} if, for any student $s$ whose initial school $c$ is in district $d$ and matching $X$ that is feasible for students, $(s,d,c) \in X$ implies $(s,d,c) \in \chd(X)$.
\end{definition}

When a district's admissions rule respects the \im, it has to admit those contracts
associated with itself in which students apply to their initial schools from every matching that is feasible for students.
The following result shows that this is exactly the condition for SPDA to satisfy individual rationality.

\begin{theorem}\label{thm:welfimprove2}
SPDA satisfies individual rationality if, and only if, each district's admissions rule respects the \im.
\end{theorem}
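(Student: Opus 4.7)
The plan is to establish the two directions separately.

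For the \emph{if} direction, I fix an arbitrary student $s$ with initial school $c$ in her home district $d$ and trace her trajectory through SPDA. The key observation is that at every step $n$, the input $X_d^n$ fed into $\chd$ is feasible for students, since each student contributes at most one contract to $X_d^n$: either a new step-$n$ proposal (only if she was rejected in step $n-1$) or a single tentative acceptance carried over from step $n-1$ (only if she made no new proposal). Consequently, if $s$ ever proposes $(s,d,c)$ at some step $n$, then respecting the initial matching forces $(s,d,c)\in \chd(X_d^n)$. The same applies at every subsequent step, because once $s$ is tentatively held she stops proposing and $(s,d,c)$ continues to sit inside the feasible-for-students input of $\chd$, so $s$ ends matched with $c$. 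If, on the other hand, $s$ never proposes $c$, the top-down nature of SPDA's proposal order implies that she stopped at some contract strictly preferred to $c$ and was never dislodged, making her final match strictly better than $c$. Either way, individual rationality holds for $s$.

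For the \emph{only if} direction, I argue by contrapositive. Suppose $\chd$ fails to respect the initial matching, so there exist a student $s$ with initial school $c$ in $d$ and a matching $X$ feasible for students with $(s,d,c)\in X$ but $(s,d,c)\notin \chd(X)=\chd(X_d)$. I design a preference profile that forces SPDA's step-1 input to $d$ to be precisely $X_d$: put $c$ at the top of $s$'s list; for every other student $s'$ with $(s',d',c')\in X$, put $c'$ at the top of $s'$'s list; and for every student $s'$ not appearing in $X$, put at the top of her list any school outside district $d$, which exists because there are at least two districts. Then the step-1 proposals to $d$ are exactly $X_d$, so $(s,d,c)$ is rejected at step 1. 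Because $c$ is $s$'s top choice and SPDA has her propose schools in strictly decreasing order of preference, $s$ never proposes $c$ again; her final match is therefore either the outside option or a school strictly worse than $c$, violating individual rationality.

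The main technical delicacy lies in the sufficiency argument: one must rule out the possibility that $(s,d,c)$ is tentatively accepted at step $n$ and then bumped at some later step, even though $\chd$ need not itself be substitutable (only its completion is). This is resolved by the per-set nature of respecting the initial matching, which applies afresh at each step as long as the input remains feasible for students, a condition that SPDA preserves throughout its execution. In the necessity direction, the only subtle point is controlling the step-1 proposals of students outside $X$, which is handled by pointing their top choices to a school in some district other than $d$ so that the assumed failure $(s,d,c)\notin\chd(X_d)$ is triggered immediately at step 1.
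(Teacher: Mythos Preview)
Your proof is correct and follows essentially the same route as the paper's own argument: for sufficiency, you trace a student down to her initial school, use feasibility-for-students of each step's input to invoke the respecting-the-initial-matching condition, and then iterate to show the tentative acceptance persists; for necessity, you construct top choices so that district $d$'s step-1 input is exactly $X_d$, forcing the rejection of $(s,d,c)$ immediately. The only difference is cosmetic: the paper has every student outside $X_d$ top-rank a non-$d$ school, whereas you let students with a contract in $X\setminus X_d$ top-rank that contract---both constructions yield the same step-1 proposals to $d$.
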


The intuition for the ``if'' part of this theorem is simple. When district admissions rules respect the initial matching,
no student is matched with a school which is strictly less preferred than her initial school under SPDA because she is guaranteed to be accepted by that school if she applies to it. For the ``only if'' part of the theorem, we construct a specific student preference profile
such that SPDA assigns one student a strictly less preferred school than her initial school whenever there exists one district with an admissions rule that does not respect the initial matching.

In the next example, we illustrate SPDA with district admissions rules that respect the initial matching.

\begin{example}
Consider the problem in Example \ref{ex:simple}. Recall that in this problem, the outcome of SPDA is $\{(s_1,c_2), (s_2,c_3), (s_3,c_1), (s_4,c_2)\}$. This matching is not individually rational because student $s_1$ prefers her initial school $c_1$ to school $c_2$ that she is matched with. This observation is consistent with Theorem \ref{thm:welfimprove2} because the admissions rule of district $d_1$ does not respect the initial matching. In particular, $Ch_{d_1}(\{(s_1,c_1), (s_3,c_1)\})=\{(s_3,c_1)\}$, so student $s_1$ is rejected from a matching that is feasible for students and includes the contract with her initial school.

Now modify the priority ranking of district $d_1$ at school $c_1$ so that $s_1 \succ s_2 \succ s_3 \succ s_4$ but, otherwise, keep the construction of the district admissions rules and student preferences the same as before. With this change, district admissions rules respect the initial matching because each student is accepted when she applies to the district with her initial school.\footnote{In Appendix  \ref{ex:respecting}, we construct a class of district admissions rules that includes this admissions rule as a special case. These admissions rules are feasible and acceptant, and have completions that satisfy substitutability and LAD. Furthermore, they also respect the initial matching.} In particular, the proposal of student $s_1$ to district $d_1$ with her initial school $c_1$ is always accepted. With this modification, it is easy to check that the outcome of SPDA is $\{(s_1,c_1), (s_2,c_3), (s_3,c_2), (s_4,c_2)\}$. This matching satisfies individual rationality.
\qed
\end{example}

In some school districts, each student gets a priority at her neighborhood school, as in this example. In the absence of other types of priorities, neighborhood priority guarantees that SPDA satisfies individual rationality.


\subsection{Balanced Exchange}\label{sec:bal}
For interdistrict school choice, maintaining a balance of students incoming from and outgoing to other districts is important.
To formalize this idea, we say that a mechanism satisfies the \df{balanced-exchange} policy if the number of students that a
district gets from the other districts and the number of students that the district sends to the others are the same for
every district and for every profile of student preferences. Since district choice rules are acceptant and students
prefer every school to the outside option of being unmatched, every student is matched with a school under SPDA. Therefore,
for SPDA, this policy is equivalent to the requirement that the number of students assigned to a district must be
equal to the number of students from that district.

The balanced-exchange policy is important  because the funding that a district gets depends on the number of students it serves. Therefore, an interdistrict school choice program may not be sustainable if SPDA does not satisfy the balanced-exchange policy. For achieving this policy goal, the following condition on admissions rules proves important.

\begin{definition}
A matching $X$ is \df{rationed} if, for every district $d$, it does not assign strictly more students to
the district than the number of students whose home district is $d$. A district admissions rule is
\df{rationed} if it chooses a rationed matching from any matching that is feasible for students.
\end{definition}

When a district admissions rule is rationed, the district does not accept strictly more students than the
number of students from the district at any matching that is feasible for students. The result below establishes
that this property is exactly the condition to guarantee that SPDA satisfies the balanced-exchange policy.

\begin{theorem}\label{thm:balance}
SPDA satisfies the balanced-exchange policy if, and only if, each district's admissions rule is rationed.
\end{theorem}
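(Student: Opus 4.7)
The plan is to prove the biconditional by treating each direction separately. For the ``if'' direction, assume each district's admissions rule is rationed. The paper has already observed that under SPDA every student is matched to some school (an immediate consequence of each $Ch_d$ being acceptant together with every contract being strictly preferred to the outside option), so the output $X^{*}$ satisfies $\sum_d \abs{X^{*}_d} = \abs{\S} = \sum_d k_d$. Since the input to each district's choice function at every step of SPDA is feasible for students, the rationing assumption yields $\abs{X^{*}_d} \leq k_d$ for every $d$. Combining these gives $\abs{X^{*}_d} = k_d$ for every district, which is exactly the balanced-exchange requirement.

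For the ``only if'' direction, suppose $Ch_d$ is not rationed and pick a witnessing set $X$ feasible for students with $\abs{Ch_d(X)} > k_d$; set $Y := Ch_d(X)$, a set of contracts all involving district $d$ and feasible for students. I would construct a preference profile where, for each student $s$ with $(s,d,c) \in Y$, the top-ranked school in $P_s$ is $c$, while every other student has some school outside district $d$ at the top of her preference (remaining rankings are filled in arbitrarily, subject to the background assumption that every school beats the outside option). Under this profile, at step~1 of SPDA exactly the contracts in $Y$ are proposed to $d$, so $d$ tentatively accepts $Ch_d(Y)$.

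The next step is to show $Ch_d(Y) = Y$. Since the completion of $Ch_d$ is substitutable and $Y \subseteq X$ with $Y = Ch_d(X)$, every $y \in Y$ remains chosen when the menu shrinks to $Y$; because $Y$ is feasible for students, the completion agrees with $Ch_d$ on $Y$, so $Y \subseteq Ch_d(Y)$, and feasibility gives the reverse inclusion. Hence $d$ holds $\abs{Y} > k_d$ contracts at the end of step~1. To finish, I need to show that $\abs{H}$ does not shrink as SPDA progresses, where $H$ is $d$'s current tentative hold. If $N$ is the set of new proposals arriving at $d$ in the next step, the same substitutability argument applied to $H = Ch_d(H \cup R)$ (for $R$ the rejected contracts at the previous step) gives $Ch_d(H) = H$; then LAD applied to the inclusion $H \subseteq H \cup N$ (both feasible for students) yields $\abs{Ch_d(H \cup N)} \geq \abs{Ch_d(H)} = \abs{H}$. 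Iterating through all steps, $\abs{X^{*}_d} \geq \abs{Y} > k_d$. Since every student is matched, some other district must receive strictly fewer students than its $k_{d'}$ residents, violating balanced exchange.

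The ``if'' direction is essentially a counting argument, so the main obstacle lies in the monotonicity claim for the necessity direction: new proposals to $d$ could, \emph{a priori}, displace previously held contracts, but the combination of substitutability (which forces $Ch_d(H) = H$ for any SPDA-generated hold $H$) and LAD (which compares sizes under inclusion) rules this out at the level of set size. All of these properties are available from the paper's standing assumption that each $Ch_d$ has a completion satisfying substitutability and LAD, together with the fact that SPDA keeps the input to $Ch_d$ feasible for students at every step.
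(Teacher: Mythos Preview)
Your argument is correct. The ``if'' direction matches the paper's proof essentially verbatim: acceptance forces every student to be matched, rationing caps each $\abs{X^{*}_d}$ at $k_d$, and the counting argument pins down equality.

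For the ``only if'' direction you take a genuinely different route from the paper. The paper constructs an entire feasible matching $X'$ with $X'_d = Ch_d(X)$ and $\abs{X'_{d'}} \le k_{d'}$ for all $d' \neq d$, then has every student top-rank her $X'$-contract so that SPDA terminates at step~1: acceptance forces each $d' \neq d$ to keep $X'_{d'}$, and path independence of the completion gives $Ch_d(Ch_d(X)) = Ch_d(X)$. This makes the balanced-exchange failure immediate but requires the existence argument for $X'$ (which leans on the standing capacity assumption $\sum_{c:d(c)=d'} q_c \ge k_{d'}$). Your approach instead only pins down the first-step proposals to $d$ and then lets SPDA run, tracking the size of $d$'s tentative hold via LAD of the completion. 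This trades the construction of $X'$ for an inductive monotonicity step; it works because the input to $Ch_d$ at every step of SPDA is feasible for students, so the completion agrees with $Ch_d$ throughout and both substitutability (to get $Ch_d(H)=H$) and LAD (to compare $\abs{Ch_d(H \cup N)}$ with $\abs{H}$) are available. The paper's proof uses only path independence of the completion; yours additionally invokes LAD, which is a standing assumption anyway.
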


To obtain the intuition for this result, consider a student. Acceptance requires that a district can reject all
contracts of this student only when the number of students assigned to the district is at least as large as the
number of students from that district. As a result, all students are guaranteed to be matched. In addition,
when district admissions rules are rationed, a district cannot accept more students than
the number of students from the district. These two facts together imply that the number of students assigned
to a district in SPDA is equal to the number of students from that district. Therefore, SPDA satisfies the
balanced-exchange policy when each district's admissions rule is rationed. Conversely, when there exists one
district with an admissions rule that fails to be rationed, then we can construct student preferences such
that this district is matched with strictly more students than the number of students from the district
in SPDA, which means that the outcome does not satisfy the balanced-exchange policy.

Now we illustrate SPDA when district admissions rules are rationed.

\begin{example}\label{ex:balance}
Consider the problem in Example \ref{ex:simple}. Recall that in this problem, the SPDA outcome is $\{(s_1,c_2), (s_2,c_3), (s_3,c_1), (s_4,c_2)\}$. Since there are three students matched with district $d_1$ and there are only two students from that district, SPDA does not satisfy the balanced-exchange policy. This is consistent with Theorem \ref{thm:balance} because the admissions rule of district $d_1$ is not rationed. In particular, $Ch_{d_1}(\{(s_1,c_2), (s_3,c_1), (s_4,c_2)\})=\{(s_1,c_2), (s_3,c_1), (s_4,c_2)\}$, so district $d_1$ accepts strictly more students than the number of students from there given a matching that is feasible for students.

Suppose that we modify the admissions rule of district $d_1$ as follows. If the district chooses a contract associated with school $c_1$, then at most one contract associated with school $c_2$ is chosen. Therefore, the district never chooses more than two contracts, which is the number of students from there. Therefore, the updated admissions rule is rationed.\footnote{In Appendix \ref{ex:rationed}, we construct a class of rationed district admissions rules that includes this admissions rule as a special case. These admissions rules are feasible and acceptant, and they have completions that satisfy substitutability and LAD.} With this change, it is easy to check that the SPDA outcome is $\{(s_1,c_2), (s_2,c_3), (s_3,c_1), (s_4,c_3)\}$, which satisfies the balanced-exchange policy.
\qed
\end{example}

An implication of Theorems \ref{thm:welfimprove2} and \ref{thm:balance} is that SPDA is guaranteed to satisfy
individual rationality and the balanced-exchange policy if, and only if, each district's
admissions rule respects the initial matching and is rationed.

\subsection{Diversity}\label{sec:diversity}
\label{sec:diversity} The third policy goal we consider is that of
diversity. More specifically, we are interested in how
to ensure that there is enough diversity across districts so that the student composition in
terms of demographics does not vary too much from district to district.

We are mainly motivated by a program that is used in the state of Minnesota.
State law in Minnesota identifies racially isolated (relative to one of
their neighbors) school districts and requires them to be in the \emph{%
Achievement and Integration (AI) Program}. The goal is to increase the
racial parity between neighboring school districts. We first introduce a
diversity policy in the spirit of this program: Given a constant $
\alpha \in [0,1]$, we say that a mechanism satisfies the
\textbf{\textit{$\alpha$-diversity policy}} if for all preferences, districts $d$ and $d'$, and
type $t$, the difference between the ratios
of type-$t$ students in districts $d$ and $d'$ is not more than $\alpha $.
We interpret $\alpha$ to be the maximum ratio difference tolerated
under the diversity policy; for instance, $\alpha=0.2$ for Minnesota.

We study admissions rules such that
 SPDA satisfies the $\alpha$-diversity
policy when there is interdistrict school choice.
Since this policy restricts the number of students across districts, a natural
starting point is to have type-specific ceilings at the district level.
However, it turns out that type-specific ceilings at the district level may
yield district admissions rules resulting in no stable matchings
(see Theorem \ref{thm:imposswithschools} in Appendix \ref{sec:districtlevel}).

Since there is an incompatibility between district-level type-specific
ceilings and the existence of a stable matching, we
impose type-specific ceilings at the school level as follows.

\begin{definition}
A district admissions rule $Ch_{d}$ has a \textbf{
\textit{school-level type-specific ceiling}} of $q_{c}^{t}$ at school $c$ for type-$t$ students if the number of type-$t$
students admitted cannot exceed this ceiling. 
More formally, for any matching $X$ that is feasible for students,
\begin{center}
$|\{x\in Ch_{d}(X)|\tau (s(x))=t,c\left( x\right) =c\}|\leq q_{c}^{t}$.
\end{center}
\end{definition}

Note that district admissions rules typically violate acceptance  once school-level type-specific
ceilings are imposed. This is because a student can be rejected
from a set that is feasible for students even when the number of applicants to each school
is smaller than its capacity and the number of applicants to the district is
smaller than the number of students from that district. Given this, we define a
weaker version of the acceptance assumption as follows.

\begin{definition}
A district admissions rule $Ch_{d}$ that has school-level type-specific ceilings is \textbf{weakly acceptant} if, for
any contract $x$ associated with a type-$t$ student and district $d$ and
matching $X$ that is feasible for students, if $x$ is rejected from $X$,
then at $Ch_{d}(X)$,

\begin{itemize}
\item the number of students assigned to school $c(x)$ is equal to $q_{c(x)}$, or

\item the number of students assigned to district $d$ is at least $k_{d}$, or

\item the number of type-$t$ students assigned to school $c(x)$
is at least $q_{c}^{t}$.
\end{itemize}
\end{definition}

In other words, a student can be rejected from a set that is feasible
for students only when one of these three conditions is satisfied.

In SPDA, a student may be left unassigned due to school-level type-specific ceilings
even when district admissions rules are weakly acceptant. To make sure that
every student is matched, we make the following assumption.

\begin{definition}\label{def:accommodate}
A profile of district admissions rules $(Ch_d)_{d\in \mathcal{D}}$
\textbf{\textit{accommodates unmatched students}} if for any student $s$ and
feasible matching $X$ in which student $s$ is unmatched, there exists
$x=(s,d,c) \in \mathcal{X}$ such that $x \in Ch_d(X\cup \{x\})$.
\end{definition}

When a profile of district admissions rules accommodates unmatched students, for any
feasible matching in which a student is unmatched, there exists a school
such that the district associated with the school would admit that student
if she applies to that school. For example, when each admissions rule respects the initial
matching, the profile of district admissions rules accommodates unmatched students because an
unmatched student's application to her initial school is always accepted.
Lemma \ref{lem:match} in Appendix \ref{sec:proofs} shows that when a profile of district admissions rules
accommodates unmatched students, every student is matched to a school in SPDA.

In general, accommodation of unmatched students may be in conflict with
type-specific ceilings because there may not be enough space for a student
type when ceilings are small for this type. To avoid this, we assume that type-specific
ceilings are high enough so that $(Ch_d)_{d\in \mathcal{D}}$ accommodates unmatched
students.\footnote{For instance, ignoring integer problems, $q_{d}^{t}\geq k_{d}\frac{k^{t}}
{\sum_{t'\in \mathcal{T}}k^{t'}}$ for all $t,d$, would make
ceilings compatible with this property as it would be possible to assign the
same percentage of students of each type to all districts.}

Our assumptions on district admissions rules allow us to
control the distribution of the SPDA outcome. In particular, the SPDA
outcome satisfies the following   conditions: (i)  $\sum_t\xi _{d}^{t}(X)=k_{d}$ for all
$d\in \mathcal{D}$, (ii) $\sum_{c\in \mathcal{C}}\xi _{c}^{t}(X)=k^{t}$ for
all $t\in \mathcal{T}$, (iii) $\sum_{t\in \mathcal{T}}\xi _{c}^{t}(X)\leq
q_{c}$ for all $c\in \mathcal{C}$, and (iv) $\xi _{c}^{t}(X)\leq q_{c}^{t}$ for
all $t\in \mathcal{T}$ and $c\in \mathcal{C}$.
We call any matching $X$ satisfying these conditions \textbf{legitimate}.

In this framework, type-$t$ ceilings of schools in district $d$ may
result in a floor of another type $t'$ in this district in the sense
that the number of type-$t'$ students in the district should be at
least a certain number. Moreover, this may further impose a ceiling for type
$t'$ in another district $d'$. To see this, suppose, for
example, that (i) there are two districts $d$ and $d^{\prime }$, (ii) in
each district, there is one school and 100 students, (iii) 100 students are
of type $t$ and 100 students are of another type $t^{\prime }$, and (iv)
each school has a type-$t$ ceiling of 60 and a type-$t^{\prime }$ ceiling of
70. In a legitimate matching, each district needs to have at least 40 type-$%
t^{\prime }$ students (because, otherwise, the number of type-$t$ students
in that district would have to be more than 60). Moreover, this would mean
that there cannot be more than 60 type-$t^{\prime }$ students in any
district (because, otherwise, there would need to be more than 40 type-$%
t^{\prime }$ students in the other district, contradicting the floor we just
calculated). Hence, in this example, in effect we have a floor of 40 and a
(further restricted) ceiling of 60 for type-$t^{\prime }$ students for each
district.

Faced with this complication, our approach is to find the tightest lower and
upper bounds induced by these constraints. For this purpose, a certain
optimization problem proves useful. More specifically, consider a
linear-programming problem where for each type $t$ and district $d$, we seek
the minimum and maximum values of $\sum_{c: d(c)=d} y_{c}^{t}$
subject to (i) $\sum_{t'\in \mathcal{T}} \sum_{c:d(c)=d'} y_{c}^{t^{\prime }}=k_{d^{\prime }}$ for all $d^{\prime }\in \mathcal{D}$,
(ii) $\sum_{c\in \mathcal{C}}y_{c}^{t^{\prime }}=k^{t^{\prime }}$ for all $%
t^{\prime }\in \mathcal{T}$, (iii) $\sum_{t^{\prime }\in \mathcal{T}%
}y_{c}^{t^{\prime }}\leq q_{c}$ for all $c\in \mathcal{C}$, and (vi) $%
y_{c}^{t^{\prime }}\leq q_{c}^{t^{\prime }}$ for all $t^{\prime }\in
\mathcal{T}$ and $c\in \mathcal{C}$.
Let $\hat{p}_{d}^{t}$ and $\hat{q}_{d}^{t}$ be the solutions to the minimization and maximization
problems, respectively.

Both of these optimization problems belong to a special class of
linear-programming problems called a minimum-cost flow problem, and many
computationally efficient algorithms to solve it are known in the literature.\footnote{
To see that our problem is a minimum-cost flow problem, note that we can
take $(k_{d})_{d\in \mathcal{D}}$ as the \textquotedblleft
supply,\textquotedblright\ $(k^{t})_{t\in \mathcal{T}}$ as the
\textquotedblleft demand,\textquotedblright\  $(q_{d}^{t})_{d\in \mathcal{%
D},t\in \mathcal{T}}$ as the \textquotedblleft arc capacity
bounds,\textquotedblright\ and the objective functions for $\hat{p}_{d}^{t}$
and $\hat{q}_{d}^{t}$ to be $\min y_{d}^{t}$ and $\min -y_{d}^{t}$,
respectively. These problems have an \textquotedblleft integrality
property\textquotedblright\ so that if the supply, demand, and bounds are
integers, then all the solutions are integers as well. As already mentioned,
many algorithms have been proposed to solve different objective functions
for these problems. For instance, the capacity scaling algorithm of
\cite{edmonds1972} gives the solutions in polynomial time. For more information,
see Chapter 10 of \cite{ahuja2017network}. We are grateful to Fatma
Kilinc-Karzan for helpful discussions.} A straightforward but important
observation is that $\hat{p}_{d}^{t}$ (resp. $\hat{q}_{d}^{t}$) is exactly
the lowest (resp. highest) number of type-$t$ students who can be matched to
district $d$ in a legitimate matching (Lemma \ref{lem:lower} in Appendix \ref{sec:proofs}). Given
this observation, we call $\hat{p}_{d}^{t}$ the \textbf{\textit{implied floor}}
and $\hat{q}_{d}^{t}$ the \textbf{\textit{implied ceiling}}.

Now we are ready to state the main result of this section.

\begin{theorem} \label{thm:diversity}
Suppose that each district admissions rule has school-level type-specific
ceilings and is rationed and weakly acceptant. Moreover,
suppose that the district admissions rule profile accommodates unmatched students.
Then, SPDA satisfies the $\alpha$-diversity policy if, and only if,
$\hat{q}_{d}^{t}/k_{d}-\hat{p}_{d^{\prime }}^{t}/k_{d^{\prime }}\leq
\alpha$ for every type $t$ and districts $d,d^{\prime }$ such that $d\neq
d^{\prime }$.
\end{theorem}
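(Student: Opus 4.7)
The strategy is to bracket the SPDA outcome between the two bounds $\hat p_d^t$ and $\hat q_d^t$ that Lemma~\ref{lem:lower} identifies as the extreme type counts attainable in a legitimate matching. Under the stated assumptions, Lemma~\ref{lem:match} together with the rationed and school-level ceiling conditions forces the SPDA outcome $X$ to be legitimate, so $\hat p_d^t \le \xi_d^t(X) \le \hat q_d^t$ for every $t$ and every $d$. Dividing by the respective headcounts gives $\xi_d^t(X)/k_d - \xi_{d'}^t(X)/k_{d'} \le \hat q_d^t/k_d - \hat p_{d'}^t/k_{d'}$ for any distinct $d, d'$, which is at most $\alpha$ under the hypothesized condition; the symmetric inequality follows by swapping the roles of $d$ and $d'$. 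This yields the ``if'' direction.

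For the converse I argue by contrapositive: assume $\hat q_d^t/k_d - \hat p_{d'}^t/k_{d'} > \alpha$ for some type $t$ and some distinct districts $d, d'$, and construct a preference profile whose SPDA outcome breaks $\alpha$-diversity. The crux---and the main obstacle---is to exhibit a single legitimate matching $X^*$ that simultaneously attains $\xi_d^t(X^*) = \hat q_d^t$ and $\xi_{d'}^t(X^*) = \hat p_{d'}^t$. The LP defining these bounds is a minimum-cost flow problem on the bipartite school-type network, and is therefore totally unimodular with integer-valued vertices. Picking integer optima $A$ (with $\xi_d^t(A) = \hat q_d^t$) and $Y$ (with $\xi_{d'}^t(Y) = \hat p_{d'}^t$) guaranteed by Lemma~\ref{lem:lower}, the difference $A - Y$ is a circulation that decomposes into augmenting cycles, each of which can be shifted independently while preserving legitimacy. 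Shifting $Y$ along exactly those cycles that raise $\xi_d^t$ without raising $\xi_{d'}^t$ above $\hat p_{d'}^t$ produces an integer legitimate flow realizing both extremes, and a standard type-respecting bipartite assignment of students to schools then realizes it as the desired matching $X^*$.

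To close the argument, I would take preferences in which each student $s$ top-ranks the school she is assigned to in $X^*$, with the rest of her ordering arbitrary. In round~1 of SPDA every district $d$ receives exactly the contract set $X^*_d$, and I claim $Ch_d(X^*_d) = X^*_d$. Suppose instead some $x = (s, d, c) \in X^*_d$ is rejected. Then $Ch_d(X^*_d) \subseteq X^*_d \setminus \{x\}$, so the resulting school-$c$ count is at most $|X^*_c| - 1 \le q_c - 1 < q_c$, the district total is at most $k_d - 1 < k_d$, and the count of type-$\tau(s)$ students at $c$ is at most $q_c^{\tau(s)} - 1 < q_c^{\tau(s)}$; each strict inequality follows from legitimacy of $X^*$, and together they contradict all three alternatives permitted by weak acceptance. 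Hence no contract is rejected, SPDA terminates in one round with outcome $X^*$, and $\xi_d^t(X^*)/k_d - \xi_{d'}^t(X^*)/k_{d'} = \hat q_d^t/k_d - \hat p_{d'}^t/k_{d'} > \alpha$, violating the $\alpha$-diversity policy.
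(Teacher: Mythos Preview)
Your overall architecture matches the paper's: the ``if'' direction via Lemmas~\ref{lem:match} and~\ref{lem:lower} is correct and essentially identical, and for the ``only if'' direction you correctly reduce to (a) exhibiting a single legitimate matching $X^*$ with $\xi_d^t(X^*)=\hat q_d^t$ and $\xi_{d'}^t(X^*)=\hat p_{d'}^t$, and then (b) showing SPDA outputs $X^*$ under top-ranking preferences. Your argument for (b) via weak acceptance is in fact more explicit than the paper's one-line appeal.

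The gap is in step (a), which is precisely the content of Lemma~\ref{lem:suff} in the paper. Your conformal-cycle sketch asserts that after decomposing $A-Y$ into cycles you can select ``exactly those cycles that raise $\xi_d^t$ without raising $\xi_{d'}^t$'' and thereby hit both extremes. But nothing rules out a cycle that simultaneously increases $\xi_d^t$ and $\xi_{d'}^t$; if such cycles carry part of the total increase $\hat q_d^t-\xi_d^t(Y)$, then excluding them keeps $\xi_{d'}^t$ at the floor but leaves $\xi_d^t$ strictly below $\hat q_d^t$. You have not argued that the cycles with positive effect on $\xi_d^t$ and zero effect on $\xi_{d'}^t$ suffice to close the entire gap, and in a generic transportation polytope they need not.

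The paper's Lemma~\ref{lem:suff} handles exactly this obstruction, and its proof is in the same spirit as your idea but executed rigorously: it fixes $\hat X$ attaining $\hat q_d^t$, takes the legitimate matching $X$ in $\{\xi_{d'}^t=\hat p_{d'}^t\}$ that is $\ell_1$-closest to $\hat X$, and then builds an explicit augmenting sequence of type--school pairs (essentially constructing a conformal cycle by hand) whose execution strictly decreases the $\ell_1$ distance while keeping $\xi_{d'}^t$ at its floor, yielding a contradiction unless $\xi_d^t(X)=\hat q_d^t$. So your flow intuition is on the right track, but the cycle-selection step needs the kind of minimality/closest-point argument the paper supplies.
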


The proof of this theorem, given in Appendix \ref{sec:proofs}, is based on a
number of steps. First, as mentioned above, we note that $\hat{p}_{d}^{t}$
and $\hat{q}_{d}^{t}$ are the lower and upper bounds, respectively, of the
number
 of
type-$t$ students who can be matched  with district $d$ in any
legitimate matching. This observation immediately establishes the ``if'' part
of the theorem. Then, we further establish that the implied floors and
ceilings
can be achieved
simultaneously in the sense that, for any pair of districts $d$ and $%
d^{\prime }$ with $d\neq d^{\prime }$, there exists a legitimate matching
that assigns exactly $\hat{q}_{d}^{t}$ type-$t$ students to district $d$ and
exactly $\hat{p}_{d^{\prime }}^{t}$ type-$t$ students to district $d^{\prime
}$ (Lemma \ref{lem:suff}). In other words, we establish that the implied
ceiling and floor are achieved in two different districts, and they are
achieved at \textit{one} legitimate matching simultaneously. We complete the
proof of the theorem by constructing student preferences such that the outcome of SPDA
achieves these bounds. In Appendix \ref{subsec:diversityexample}, we provide
an example that illustrates Theorem \ref{thm:diversity}.
In Appendix \ref{ex:reserve}, we provide a fairly general class of district
admissions rules that satisfies our assumptions in this result.

The analysis in this section characterizes conditions under which different policy goals are achieved under SPDA. One of the facts worth mentioning in this context is that achieving multiple policies can be overly demanding. To see this point, we note that individual rationality and $\alpha$-diversity policy are often incompatible with one another. For example, consider a  problem such that each student's most preferred school is her initial school and a constant $\alpha$ such that the initial matching
does not satisfy the $\alpha$-diversity policy. Indeed, in this case, no mechanism can simultaneously satisfy
individual rationality and the $\alpha$-diversity policy because the initial matching is the
unique individually rational matching, but it fails the $\alpha$-diversity policy.

\section{Achieving Policy Goals with Efficient Outcomes}\label{sec:TTC}
In this section, we turn our focus to efficiency. More specifically, we study the existence of a mechanism that satisfies a given policy goal on the distribution of agents, constrained efficiency, strategy-proofness, and individual rationality. We first consider a policy goal with type-specific ceilings at the district level. In this setting, we establish an impossibility result.

\begin{theorem}\label{thm:eff_impos}
There exist a problem and ceilings $(q_d^t)_{t\in \T, d\in \D}$ such that the the initial matching $\tilde{X}$ satisfies
the policy goal $\Xi \equiv \{\xi | q_d^t \geq \xi_d^t \text{ for all } d \text{ and } t, q_c \geq \sum_t \xi_c^t \text{ for all } c \}$, while there exists no mechanism that satisfies the policy goal $\Xi$, constrained efficiency, individual rationality, and strategy-proofness.
\end{theorem}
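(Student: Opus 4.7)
The plan is to establish the impossibility via a small, explicitly constructed counterexample, in the spirit of classical incompatibility results between efficiency and strategy-proofness. The core idea is to build a problem in which the district-level type-specific ceilings make Pareto improvements ``chain together'' across districts: once one student of type $t$ leaves a district, the ceiling on the other type in the receiving district forces a compensating swap, so the set of constrained-efficient, individually rational matchings is driven by the whole preference profile in a delicate way that no strategy-proof mechanism can track.

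Concretely, I would take two districts $d_1,d_2$, each containing (at least) two schools with small capacities, two types $t_1,t_2$, and district-level ceilings $q_{d_1}^{t_1},q_{d_1}^{t_2},q_{d_2}^{t_1},q_{d_2}^{t_2}$ calibrated so that the unique legitimate ``type counts'' per district are those of the initial matching $\tilde X$. This makes the ceilings \emph{exactly binding} on $\tilde X$, and forces any matching satisfying the policy goal $\Xi$ to preserve, district by district, the multiset of student types. As a consequence, any Pareto improvement on $\tilde X$ must reshuffle students within the type-district cells, and any cross-district reassignment requires a same-type swap. I would then design student preferences so that the set of individually rational, constrained-efficient matchings reduces to two natural candidates, each arising from a distinct ``cycle'' of same-type exchanges among particular students.

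Next, I would exhibit two preference profiles $P_{\mathcal S}$ and $P_{\mathcal S}' = (P_s', P_{\mathcal S\setminus\{s\}})$ differing only in the report of one designated student $s$, and argue that any mechanism $\phi$ satisfying the policy goal $\Xi$, individual rationality, and constrained efficiency must select a specific matching at each profile. On $P_{\mathcal S}$, individual rationality together with constraint preservation and the absence of any further Pareto improvement pins down one cycle of swaps and hence a unique matching $X$; on $P_{\mathcal S}'$, the same reasoning pins down a different matching $X'$ that is a Pareto improvement along a different cycle, which is now unavailable under $P_{\mathcal S}$ precisely because $s$'s true preference blocks the corresponding swap by individual rationality. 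I would then verify that $\phi_s(P_{\mathcal S}') \mathrel{P_s} \phi_s(P_{\mathcal S})$, so $s$ has a profitable manipulation under $P_{\mathcal S}$, contradicting strategy-proofness.

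The delicate part is to calibrate the example so that both profiles admit \emph{only one} matching satisfying policy, IR, and constrained efficiency; otherwise, a mechanism could sidestep the manipulation by selecting a different constrained-efficient matching at one of the two profiles. To secure uniqueness, I would choose preferences so that the two potential Pareto-improving cycles from $\tilde X$ are mutually exclusive (they share students who must move in conflicting directions), and so that under each profile only one of the two cycles is IR-compatible. Checking that $\tilde X \in \Xi$ and that each candidate cycle produces a matching in $\Xi$ is immediate from the construction of the ceilings. The remaining verification that no larger Pareto improvement exists is a short case analysis on which students can possibly move, given that all cross-district moves must be same-type swaps and individual rationality rules out the blocked directions.
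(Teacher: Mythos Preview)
Your proposal contains an internal inconsistency that makes the argument fail as written. You plan to fix a single manipulating student $s$ and two profiles $P_{\mathcal S}$ (true) and $P'_{\mathcal S}=(P'_s,P_{-s})$, each with a \emph{unique} constrained-efficient, individually rational, policy-satisfying matching ($X$ at $P_{\mathcal S}$, $X'$ at $P'_{\mathcal S}$), and then conclude $X'_s \mathrel{P_s} X_s$. But uniqueness at the true profile rules out any profitable deviation. Since $X'$ is IR at $(P'_s,P_{-s})$, it is IR for every $s'\neq s$ at $P_{\mathcal S}$; and if $X'_s \mathrel{P_s} X_s \mathrel{R_s} \tilde X_s$, it is also IR for $s$ at $P_{\mathcal S}$. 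Then $X'$ is IR and policy-satisfying at $P_{\mathcal S}$, and following the Pareto-domination chain from $X'$ among policy-satisfying matchings yields a constrained-efficient $W$ with $W_s \mathrel{R_s} X'_s \mathrel{P_s} X_s$, so $W\neq X$---contradicting your claimed uniqueness at $P_{\mathcal S}$. Your specific mechanism for unavailability, that ``$s$'s true preference blocks the corresponding swap by individual rationality,'' clashes directly with the manipulation: if $X'$ violates $s$'s own IR under $P_s$, then $\tilde X_s \mathrel{P_s} X'_s$, hence $X_s \mathrel{R_s} \tilde X_s \mathrel{P_s} X'_s$ and the deviation is not profitable; if it is some other student's IR that fails, changing $s$'s report does nothing to repair it.

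The paper's argument has a different logical shape that avoids this trap: it builds one profile at which there are exactly \emph{two} constrained-efficient, IR, policy-satisfying matchings $X$ and $X'$, and then, for each possible output, exhibits a \emph{different} student who can misreport so that the other matching becomes the unique admissible outcome and is strictly preferred by that student. The concrete example uses two districts with three unit-capacity schools each, six students of \emph{three} types, and ceilings of one on only two of the three types in each district; this is what makes the policy set fail M-convexity and creates the two incompatible improvement cycles. Your two-type design with exactly binding ceilings forces every cross-district move to be a same-type swap, which not only runs into the logical problem above but is also structurally unpromising: constraints that decompose type by type into district-level balance conditions are close to the M-convex policies for which the paper's TTC mechanism already delivers all four desiderata.
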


We show this result using the following example.

\begin{example}\label{ex:notconvex}
Consider the following problem with districts $d_1$ and $d_2$. District $d_1$ has schools
$c_1$, $c_2$, and $c_3$ and district $d_2$ has schools $c_4$, $c_5$, and $c_6$. All schools
have a capacity of one. There are six students: students $s_1$ and $s_4$ have type $t_1$,
students $s_2$ and $s_5$ have type $t_2$, and students $s_3$ and $s_6$ have type $t_3$.
Both districts have a ceiling of one for types $t_1$ and $t_2$: $q^{t_1}_{d_1}=q^{t_2}_{d_1}=1$
and $q^{t_1}_{d_2}=q^{t_2}_{d_2}=1$. Initially, student $s_i$ is matched with school $c_i$, for $i=1,\ldots,6$,
so the initial matching satisfies the policy goal $\Xi$.
Student preferences are as follows:

\[
\begin{tabular}
[c]{llllll}%
$\underline{s_{1}}$ & $\underline{s_{2}}$ & $\underline{s_{3}}$ &
$\underline{s_{4}}$ & $\underline{s_{5}}$ & $\underline{s_{6}}$\\
$c_6$ & $c_6$ & $c_5$ & $c_{3}$ & $c_3$ & $c_1$\\
$c_1$ & $c_2$ & $c_4$ & $c_4$ & $c_{5}$ & $c_2$\\
\vdots & \vdots & $c_3$ &  \vdots & \vdots  & $c_6$\\
          &            & \vdots &             &            & \vdots
\end{tabular}
\]
\smallskip

\noindent
where the dots in the table mean that the corresponding parts of the
preferences are arbitrary.

In this example, there are two matchings that satisfy the policy goal $\Xi$, constrained efficiency, and individual rationality:
\begin{align*}
X & =\{(  s_{1},c_6)  ,(s_{2},c_2)
,(s_{3},c_{4}),(s_{4},c_{3}),(s_{5}
,c_{5})  ,(s_{6},c_1)\}, \text{ and} \\
X'  & =\{(s_{1},c_{1}),(s_{2}
,c_6),(s_{3},c_5),(s_{4},c_4)
,(s_{5},c_3),(s_{6},c_2)\}.
\end{align*}

If a mechanism satisfies the desired properties, then its outcome at the above student preference profile must be either matching $X$ or $X'$.

Consider the case where the mechanism produces matching $X$ at the above student preference profile. Suppose student $s_3$ misreports her preference by ranking $c_5$ first and $c_3$ second (while the ranking of other schools is arbitrary). Under the new report, the mechanism produces matching $X'$ because it is the only matching that satisfies the policy goal $\Xi$, constrained-efficiency, and individual rationality. Since student $s_3$ strictly prefers her school in $X'$ to her school in $X$, she has a profitable deviation.

Similarly, consider the case where the mechanism produces matching $X'$ at the above student preference profile. Suppose student $s_6$ misreports her preference by ranking $c_1$ first and $c_6$ second (while the ranking of the other schools is arbitrary). In this case, the mechanism produces matching $X$ because it is the only matching that satisfies the policy goal $\Xi$, constrained-efficiency, and individual rationality. Since student $s_6$ strictly prefers her school in $X$ to her school in $X'$, she has a profitable deviation.

In both cases, there exists a student who benefits from misreporting, so the desired conclusion follows.
\qed
\end{example}

This example also shows that there is no mechanism that satisfies the $\alpha$-diversity policy goal for $\alpha=0$ introduced in Section \ref{sec:diversity}, constrained efficiency, individual rationality, and strategy-proofness. Consequently, without any assumptions, a policy goal may not be implemented with the desirable properties. To establish a positive result, we consider distributional policy goals that satisfy the following notion of discrete convexity, which is studied in the mathematics and operations research literatures \citep{Murota:SIAM:2003}.


\begin{definition}
Let $\chi_{c,t}$ denote the distribution where there is one type-$t$ student at school $c$ and there are no other students. A set of distributions $\Xi$ is \df{M-convex} if whenever $\xi,\tilde{\xi} \in \Xi$ and $\xi_c^t>\tilde{\xi}_c^t$ for some school $c$  and type $t$ then there exist school $c'$ and type $t'$ with $\xi_{c'}^{t'}<\tilde{\xi}_{c'}^{t'}$ such that $\xi-\chi_{c,t}+\chi_{c',t'}\in \Xi$ and $\tilde{\xi}+\chi_{c,t}-\chi_{c',t'} \in \Xi$.\footnote{The letter M in the term M-convex set comes from the word \emph{matroid}, a closely related and well-studied concept in discrete mathematics.}
\end{definition}

To illustrate this concept, suppose that a set of distributions $\Xi$ is M-convex. Consider two distributions $\xi$ and $\tilde{\xi}$ in this set such that there are more type-$t$ students in school $c$ at $\xi$ than at $\tilde{\xi}$. Then there exist school $c'$ and type $t'$ such that there are more type-$t'$ students in school $c'$ at $\tilde{\xi}$ than $\xi$ with the following two properties. First, removing one type-$t$ student from school $c$ and adding one type-$t'$ student to school $c'$ in $\xi$ produces a distribution in $\Xi$. Second, removing one type-$t'$ student from school $c'$ and adding one type-$t$ student to school $c$ in $\tilde{\xi}$ gives a distribution in $\Xi$ (see Figure \ref{fig:mconvex}). Intuitively, from each of these two distributions we can move closer to the other distribution in an incremental manner within $\Xi$, a property analogous to the standard convexity notion but adapted to a discrete setting.  We illustrate this concept with the following example.

\begin{figure}[htb]
\includegraphics[scale=.7]{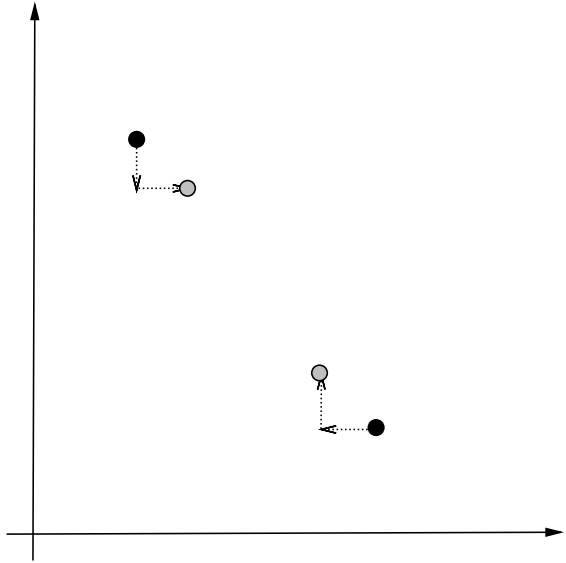}
    \caption{Illustration of M-convexity}
    \label{fig:mconvex}
\centering
\end{figure}

\begin{example}
Consider the problem and the set of distributions $\Xi$ 
defined in Example \ref{ex:notconvex}. We show that $\Xi$ is not M-convex. Recall matchings $X$ and $X'$ in that example. By construction, both $X$ and $X'$ satisfy the policy goal $\Xi$. Furthermore, $\xi_{c_3}^{t_1}(X)=1>0=\xi_{c_3}^{t_1}(X')$ because (i) school $c_3$ is matched with student $s_4$ at $X$, whose type is $t_1$, while (ii) school $c_3$ is matched with student $s_5$ at $X'$, whose type is $t_2 \neq t_1$. If the set of distributions $\Xi$ is M-convex, there exist a school $c$ and a type $t$ such that $\xi_c^t(X)<\xi_c^t(X')$ and $\xi(X)-\chi_{c_3,t_1}+\chi_{c,t}$ is in $\Xi$. Because each school's capacity is one, and at matching $X$ all schools have filled their capacities, this means that the only candidate for $(c,t)$ satisfying the above condition is such that $c=c_3$. But the only nonzero $\xi_{c_3}^t(X')$ is for $t=t_2$ (because $s_5$ is the unique student matched with $c_3$ at $X'$), and $\xi(X)-\chi_{c_3,t_1}+\chi_{c_3,t_2}$ does not satisfy the policy goal because district $d_1$'s ceiling for type $t_2$ is violated (note $\xi_{c_2}^{t_2}(X)=1$ because student $s_2$ is matched with $c_2$ at $X$.)

The above argument implies that $\Xi \cap \Xi^0$ is not M-convex either. To see this, note that both $\xi(X)$ and $\xi(X')$ are in  $\Xi \cap \Xi^0$ because all students are matched. Because we have shown that no distribution of the form $\xi(X)-\chi_{c_3,t_1}+\chi_{c,t}$ is in $\Xi$, by set inclusion relation $\Xi \cap \Xi^0 \subseteq \Xi$, there is no distribution of the form $\xi(X)-\chi_{c_3,t_1}+\chi_{c,t}$ in $\Xi \cap \Xi^0$ either.
\qed
\end{example}


Now we introduce a mechanism that achieves the desirable properties whenever the policy goal is M-convex. To do this, we first create a hypothetical matching problem. On one side of the market, there are school-type pairs $(c,t)$ where $c\in \C$ and $t\in \T$. On the other side, there are students from the original problem, $\S$. Given any student $s \in \S$ and a preference order $P_s$ of $s$ in the original problem, define preference order $\tilde P_s$ over school-type pairs in the hypothetical problem as follows: letting $t$ be the type of student $s$ and $c_0$ be her initial school in the original problem, $(s,c) \mathrel{P_s} (s,c') \iff (c,t) \mathrel{\tilde P_s} (c', t)$ for any $c,c' \in \C$, and $(c_0,t) \mathrel{\tilde P_s} (c,t')$ for any $c \in \C$ and
$t'\in \T$ such that $t' \neq t$. That is, $\tilde P_s$ is a preference order over school-type pairs that ranks the school-type pairs in which the type is $t$ in the same order as in $P_s$, while finding all school-type pairs specifying a different type as less preferred than the pair corresponding to her initial school. Furthermore, let $(c_0,t)$ be the initial school-type pair for $s$ in the hypothetical problem.

Next we define a priority ordering of students that school-type pairs use to rank students. For school-type pair $(c,t)$, students initially matched with $(c,t)$ have 
the highest priority, and then all other students have the second highest priority. This gives us two priority classes for students. Then, ties are broken according to a master priority list that every school-type pair uses.

We say that a type-$t$ student $s$ with  the initial school-type pair $(c,t)$ is \df{permissible} to school-type pair $(c',t')$ at matching $X$ if $\xi(X)+\chi_{c',t'}-\chi_{c,t}$ is in $\Xi$. Note that a type-$t$ student with initial school-type pair $(c,t)$ is always permissible to pair $(c,t)$ at matching $X$ whenever $\xi(X)$ is in $\Xi$.

The following is a generalization of Gale's top trading cycles mechanism \citep{shasca74}, building on its recent extension by \cite{suzuki17}.

\medskip

\paragraph{\textbf{Top Trading Cycles Algorithm}}
Consider a hypothetical problem.
\begin{description}
  \item[Step 1] Let $X^1 \equiv \tilde{X}$. Each school-type pair points to the permissible student at matching $X^1$ with the highest priority. If there exists no such student, remove the school-type pair from the market. Each student $s$ points to the highest ranked remaining school-type pair with respect to $\tilde{P}_s$. Identify and execute cycles. Any student who is part of an executed cycle is assigned the school-type pair she is pointing to and is removed from the market.
  \item[Step $\mathbf{n}$ ($\mathbf{n>1}$)] Let $X^n$ denote the matching consisting of assignments in the previous steps and initial assignments for all students who have not been processed in the previous steps. Each remaining school-type pair points to the unassigned student who is permissible at matching $X^n$ with the highest priority. If there exists no such student, remove the school-type pair from the market. Each unassigned student $s$ points to the highest ranked remaining school-type pair with respect to $\tilde{P}_s$. Identify and execute cycles. Any student who is part of an executed cycle is assigned  the school-type pair she is pointing to and is removed from the market.
\end{description}

This algorithm terminates in the first step such that no student remains to be
processed. The outcome is defined as the matching induced by the outcome of the
hypothetical problem at this step. The top trading cycles mechanism (TTC) takes a profile of student preferences
as input and produces the outcome of this algorithm at the reported student preference profile. Note that the
definition of permissibility and, hence, the definition of TTC, depend on the policy goal. Nevertheless, we do
not explicitly state the policy goal under consideration when it is clear from the context.

The main result of this section is as follows.
\begin{theorem}\label{thm:ttc}
Suppose that the initial matching satisfies the policy goal $\Xi$. If $\Xi \cap \Xi^0$ is M-convex, then TTC satisfies the policy goal $\Xi$, constrained efficiency, individual rationality, and strategy-proofness.
\end{theorem}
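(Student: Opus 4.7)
The plan is to verify the four desired properties of TTC via two main invariants and the M-convexity assumption at the crucial step.

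First, I would establish the following invariant by induction on the step $n$: at the start of step $n$, the working distribution $\xi(X^n)$ lies in $\Xi\cap\Xi^0$, and for every unassigned student $s$ with initial pair $(c_0,t)$, the pair $(c_0,t)$ is still in the market. The second claim holds because $s$ is always permissible to $(c_0,t)$---moving her from $(c_0,t)$ to itself leaves the distribution unchanged, hence in $\Xi$ by the first claim---and $s$ has the highest priority at $(c_0,t)$ by construction, so $(c_0,t)$ retains at least one permissible student so long as $s$ is unprocessed. This invariant implies that TTC is well-defined, since the pointer graph always contains a cycle and every student is eventually processed. It also yields individual rationality: when $s$ is processed, her initial pair $(c_0,t)$ is still in the market, so she points to something at least as $\tilde{P}_s$-preferred as $(c_0,t)$, which under the construction of $\tilde{P}_s$ translates to a school weakly $P_s$-preferred to her initial one.

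The harder first clause of the invariant---that executing each cycle preserves $\Xi$---is where M-convexity enters. Consider a cycle $s_1\to o_1\to s_2\to o_2\to\cdots\to s_k\to o_k\to s_1$. After execution, $s_i$ moves from her current position $\pi(s_i)$ in $X^n$ to $o_i$, producing a distribution change $\Delta:=\sum_{i=1}^k(\chi_{o_i}-\chi_{\pi(s_i)})$. The pointer from $o_i$ to $s_{i+1}$ is permissible, so each single update $\xi(X^n)+\chi_{o_i}-\chi_{\pi(s_{i+1})}$ lies in $\Xi$, and after reindexing the sum of these single updates is exactly $\xi(X^n)+\Delta$. The task is to upgrade ``each single exchange lies in $\Xi$'' to ``the joint exchange lies in $\Xi$''. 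I would do this by applying the M-convex exchange axiom inductively on cycle length: at each step, use the axiom to match one outflow coordinate with one inflow coordinate so that the partial swap stays inside $\Xi\cap\Xi^0$, then reduce to a strictly shorter cycle with the same structure. This chaining argument is the principal technical obstacle.

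For constrained efficiency, suppose toward contradiction that some $Y\in\Xi\cap\Xi^0$ Pareto dominates the TTC outcome $X$ with $Y\ne X$. Let $s$ be the first student processed by TTC whose $Y$-pair $(c^*,t)$ is strictly $\tilde{P}_s$-preferred to her $X$-pair, and let $n^*$ be her processing step. Since $s$ pointed to her top remaining pair at step $n^*$, either $(c^*,t)$ had already been removed, or $(c^*,t)$ was still present but pointed at a strictly higher-priority student $s'$. In the latter case, the usual TTC cycle-trace on priorities and preferences reaches a student processed strictly before $s$ whose TTC assignment was strictly more preferred than her $Y$-assignment, contradicting either the minimality of $s$ or Pareto domination. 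In the former case, $(c^*,t)$ was removed because no remaining student was permissible for it at that moment; I would then use M-convexity to exhibit a coordinate along which $\xi(Y)-\xi(X)$ must violate $\Xi$, contradicting $\xi(Y)\in\Xi$. Finally, strategy-proofness follows from the standard TTC argument: the set of school-type pairs available to $s$ when she is first pointed to depends only on the other students' reports and the sequence of cycles before $s$ is involved, neither of which is affected by $s$'s reported preference, and truthful reporting makes her point to her $\tilde{P}_s$-top remaining pair in every cycle she participates in; permissibility is determined by $\xi(X^n)$, which depends only on the actions of already-processed students, so the usual independence argument is unaffected.
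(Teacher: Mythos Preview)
Your approach is a direct, from-first-principles verification of each property, whereas the paper instead reduces to the hypothetical matching market between students and school--type pairs and invokes \citet{suzuki17}, who prove exactly these four properties for their TTC-M mechanism under an M-convex constraint. The translation back to the original problem is then almost mechanical. Your route is thus essentially a re-derivation of Suzuki et al.'s theorem in situ; that is a legitimate strategy, but the two hardest steps in your sketch are precisely the ones that constitute the substance of their result, and your outline does not yet close them.

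The first gap is the cycle-preservation step. You have, for each $i$, that the single swap $\xi(X^n)+\chi_{o_i}-\chi_{\pi(s_{i+1})}$ lies in $\Xi$, and you want the simultaneous swap $\xi(X^n)+\sum_i(\chi_{o_i}-\chi_{\pi(s_i)})$ to lie there as well. M-convexity does \emph{not} deliver this directly: the exchange axiom compares two points already known to lie in the set, but here the target point is the one whose membership is in question. Your proposed ``match one outflow with one inflow, reduce to a shorter cycle'' does not work as stated, because after one swap the base point changes, and the remaining permissibilities were certified only relative to the old base $\xi(X^n)$. Making this rigorous requires a careful inductive use of \emph{both} halves of the symmetric exchange property, and it is the core lemma of \citet{suzuki17}.

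The second gap is constrained efficiency. Your case (b) cannot occur: if $(c^*,t)$ is still present at step $n^*$ then $s$ points to it (or to something even better) and obtains it, contradicting the choice of $s$. So only case (a) matters, where $(c^*,t)$ was removed for lack of a permissible student. Your one-line plan---``use M-convexity to exhibit a coordinate along which $\xi(Y)-\xi(X)$ must violate $\Xi$''---is not an argument: both $\xi(Y)$ and $\xi(X)$ lie in $\Xi\cap\Xi^0$ by hypothesis, so no coordinate of their difference violates anything. What is actually needed is to apply M-convexity between $\xi(Y)$ and the \emph{intermediate} distribution $\xi(X^{n^*})$ to produce, among the still-unprocessed students, one who is permissible at $(c^*,t)$ at step $n^*$, contradicting its removal. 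This again is where the Suzuki et al.\ machinery enters. Your individual-rationality and strategy-proofness arguments, by contrast, are sound.
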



The assumption that the initial matching satisfies the policy goal is necessary for the result:
Consider student preferences such that each student's highest-ranked school is her initial
school. Then the initial matching is the unique individually rational matching.
Therefore, if there exists a mechanism with the desired properties, then the outcome at this
preference profile has to be the initial matching. Hence, we need the assumption that the
initial matching satisfies the policy goal to have such a mechanism.

To see one of the implications of this theorem, suppose that the policy goal $\Xi$ is such
that no school is matched with more students than its capacity. In that case, if $\Xi$ is M-convex,
then TTC satisfies the desirable properties.

\begin{corollary}\label{corollary:TTC}
Suppose that the policy goal $\Xi$ is such that for every $\xi \in \Xi$ and $c\in \C$, $\sum_{t} \xi^t_c \leq q_c$.  Furthermore, suppose that the initial matching satisfies $\Xi$. If $\Xi$ is M-convex, then TTC satisfies the policy goal $\Xi$, constrained efficiency, individual rationality, and strategy-proofness.
\end{corollary}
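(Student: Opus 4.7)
The plan is to deduce the corollary directly from Theorem~\ref{thm:ttc} by verifying that, under the corollary's hypotheses, $\Xi \cap \Xi^0 = \Xi$. Once this set equality is established, M-convexity of $\Xi$ transfers immediately to M-convexity of $\Xi \cap \Xi^0$, and Theorem~\ref{thm:ttc} applies verbatim.

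The main step will be to show that any two distributions in an M-convex set share the same total coordinate sum. Given $\xi, \tilde\xi \in \Xi$ with $\xi \neq \tilde\xi$, I would argue that some coordinate $(c,t)$ has $\xi_c^t > \tilde\xi_c^t$: if not, then $\xi \leq \tilde\xi$ componentwise with strict inequality somewhere, but M-convexity applied to $\tilde\xi,\xi$ at any coordinate where $\tilde\xi > \xi$ would then produce a coordinate where $\tilde\xi < \xi$, contradicting $\xi \leq \tilde \xi$. With such $(c,t)$ in hand, M-convexity supplies $(c',t')$ with $\xi_{c'}^{t'} < \tilde\xi_{c'}^{t'}$ and $\xi' \equiv \xi - \chi_{c,t} + \chi_{c',t'} \in \Xi$. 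Since $\xi'$ has the same total sum as $\xi$ and $\|\xi'-\tilde\xi\|_1 = \|\xi-\tilde\xi\|_1 - 2$, iterating this exchange drives $\xi'$ to $\tilde\xi$, yielding $\sum_{c,t}\xi_c^t = \sum_{c,t}\tilde\xi_c^t$.

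With this lemma in hand, the rest is bookkeeping. The initial matching $\tilde X$ assigns each student exactly one contract, so $\sum_{c,t}\xi_c^t(\tilde X) = |\S| = \sum_d k_d$; since $\xi(\tilde X) \in \Xi$ by hypothesis, every $\xi \in \Xi$ satisfies $\sum_{c,t}\xi_c^t = \sum_d k_d$. Together with the corollary's capacity hypothesis $\sum_t \xi_c^t \leq q_c$ for every $\xi \in \Xi$ and $c \in \C$, this gives $\Xi \subseteq \Xi^0$, hence $\Xi \cap \Xi^0 = \Xi$. M-convexity of $\Xi$ then delivers M-convexity of $\Xi \cap \Xi^0$, and the corollary follows from Theorem~\ref{thm:ttc}.

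The principal obstacle is the constant-sum lemma for M-convex sets, which rests entirely on iterated use of the exchange axiom. Care must be taken to show both that a suitable ``positive'' coordinate can always be located when $\xi \neq \tilde\xi$ and that the iteration terminates at $\xi' = \tilde\xi$; once these points are handled, the remainder of the argument is purely set-theoretic.
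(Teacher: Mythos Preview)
Your proof is correct, but it takes a genuinely different route from the paper's.

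The paper does \emph{not} show $\Xi \subseteq \Xi^0$; instead it directly verifies the exchange axiom for $\Xi \cap \Xi^0$. Given $\xi,\tilde\xi \in \Xi \cap \Xi^0$ with $\xi_c^t > \tilde\xi_c^t$, it applies M-convexity of $\Xi$ to obtain $(c',t')$ with $\xi_{c'}^{t'} < \tilde\xi_{c'}^{t'}$ and exchanged points $\hat\xi, \bar\xi \in \Xi$, and then checks by hand that $\hat\xi,\bar\xi \in \Xi^0$ (the total sum is unchanged by the exchange, and each coordinate of $\hat\xi$ is bounded by the corresponding coordinate of $\xi$ or $\tilde\xi$, hence school capacities are respected). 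No iteration or limit argument is needed.

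Your approach instead proves the stronger structural fact $\Xi \cap \Xi^0 = \Xi$, by first establishing the standard lemma that an M-convex set lies in a single constant-sum hyperplane. This is more informative---it tells you that the intersection with $\Xi^0$ is redundant under the stated hypotheses---but it requires the $\ell_1$-distance induction you sketched, whereas the paper gets away with a single application of the exchange axiom. Both arguments are short; the paper's is slightly more self-contained, while yours makes explicit why the corollary's hypotheses are exactly what is needed for the reduction to Theorem~\ref{thm:ttc} to be trivial.
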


In the proof of this corollary, we show that when $\Xi$ is M-convex and no
distribution in $\Xi$ assigns more students to a school than its capacity,
then $\Xi \cap \Xi^0$ is also M-convex. Therefore, the corollary follows
directly from Theorem \ref{thm:ttc}.

Next we illustrate TTC with an example.

\begin{example}\label{ex:TTC}
Consider a problem with two school districts, $d_1$ and $d_2$. District $d_1$ has
school $c_{1}$ with capacity three and school $c_{2}$ with capacity two. District $d_2$ has school $c_{3}$
with capacity two and school $c_{4}$ with capacity one. There are seven
students: students $s_{1}$, $s_{2}$, $s_{3}$, and $s_{4}$ are from district $
d_{1}$ and have type $t_{1}$, whereas students $s_{5}$, $s_{6}$, and $s_{7}$
are from district $d_{2}$ and have type $t_{2}$. The initial matching is $\{(s_1,c_1),(s_2,c_1),(s_3,c_2),(s_4,c_2),(s_5,c_3),(s_6,c_3),(s_7,c_4)\}$.
Student preferences are as follows.

\[
\begin{tabular}
[c]{lllllll}
$\underline{P_{s_{1}}}$ & $\underline{P_{s_{2}}}$ & $\underline{P_{s_{3}}}$ &
$\underline{P_{s_{4}}}$ & $\underline{P_{s_{5}}}$ & $\underline{P_{s_{6}}}$ &
$\underline{P_{s_{7}}}$\\
$c_{2}$ & $c_{3}$ & $c_{4}$ & $c_{2}$ & $c_{1}$ & $c_{4}$ & $c_{2}$\\
$c_3 $ & $c_{1}$ & $c_{2}$ & $c_{3}$ & $c_{2}$ & $c_1$ & $c_{3}$\\
 $\vdots$ & $\vdots$ & $\vdots$ & $c_{1}$ & $c_{3}$ & $c_3$ & $c_{1}$\\
&  &  & $c_4$ & $c_4$ & $c_2$ & $c_4$%
\end{tabular}
\]
\smallskip


In addition to the school capacities, there is only one additional constraint that school $c_1$ cannot have more than one type-$t_2$ student. As we show in the proof of Corollary \ref{corollary:convexdiv}, the set of distributions that satisfy this policy goal and the requirement that every student is matched is an M-convex set. Therefore, by Theorem \ref{thm:ttc}, TTC satisfies constrained efficiency, individual rationality, strategy-proofness, and the policy goal.

To run TTC, we use a master priority list. Suppose that the master priority list ranks students as follows: $s_1 \succ s_2 \succ s_3 \succ s_4 \succ s_5 \succ s_6 \succ s_7$.

At Step $1$, there are eight school-type pairs. Consider $(c_1,t_1)$. Initially, students $s_1$ and $s_2$ are matched with it, so they are both permissible to this pair. We use the master priority list to rank them, so $s_1$ gets the highest priority at $(c_1,t_1)$. Therefore, $(c_1,t_1)$ points to $s_1$. Now consider $(c_1,t_2)$. Initially, it does not have any students because there is no type-$t_2$ student assigned to $c_1$ in the original problem. Furthermore, $s_1$ is permissible to $(c_1,t_2)$ because she can be removed from $(c_1,t_1)$ and a type-$t_2$ student can be assigned to $(c_1,t_2)$ without violating the school capacities or the policy goal. Therefore, $(c_1,t_2)$ points to $s_1$ as well, who gets a higher priority than the other permissible students because of the master priority list. The rest of the pairs also point to the highest-priority permissible students. Each student points to the highest ranked school-type pair of the same type as shown in Figure \ref{fig:TTC}A. There is only one cycle: $s_7 \rightarrow (c_2,t_2) \rightarrow s_3 \rightarrow (c_4,t_1) \rightarrow s_7$. Therefore, $s_7$ is matched with $(c_2,t_2)$ and $s_3$ is matched with $(c_4,t_1)$.

\begin{figure}[htb]
\centering
  \subfloat[Step 1 of TTC]{
    \includegraphics[width=.45\textwidth]{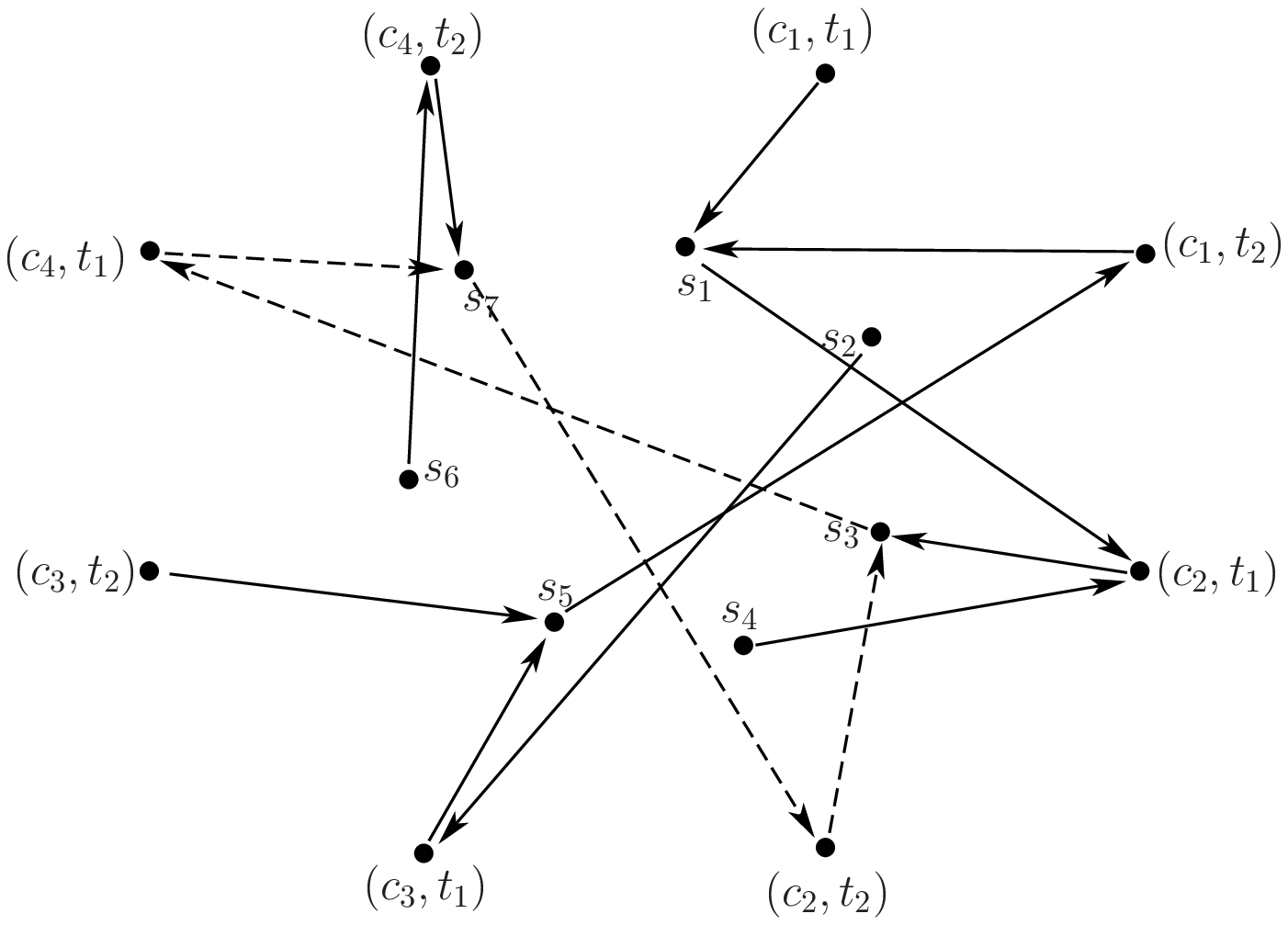}}\hfill
  \subfloat[Step 2 of TTC]{
    \includegraphics[width=.45\textwidth]{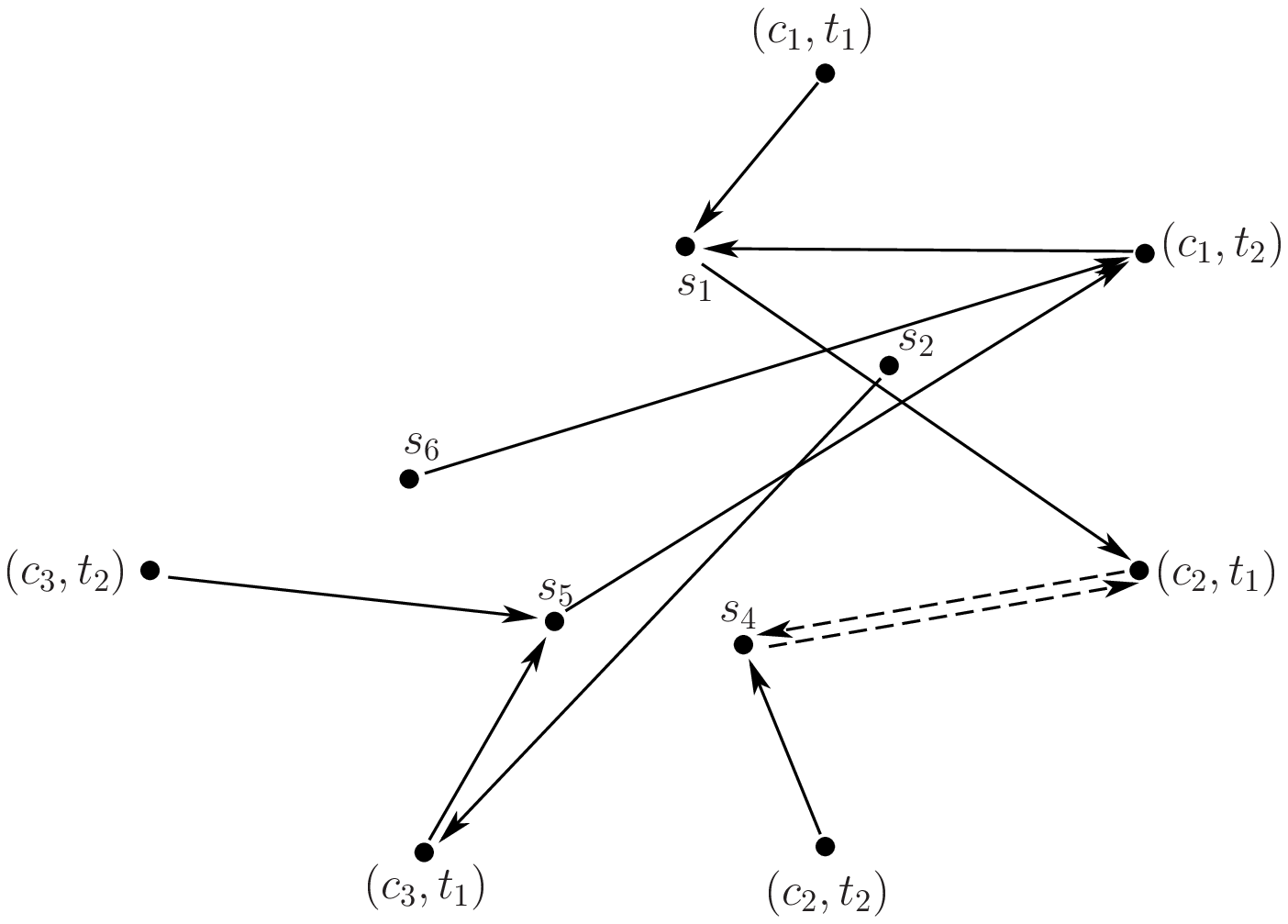}}\\
  \subfloat[Step 3 of TTC]{
    \includegraphics[width=.45\textwidth]{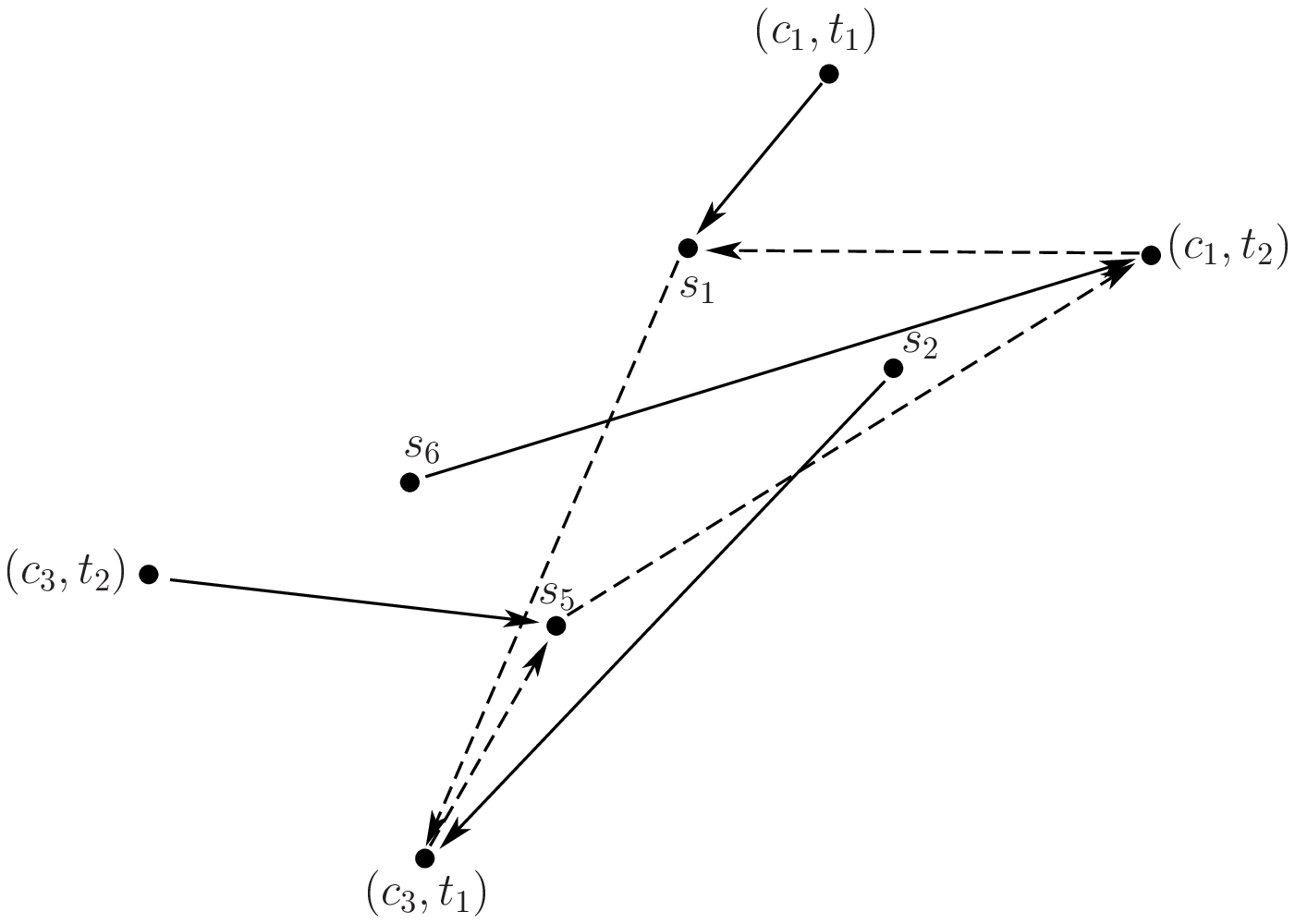}}\hfill
  \subfloat[Step 4 of TTC]{
    \includegraphics[width=.35\textwidth]{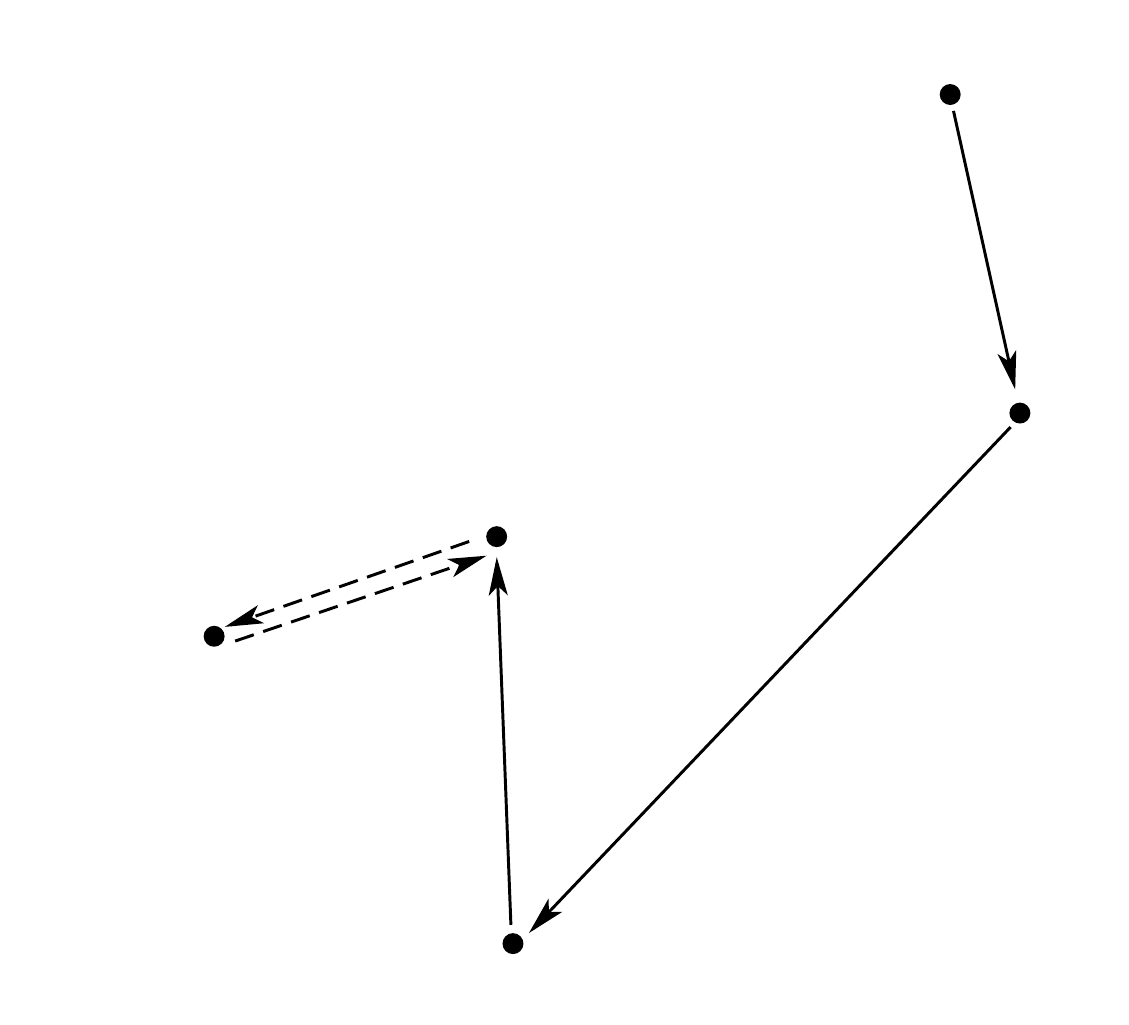}}
  \caption{The first four steps of TTC. In each step, there is only one cycle, which is represented by the dashed lines.}
  \label{fig:TTC}
\end{figure}

At Step $2$, there are six remaining school-type pairs: There are no permissible students for $(c_4,t_1)$ and $(c_4,t_2)$ because $c_4$ has a capacity of one and it is already assigned to $s_3$. Each remaining school-type pair points to the highest-ranked remaining permissible student. Each student points to the highest-ranked remaining school-type pair (see Figure \ref{fig:TTC}B). There is only one cycle: $s_4 \rightarrow (c_2,t_1) \rightarrow s_4$. Hence, $s_4$ is assigned to $(c_2,t_1)$.

The algorithm ends in five steps. Steps 3 and 4 are also shown in Figure \ref{fig:TTC}. In  Step 5, $s_2$ points to $(c_1,t_1)$, which points back to the student. The outcome is \[
\{(s_1,c_3),(s_2,c_1),(s_{3},c_{4}),(s_{4},c_{2}),(s_{5},c_{1}),(s_{6},c_{3}),(s_{7},c_{2})\}.
\]

It can be easily seen that the distribution associated with this matching satisfies the policy goal because no school has more students than its capacity and $c_1$ has only one type-$t_2$ student.
\qed
\end{example}

Sometimes it may be more convenient to describe a policy goal using a real-valued function rather
than a set of distributions. The interpretation is that the policy function measures how satisfactory the distribution
is in terms of the policy goal. To formalize this alternative approach let $f: \mathbb Z_+^{|\C|\times |\T|}
\rightarrow \mathbb{R}$ be a function on distributions such that $f(\xi) \geq f(\xi')$ means
that distribution $\xi$ satisfies the policy at least as well as distribution $\xi'$. Let $\lambda \in \mathbb{R}$ be
a constant. Consider the following $\boldsymbol{(f,\lambda)-}$\textbf{policy}:
$\Xi(f,\lambda) \equiv \{\xi \in  \mathbb Z_+^{|\C|\times |\T|} | f(\xi)\geq \lambda\}$.
Note that the initial matching $\tilde{X}$ satisfies the $(f,\lambda)$-policy if, and only if, $f(\xi(\tilde{X})) \geq \lambda$.

We introduce the following condition on functions, which plays a crucial role in the M-convexity of the
$(f,\lambda)$-policy.

\begin{definition}
A function $f$ is \textbf{pseudo M-concave}, if for every distinct $\xi,\tilde{\xi} \in \Xi_0$, there exist $(c,t)$
and $(c',t')$ with $\xi_c^t>\tilde{\xi}_c^t$ and $\xi_{c'}^{t'}<\tilde{\xi}_{c'}^{t'}$ such that
  \begin{equation*}
  \min \{f(\xi-\chi_{c,t}+\chi_{c',t'}), f(\tilde{\xi}+\chi_{c,t}-\chi_{c',t'})\} \geq \min \{f(\xi),f(\tilde{\xi})\}.
  \end{equation*}
\end{definition}

This is a notion of concavity for functions on a discrete domain. Lemma \ref{lem:f-diversity} shows
that pseudo M-concavity characterizes when upper contour sets are M-convex. It is stronger
than quasi M-concavity but not logically related to the M-concavity studied in the discrete
mathematics literature \citep{Murota:SIAM:2003}.

\begin{lemma}\label{lem:f-diversity}
$\Xi(f,\lambda) \cap \Xi^0$ is M-convex for every $\lambda$ if, and only if, $f$ is pseudo M-concave.
\end{lemma}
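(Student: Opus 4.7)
The plan is to prove the two directions separately, with the ``only if'' direction amounting to a short unpacking of definitions and the ``if'' direction requiring induction on the $\ell_1$-distance $\|\xi-\tilde{\xi}\|_1$ between two elements of an upper contour set.

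For the ``only if'' direction, suppose $\Xi(f,\lambda)\cap\Xi^0$ is M-convex for every $\lambda$, and fix distinct $\xi,\tilde{\xi}\in\Xi^0$. Setting $\lambda:=\min\{f(\xi),f(\tilde{\xi})\}$ places both in $\Xi(f,\lambda)\cap\Xi^0$. Since the two distributions share total mass $\sum_d k_d$ yet differ, some $(c,t)$ satisfies $\xi_c^t>\tilde{\xi}_c^t$. Applying M-convexity at this $(c,t)$ yields a coordinate $(c',t')$ with $\xi_{c'}^{t'}<\tilde{\xi}_{c'}^{t'}$ such that $\xi-\chi_{c,t}+\chi_{c',t'}$ and $\tilde{\xi}+\chi_{c,t}-\chi_{c',t'}$ both lie in $\Xi(f,\lambda)\cap\Xi^0$. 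In particular their $f$-values are at least $\lambda=\min\{f(\xi),f(\tilde{\xi})\}$, which is exactly the pseudo M-concavity inequality.

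For the ``if'' direction, fix $\lambda$ and take $\xi,\tilde{\xi}\in\Xi(f,\lambda)\cap\Xi^0$ with $\xi\neq\tilde{\xi}$, together with $(c_0,t_0)$ satisfying $\xi_{c_0}^{t_0}>\tilde{\xi}_{c_0}^{t_0}$. I would induct on $k:=\|\xi-\tilde{\xi}\|_1/2$. The base case $k=1$ is immediate: $\xi$ and $\tilde{\xi}$ differ only at $(c_0,t_0)$ and at one coordinate $(c',t')$ with $\xi_{c'}^{t'}<\tilde{\xi}_{c'}^{t'}$, and the two required exchanges merely swap $\xi$ and $\tilde{\xi}$, both already in $\Xi(f,\lambda)\cap\Xi^0$. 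For $k>1$, pseudo M-concavity furnishes some $(c_1,t_1)$ and $(c'_1,t'_1)$ for which $\xi^*:=\xi-\chi_{c_1,t_1}+\chi_{c'_1,t'_1}$ and $\tilde{\xi}^*:=\tilde{\xi}+\chi_{c_1,t_1}-\chi_{c'_1,t'_1}$ have $f$-values at least $\min\{f(\xi),f(\tilde{\xi})\}\geq\lambda$. If $(c_1,t_1)=(c_0,t_0)$, then $(c',t')=(c'_1,t'_1)$ works directly. Otherwise $(c_0,t_0)\neq(c'_1,t'_1)$ as well (since $(c'_1,t'_1)$ lies in the negative part of $\xi-\tilde{\xi}$), and the pair $(\xi^*,\tilde{\xi})$ has $\ell_1$-distance strictly smaller than $(\xi,\tilde{\xi})$ while preserving the strict inequality at $(c_0,t_0)$; the inductive hypothesis then produces a candidate $(c',t')$ for this smaller instance.

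The hard part will be reconciling the exchange extracted from the inductive hypothesis, which concerns $\xi^*-\chi_{c_0,t_0}+\chi_{c',t'}$, with the desired statement involving $\xi-\chi_{c_0,t_0}+\chi_{c',t'}$. I expect to address this by applying the induction to both $(\xi^*,\tilde{\xi})$ and $(\xi,\tilde{\xi}^*)$ and synthesizing the two exchange partners to extract a common valid $(c',t')$ for the original pair; alternatively, one can iterate pseudo M-concavity to build a path from $\xi$ toward $\tilde{\xi}$ inside $\Xi(f,\lambda)\cap\Xi^0$ and select the first step that decrements $(c_0,t_0)$. Throughout, one must also verify that every exchange preserves the capacity inequalities defining $\Xi^0$, since pseudo M-concavity per se constrains only $f$-values and not $\Xi^0$-feasibility.
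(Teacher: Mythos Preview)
Your ``only if'' direction is correct and essentially identical to the paper's argument.

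Your ``if'' direction has a genuine gap at exactly the place you flag. The paper does not attempt your induction at all; instead, it verifies only the \emph{weak} symmetric exchange property --- that for any distinct $\xi,\tilde\xi$ in the upper contour set there exist \emph{some} $(c,t)$ and $(c',t')$ whose simultaneous swap keeps both resulting points in the set --- and then invokes Theorem~4.3 of Murota's \emph{Discrete Convex Analysis} (2003), which asserts that this weak axiom is already equivalent to M-convexity. Thus the passage from ``some $(c,t)$'' to ``the prescribed $(c_0,t_0)$'' is handled by a black-box citation, and pseudo M-concavity delivers the weak axiom directly.

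Your induction on $\|\xi-\tilde\xi\|_1$ is in effect an attempt to reprove Murota's equivalence from scratch, and your Case~B is precisely where the difficulty lives: the inductive hypothesis applied to $(\xi^*,\tilde\xi)$ yields a $(c',t')$ certifying that $\xi^*-\chi_{c_0,t_0}+\chi_{c',t'}$ lies in the set, but you need this for $\xi$, not $\xi^*$. Neither of your proposed remedies is a proof as it stands. There is no obvious reason the exchange partners extracted from $(\xi^*,\tilde\xi)$ and from $(\xi,\tilde\xi^*)$ should coincide, and the path argument must additionally show that the \emph{simultaneous} swap on the $\tilde\xi$ side lands in the set at the very step where $(c_0,t_0)$ is first decremented. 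These details can be supplied --- that is the content of Murota's theorem --- but your sketch does not supply them. The shortest repair is to cite the weak--strong equivalence, as the paper does. Your closing concern about $\Xi^0$-feasibility of the swapped points is legitimate and worth flagging; the paper disposes of it in a single assertion, so in your write-up you should either verify it carefully or state the assumption explicitly.
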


Therefore, we get the following result:

\begin{theorem}\label{thm:ttcf}
If $f$ is pseudo M-concave and $\lambda$ is such that $f(\xi(\tilde{X}))\geq \lambda$, then TTC satisfies
the $(f,\lambda)$-policy, constrained efficiency, individual rationality, and strategy-proofness.
\end{theorem}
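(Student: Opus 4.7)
The plan is to recognize Theorem \ref{thm:ttcf} as an essentially immediate consequence of Theorem \ref{thm:ttc} combined with Lemma \ref{lem:f-diversity}, so the proof reduces to verifying that the hypotheses of Theorem \ref{thm:ttc} are met once we take $\Xi := \Xi(f,\lambda)$ as the policy goal.

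First I would unfold the definition of the $(f,\lambda)$-policy. By construction, a matching $X$ satisfies the $(f,\lambda)$-policy if and only if $\xi(X) \in \Xi(f,\lambda)$, i.e., $f(\xi(X)) \geq \lambda$. So establishing the four properties of TTC relative to the $(f,\lambda)$-policy is literally the same as establishing them relative to the policy set $\Xi := \Xi(f,\lambda)$.

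Second, I would check the two hypotheses of Theorem \ref{thm:ttc} for this $\Xi$. The first hypothesis is that the initial matching satisfies $\Xi$; this holds because $f(\xi(\tilde{X})) \geq \lambda$ is precisely the assumption of the theorem, and this is equivalent to $\xi(\tilde{X}) \in \Xi(f,\lambda)$. The second hypothesis is that $\Xi \cap \Xi^0$ is M-convex; this is where Lemma \ref{lem:f-diversity} does the work, since the lemma states that pseudo M-concavity of $f$ is equivalent to $\Xi(f,\lambda) \cap \Xi^0$ being M-convex for every $\lambda$, in particular for the $\lambda$ fixed in the hypothesis.

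With both hypotheses in hand, Theorem \ref{thm:ttc}, applied to $\Xi := \Xi(f,\lambda)$ (and with TTC defined via the notion of permissibility corresponding to this policy goal, as the definition of TTC allows), immediately gives that TTC satisfies the $(f,\lambda)$-policy, constrained efficiency, individual rationality, and strategy-proofness. The only conceptual obstacle is Lemma \ref{lem:f-diversity} itself, which translates between the function-based and set-based formulations; once that translation is established, the present theorem requires no additional argument beyond the identification of $\Xi(f,\lambda)$ as a legitimate policy set.
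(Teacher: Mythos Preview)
Your proposal is correct and matches the paper's own argument essentially verbatim: the paper invokes Lemma~\ref{lem:f-diversity} to obtain M-convexity of $\Xi(f,\lambda)\cap\Xi^0$, notes that the assumption $f(\xi(\tilde X))\geq\lambda$ means the initial matching satisfies the $(f,\lambda)$-policy, and then applies Theorem~\ref{thm:ttc}.
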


To see why this theorem holds, recall that by Lemma \ref{lem:f-diversity}, $\Xi(f,\lambda) \cap \Xi^0$ is M-convex.
Furthermore, by assumption, the initial matching
satisfies the $(f,\lambda)$-policy. Therefore, the result follows from Theorem \ref{thm:ttc}.

Before we consider specific policy goals, we show that the set-based approach in Theorem \ref{thm:ttc} and the
function-based approach in Theorem \ref{thm:ttcf} are equivalent. For that purpose, note first that Lemma 1 already shows that the $(f,\lambda)$-policy for a pseudo M-concave function $f$ yields an M-convex policy set $\Xi(f,\lambda)\cap \Xi^0$. We establish a sense in which a converse result holds. 

\begin{theorem}\label{thm:ttcequ}
Suppose that $\Xi$ is a set of distributions. If $\Xi \cap \Xi^0$ is M-convex, then there exist
a pseudo M-concave function $f$ and a constant $\lambda \in \mathbb{R}$ such that $\Xi(f,\lambda) \cap \Xi^0 = \Xi \cap \Xi^0$.
\end{theorem}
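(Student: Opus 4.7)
The plan is to take the most direct route: use an indicator function. Define $f: \mathbb{Z}_+^{|\mathcal{C}|\times |\mathcal{T}|} \to \mathbb{R}$ by
\begin{equation*}
f(\xi) = \begin{cases} 1 & \text{if } \xi \in \Xi \cap \Xi^0, \\ 0 & \text{otherwise}, \end{cases}
\end{equation*}
and set $\lambda = 1$. Then by construction $\Xi(f,\lambda) = \{\xi : f(\xi) \geq 1\} = \Xi \cap \Xi^0$, and since $\Xi \cap \Xi^0 \subseteq \Xi^0$, we immediately obtain $\Xi(f,\lambda) \cap \Xi^0 = \Xi \cap \Xi^0$, as required. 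So the only real task is to verify that this $f$ is pseudo M-concave, which I will do by cases depending on whether $\min\{f(\xi),f(\tilde{\xi})\}$ equals $1$ or $0$.

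First I would handle the easy case where $\min\{f(\xi),f(\tilde{\xi})\} = 0$. Here the displayed inequality in the definition of pseudo M-concavity holds automatically because $f$ is nonnegative, so the only thing to produce is a pair of indices $(c,t)$ and $(c',t')$ with $\xi_c^t > \tilde{\xi}_c^t$ and $\xi_{c'}^{t'} < \tilde{\xi}_{c'}^{t'}$. Both $\xi$ and $\tilde{\xi}$ lie in $\Xi^0$ and hence share the same total mass $\sum_{c,t}\xi_c^t = \sum_{c,t}\tilde{\xi}_c^t = \sum_d k_d$. Since $\xi \neq \tilde{\xi}$, there must be some coordinate where $\xi$ strictly exceeds $\tilde{\xi}$ and, by mass conservation, some other coordinate where the opposite strict inequality holds; pick any such pair.

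Next I would handle the case where $\min\{f(\xi), f(\tilde{\xi})\} = 1$, i.e., both $\xi$ and $\tilde{\xi}$ lie in $\Xi \cap \Xi^0$. Again, distinctness together with equal total mass yields some $(c,t)$ with $\xi_c^t > \tilde{\xi}_c^t$. Applying the hypothesized M-convexity of $\Xi \cap \Xi^0$ to this $(c,t)$ produces $(c',t')$ with $\xi_{c'}^{t'} < \tilde{\xi}_{c'}^{t'}$ such that both $\xi - \chi_{c,t} + \chi_{c',t'}$ and $\tilde{\xi} + \chi_{c,t} - \chi_{c',t'}$ lie in $\Xi \cap \Xi^0$. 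Hence $f$ assigns value $1$ to each of the two modified distributions, and the required inequality $\min\{f(\xi - \chi_{c,t} + \chi_{c',t'}), f(\tilde{\xi} + \chi_{c,t} - \chi_{c',t'})\} \geq 1$ holds.

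There is not really a hard step here; the construction is essentially forced, and the only subtle point is making sure one can always exhibit the required pair of indices in the trivial case, which is handled by invoking the defining mass-conservation property of $\Xi^0$. This explains why the equivalence is clean: pseudo M-concavity of $f$ precisely extracts the M-convexity of every upper contour set $\Xi(f,\lambda) \cap \Xi^0$, so any M-convex policy set can be realized as such an upper contour set via its indicator.
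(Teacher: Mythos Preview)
Your proof is correct and follows essentially the same approach as the paper: both define the indicator function $f$ of $\Xi\cap\Xi^0$ (taking values $0$ and $1$) and set $\lambda=1$, then verify pseudo M-concavity by a case split on whether $\min\{f(\xi),f(\tilde\xi)\}$ is $0$ or $1$. If anything, you are slightly more careful than the paper in the $\min=0$ case, where you explicitly use the common total mass of elements of $\Xi^0$ to guarantee the existence of coordinates $(c,t)$ and $(c',t')$ with the required strict inequalities; the paper's proof glosses over this point.
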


Now that we have established general results based on M-convexity of the policy set or pseudo M-concavity of the policy function, we proceed to apply them to a variety of situations. To begin, consider the set $\Xi$ of distributions of all feasible matchings. In other words, consider a situation in which no policy goal is imposed other than $\sum_t \xi^t_c \le q_c$ for each $c$. Then it is rather straightforward to show that the set $\Xi\cap \Xi^0$ is an M-convex set. This implies that when there is no policy goal, TTC is  efficient, individually rational, and strategy-proof, a standard result in the literature \citep{abdulson03}.

Next we apply Theorems \ref{thm:ttc} and \ref{thm:ttcf} to a variety of policy goals.
These results turn out to be applicable to many specific cases, as  a wide variety of policy goals induce distributions that satisfy M-convexity or can be expressed by policy functions that are pseudo M-concave. To be more specific, first suppose that the policy goal $\Xi$ sets type-specific floors and ceilings at each school, i.e., $\Xi \equiv \{\xi | q_c^t \geq \xi_c^t \geq p_c^t \text{ for all } c \text{ and }t  \}$ where $q_c^t$ is the ceiling and $p_c^t$ is the floor for type $t$ at school $c$. Therefore, for each school, the number of students of a given type must be within the ceiling and floor of this type at the school. We call a policy goal $\Xi$ of this form a  \df{school-level diversity policy} and show that $\Xi \cap \Xi^0$ is an M-convex set. This finding, together with Theorem \ref{thm:ttc}, implies the following positive result.

\begin{corollary}\label{corollary:convexdiv}
Suppose that the initial matching satisfies a school-level diversity policy. Then
TTC satisfies the school-level diversity policy, constrained efficiency, individual rationality, and strategy-proofness.
\end{corollary}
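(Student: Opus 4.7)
The plan is to apply Theorem~\ref{thm:ttc}. Since the initial matching is assumed to satisfy the school-level diversity policy $\Xi$, the first hypothesis of that theorem is already in place, so the entire corollary will follow once I establish that $\Xi \cap \Xi^0$ is M-convex. Membership in $\Xi \cap \Xi^0$ is characterized by three conditions: (i) $p_c^t \le \xi_c^t \le q_c^t$ for every school $c$ and type $t$; (ii) $\sum_t \xi_c^t \le q_c$ for every $c$; and (iii) $\sum_{c,t} \xi_c^t = \sum_d k_d$. Note that Corollary~\ref{corollary:TTC} cannot be invoked directly, because $\Xi$ alone (without intersecting with $\Xi^0$) is not M-convex: M-convexity forces all elements to lie on a fixed-sum hyperplane, which the box $\{p_c^t \le \xi_c^t \le q_c^t\}$ does not.

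To verify M-convexity of $\Xi \cap \Xi^0$ directly, fix $\xi, \tilde\xi \in \Xi \cap \Xi^0$ and a pair $(c,t)$ with $\xi_c^t > \tilde\xi_c^t$. I will exhibit $(c',t')$ with $\xi_{c'}^{t'} < \tilde\xi_{c'}^{t'}$ such that both $\xi - \chi_{c,t} + \chi_{c',t'}$ and $\tilde\xi + \chi_{c,t} - \chi_{c',t'}$ again lie in $\Xi \cap \Xi^0$, by splitting into two cases. In Case~1, there is some type $t' \ne t$ with $\xi_c^{t'} < \tilde\xi_c^{t'}$; choose $c' = c$, so the swap merely rearranges types within school $c$ and leaves $\sum_{t''} \xi_c^{t''}$ and $\sum_{t''} \tilde\xi_c^{t''}$ unchanged. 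The box constraints (i) on the affected entries hold because $\xi_c^t - 1 \ge \tilde\xi_c^t \ge p_c^t$ and $\xi_c^{t'} + 1 \le \tilde\xi_c^{t'} \le q_c^{t'}$, with symmetric bounds for $\tilde\xi$; conditions (ii) and (iii) are automatic.

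Case~2 is the interesting one: $\xi_c^{t'} \ge \tilde\xi_c^{t'}$ for every type $t'$. Summing over $t'$ and using $\xi_c^t > \tilde\xi_c^t$ gives $\sum_{t'} \xi_c^{t'} > \sum_{t'} \tilde\xi_c^{t'}$. Because both distributions share the total $\sum_d k_d$ by (iii), there must exist a school $c' \ne c$ with $\sum_{t'} \xi_{c'}^{t'} < \sum_{t'} \tilde\xi_{c'}^{t'} \le q_{c'}$, and in particular some type $t'$ with $\xi_{c'}^{t'} < \tilde\xi_{c'}^{t'}$. Choose this $(c',t')$. The box constraints on the four affected entries are handled as in Case~1. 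For the school capacity (ii), under the swap, school $c$'s load in $\xi$ drops by one (so (ii) is preserved there), while $c'$'s load becomes $\sum_{t''} \xi_{c'}^{t''} + 1 \le q_{c'}$, using the strict integer inequality $\sum_{t''} \xi_{c'}^{t''} \le q_{c'} - 1$; symmetrically for $\tilde\xi$, the increase at $c$ is absorbed by the strict slack $\sum_{t''} \tilde\xi_c^{t''} < \sum_{t''} \xi_c^{t''} \le q_c$.

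The main obstacle is Case~2, where the school-capacity constraint (ii) must be reconciled with the cross-school swap; the key observation is that (iii) forces a strict slack in some other school's capacity, which is precisely what allows the swap to respect (ii) in both directions. Once M-convexity is established, Theorem~\ref{thm:ttc} delivers the policy goal, constrained efficiency, individual rationality, and strategy-proofness of TTC simultaneously.
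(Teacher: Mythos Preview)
Your proposal is correct and follows essentially the same approach as the paper: you show directly that $\Xi \cap \Xi^0$ is M-convex by the same two-case analysis (whether the deficit type $t'$ can be found at the same school $c$ or must be found at a different school $c'$ with strict capacity slack), and then invoke Theorem~\ref{thm:ttc}. Your additional remark that Corollary~\ref{corollary:TTC} cannot be used because the box $\Xi$ alone fails the constant-sum property of M-convex sets is a correct and worthwhile observation that the paper does not make explicit.
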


We note a sharp contrast between this result and Theorem \ref{thm:eff_impos}. The latter result demonstrates that no mechanism is guaranteed to satisfy the policy goal and other desiderata such as constrained efficiency, individual rationality, and strategy-proofness if the floors or ceilings are imposed at the district level. Corollary \ref{corollary:convexdiv}, in contrast, shows that a mechanism
with the desirable properties exists if the floors and ceilings are imposed at the school level. Taken together, these results
inform policy makers about what kinds of diversity policies are compatible with the other desiderata.

One possible shortcoming of Corollary \ref{corollary:convexdiv} is that the result holds under the assumption that the
initial matching satisfies the school-level diversity policy. This may be undesirable given that often diversity policies are implemented because schools or districts
are regarded as insufficiently diverse, as in the case of the diversity law in Minnesota. In such a setting, a potential
diversity requirement can be that the diversity should not decrease as a result of interdistrict school choice according to a diversity measure $f$. Such a consideration can be formally described as the $(f,\xi(\tilde{X}))$-policy, $\Xi(f,\xi(\tilde{X}))$.
The next corollary establishes a positive result for a $\Xi(f,\xi(\tilde{X}))$-policy where the diversity is measured via the ``Manhattan distance'' to an ideal point.

\begin{corollary}\label{cor:fdiverse}
Let $\hat{\xi} \in \Xi_0$ be an ideal distribution and $f(\xi) \equiv -\sum_{c,t}|\xi_c^t-\hat{\xi}_c^t|$ be the policy function.
Then TTC satisfies $(f,\xi(\tilde{X}))$-policy, constrained efficiency, individual rationality, and strategy-proofness.
\end{corollary}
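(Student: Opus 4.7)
The plan is to invoke Theorem \ref{thm:ttcf} with $\lambda := f(\xi(\tilde{X}))$; the hypothesis $f(\xi(\tilde{X})) \geq \lambda$ holds by construction, so it suffices to verify that $f(\xi) = -\sum_{c,t}|\xi_c^t - \hat{\xi}_c^t|$ is pseudo M-concave.

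Fix distinct $\xi, \tilde{\xi} \in \Xi^0$ and set $u := \xi - \hat{\xi}$ and $w := \tilde{\xi} - \hat{\xi}$. Since $\xi, \tilde{\xi}, \hat{\xi} \in \Xi^0$, both $u$ and $w$ are integer vectors whose entries sum to zero, and $f(\xi) = -\|u\|_1$, $f(\tilde{\xi}) = -\|w\|_1$. I rely on two elementary observations. First, the swap $u \mapsto u - \chi_{c,t} + \chi_{c',t'}$ changes $\|u\|_1$ by $-2$ exactly when $u_{c,t} \geq 1$ and $u_{c',t'} \leq -1$, by $+2$ exactly when $u_{c,t} \leq 0$ and $u_{c',t'} \geq 0$, and by $0$ otherwise; an analogous statement holds for $w$ under the reverse swap $w \mapsto w + \chi_{c,t} - \chi_{c',t'}$. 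Second, $\|u\|_1$ and $\|w\|_1$ are both even (each equals twice the sum of positive entries of a zero-sum integer vector), so if they are unequal they differ by at least $2$. Assume WLOG $\|u\|_1 \geq \|w\|_1$, and partition $P := \{(c,t) : u_{c,t} > w_{c,t}\}$ as $P_1 := P \cap \{u \geq 1\}$ and $P_2 := P \cap \{u \leq 0\}$ (every coordinate in $P_2$ automatically has $w \leq -1$); partition $N := \{(c,t) : u_{c,t} < w_{c,t}\}$ analogously into $N_1$ and $N_2$.

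The goal is to find $(c,t) \in P$ and $(c',t') \in N$ such that both post-swap $L^1$ norms are at most $\|u\|_1$. When $\|u\|_1 > \|w\|_1$ (so the gap is at least $2$ by parity), the constraint on the new $\|w\|_1$ is automatic, and it suffices to keep $\|u\|_1$ from growing: any pair with $(c,t) \in P_1$ or $(c',t') \in N_1$ works, and the alternative $P = P_2, N = N_2$ is impossible because a coordinate-by-coordinate comparison yields $\|w\|_1 \geq \|u\|_1 + |P \cup N| \geq \|u\|_1 + 2$, contradicting $\|u\|_1 \geq \|w\|_1$. The main obstacle is the equal-norm case $\|u\|_1 = \|w\|_1$, in which both post-swap norms must be preserved; I handle it by further refining $P_1, N_1$ according to the sign of $w$ (producing six subsets in total) and verifying by case analysis that a norm-preserving pair exists in each configuration. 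The only potentially problematic configurations---those forcing $P \subseteq \{u \geq 1, w \geq 0\}$ with $N \subseteq \{u \leq -1, w \leq 0\}$, and the symmetric counterpart $P \subseteq P_2, N \subseteq N_2$---are excluded by the same pairing identity, since each would force $\|u\|_1$ and $\|w\|_1$ to differ by at least $2$. With $f$ shown to be pseudo M-concave, Theorem \ref{thm:ttcf} yields the corollary.
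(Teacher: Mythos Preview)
Your proof is correct and follows essentially the same approach as the paper's: both verify pseudo M-concavity of $f$ via a sign-based case analysis on the coordinates of $\xi-\hat\xi$ and $\tilde\xi-\hat\xi$, and your refined six-subset partition ($P_1^{\pm}, P_2, N_1^{\pm}, N_2$) coincides exactly with the paper's $U_1,U_2,U_3,V_1,V_2,V_3$. The one organizational difference is that you first impose the WLOG $\|u\|_1\ge\|w\|_1$ and use the parity of the $L^1$ norms to dispose of the strict-inequality case quickly, whereas the paper treats the two ``problematic'' configurations (its Cases 5 and 6) directly by the same norm-gap observation; this is a tidy repackaging rather than a different argument.
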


Note that the initial matching $\tilde{X}$ always satisfies $(f,\xi(\tilde{X}))$-policy. Furthermore, we show that
the policy function $f$ is pseudo M-concave. Therefore, this corollary follows from Theorem \ref{thm:ttcf}. More generally,
when the diversity is measured by a pseudo M-concave function, then the TTC outcome is as diverse as the initial matching.
Furthermore, TTC also satisfies the other desirable properties.

Next, we study the balanced-exchange policy introduced in Section \ref{sec:bal}. We establish that the balanced-exchange policy imposed on $\Xi^0$
  is represented by a distribution that satisfies M-convexity. This implies the following result.

\begin{corollary}\label{corollary:convexbal}
TTC satisfies the balanced-exchange policy, constrained efficiency, individual rationality, and strategy-proofness.
\end{corollary}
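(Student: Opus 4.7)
The plan is to derive the corollary directly from Theorem \ref{thm:ttc} by writing the balanced-exchange policy as a distributional set $\Xi_{\text{bal}}$ and verifying the two hypotheses: (a) the initial matching $\tilde{X}$ satisfies $\Xi_{\text{bal}}$, and (b) $\Xi_{\text{bal}}\cap\Xi^{0}$ is M-convex. Formally, set $\Xi_{\text{bal}}\equiv\{\xi : \sum_{c:\,d(c)=d}\sum_{t}\xi_{c}^{t}=k_{d}\text{ for all } d\}$. Hypothesis (a) is immediate, since every student's initial school lies in her home district, so each district $d$ hosts exactly $k_{d}$ students at $\tilde{X}$. The bulk of the work is (b).

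For M-convexity, I would fix $\xi,\tilde{\xi}\in\Xi_{\text{bal}}\cap\Xi^{0}$ and a pair $(c,t)$ with $\xi_{c}^{t}>\tilde{\xi}_{c}^{t}$, let $d=d(c)$, and first observe that any allowable exchange $\xi\mapsto\xi-\chi_{c,t}+\chi_{c',t'}$ must satisfy $d(c')=d$; otherwise district $d$'s headcount would drop below $k_{d}$, violating $\Xi_{\text{bal}}$. The argument then splits on whether a suitable swap-partner exists within school $c$ itself.

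In the within-school subcase, if there is some $t'$ with $\xi_{c}^{t'}<\tilde{\xi}_{c}^{t'}$, the exchange using $(c,t')$ alters only the type composition at school $c$, so school capacities and district totals are preserved on both sides and both resulting distributions lie in $\Xi_{\text{bal}}\cap\Xi^{0}$. Otherwise $\xi_{c}^{t'}\geq\tilde{\xi}_{c}^{t'}$ for every $t'$ with strict inequality at $t$, which forces $\sum_{t''}\xi_{c}^{t''}>\sum_{t''}\tilde{\xi}_{c}^{t''}$; in particular, $c$ is strictly below capacity under $\tilde{\xi}$. Because both distributions sum to $k_{d}$ over the schools of district $d$, there must exist another school $c'$ with $d(c')=d$ and $\sum_{t''}\xi_{c'}^{t''}<\sum_{t''}\tilde{\xi}_{c'}^{t''}\leq q_{c'}$, so $c'$ is strictly below capacity under $\xi$ and admits some $t'$ with $\xi_{c'}^{t'}<\tilde{\xi}_{c'}^{t'}$. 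Direct verification shows that the swap using this $(c',t')$ preserves all school capacities and district totals in both directions.

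The main obstacle is producing a single $(c',t')$ that makes both swaps simultaneously feasible; the case analysis on whether $c$ has slack under $\tilde{\xi}$ is precisely what guarantees this, and it is what forces the within-school swap to be the correct move when cross-school swaps would breach capacities. With M-convexity of $\Xi_{\text{bal}}\cap\Xi^{0}$ established and (a) already in hand, Theorem \ref{thm:ttc} delivers all four desired properties.
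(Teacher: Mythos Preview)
Your proposal is correct and follows essentially the same approach as the paper: you express the balanced-exchange policy as a distributional set, note that the initial matching satisfies it (the paper calls this ``trivial''), and prove M-convexity of $\Xi_{\text{bal}}\cap\Xi^{0}$ via the same two-case argument---a within-school type swap when some $t'$ has $\xi_{c}^{t'}<\tilde{\xi}_{c}^{t'}$, and otherwise a cross-school swap within the same district using the equal district totals to locate a school $c'$ with slack under $\xi$. Your preliminary observation that any admissible $(c',t')$ must have $d(c')=d$ is a nice organizing remark; the paper leaves it implicit but uses it in exactly the same way.
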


One of the advantages of our approach is that M-convexity of a set and pseudo M-concavity of a function are so general that a wide variety of policy goals satisfy them, and that it is likely to be applicable for policy goals that one may encounter in the future. To highlight this point, we consider imposing  the diversity and balanced-exchange policies at the same time. More specifically, define a set of distributions $\Xi \equiv \{\xi | q_c^t \geq \xi_c^t \geq p_c^t \text{ for all } c \text{ and } t \text{ and } \sum_t \sum_{c:d(c)=d} \xi_c^t =k_d \text{ for all } d\}$ and call it the \df{combination of balanced-exchange and school-level diversity policies}. This is the set of distributions that satisfy both the school-level floors and ceilings and the balanced-exchange requirement. We
establish $\Xi \cap \Xi^0$ is M-convex, implying the following result.

\begin{corollary}\label{corollary:mix}
Suppose that the initial matching satisfies the combination of balanced exchange and school-level diversity policies. Then
TTC satisfies the combination of balanced exchange and school-level diversity policies, constrained efficiency, individual rationality, and strategy-proofness.
\end{corollary}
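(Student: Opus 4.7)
The plan is to invoke Theorem~\ref{thm:ttc}: since the initial matching is assumed to satisfy the combination policy $\Xi$, it suffices to show that $\Xi \cap \Xi^0$ is M-convex, and the corollary follows. To that end, fix any $\xi,\tilde\xi \in \Xi \cap \Xi^0$ and any $(c,t)$ with $\xi_c^t > \tilde\xi_c^t$, and set $d = d(c)$. Because both distributions satisfy the balanced-exchange constraint at $d$, one has $\sum_{c':d(c')=d,\,t''}(\xi_{c'}^{t''} - \tilde\xi_{c'}^{t''}) = 0$, so some $(c',t')$ with $d(c')=d$ must satisfy $\xi_{c'}^{t'} < \tilde\xi_{c'}^{t'}$. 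The plan is to choose such a coordinate carefully so that the two swapped distributions $\xi' := \xi - \chi_{c,t} + \chi_{c',t'}$ and $\tilde\xi' := \tilde\xi + \chi_{c,t} - \chi_{c',t'}$ both lie in $\Xi \cap \Xi^0$.

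The selection splits into two cases. In Case~A, some type $t'$ satisfies $\xi_c^{t'} < \tilde\xi_c^{t'}$; take $c'=c$, so the swap merely redistributes types inside school $c$ and hence preserves the school-capacity bound $\sum_{t''}\xi_c^{t''} \le q_c$ and the district-balance constraint automatically. The type-specific bounds at $c$ follow from $\xi_c^t - 1 \ge p_c^t$ (since $\xi_c^t > \tilde\xi_c^t \ge p_c^t$) and $\xi_c^{t'} + 1 \le q_c^{t'}$ (since $\xi_c^{t'} < \tilde\xi_c^{t'} \le q_c^{t'}$), with symmetric verifications for $\tilde\xi'$. In Case~B, $\xi_c^{t''} \ge \tilde\xi_c^{t''}$ for every $t''$, so $\sum_{t''}\xi_c^{t''} > \sum_{t''}\tilde\xi_c^{t''}$, and district balance at $d$ forces some $c' \in d$ with $c' \ne c$ to satisfy $\sum_{t''}\xi_{c'}^{t''} < \sum_{t''}\tilde\xi_{c'}^{t''} \le q_{c'}$; inside such a $c'$ some type $t'$ satisfies $\xi_{c'}^{t'} < \tilde\xi_{c'}^{t'}$. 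The two school-capacity checks that could potentially fail then reduce to $\sum_{t''}\xi_{c'}^{t''} + 1 \le q_{c'}$ and $\sum_{t''}\tilde\xi_c^{t''} + 1 \le q_c$, both immediate from the strict sum inequalities just derived, while the type-specific bounds at $(c,t)$ and $(c',t')$ and the district balance are verified exactly as in Case~A.

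The hard part will be Case~B, because moving a student between distinct schools within $d$ could in principle overshoot a school's capacity. The key observation is that when Case~A fails, the excess of $\xi$ over $\tilde\xi$ at school $c$ is offset, by the district-balance constraint, by a strict deficit at some other school $c'$ in the same district, and that deficit simultaneously supplies the capacity slack $\xi'$ needs at $c'$ and the symmetric slack $\tilde\xi'$ needs at $c$, making the swap feasible in both directions at once.
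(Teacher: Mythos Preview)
Your proof is correct and follows essentially the same approach as the paper's own proof: verify M-convexity of $\Xi \cap \Xi^0$ by the same two-case split (compensating coordinate at the same school $c$, or at another school $c'$ in the same district found via the balanced-exchange equality), then invoke Theorem~\ref{thm:ttc}. The verifications you give for the floor/ceiling, school-capacity, and district-balance constraints match the paper's, only more concisely.
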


 In general, the
intersection of two M-convex sets need not be M-convex.\footnote{Such
an example is available from the authors upon request.} Therefore, the proof of this result
does not follow from the proofs of Corollaries \ref{corollary:convexdiv} and \ref{corollary:convexbal}.

\section{Conclusion}\label{sec:conclusion}
Despite increasing interest in interdistrict school choice in the US, the scope of matching theory  has been limited to intradistrict choice. In this paper, we proposed a new framework to study interdistrict school choice that allows for interdistrict admissions, both from stability and efficiency perspectives. For stable mechanisms, we characterized
 conditions on district admissions rules that achieve  a variety of important policy goals, such as student diversity across districts. For efficient mechanisms, we showed that certain types of diversity policies are incompatible with desirable properties such as strategy-proofness, while alternative forms of diversity policies can be achieved by  a variation of the top trading cycles mechanism, which is strategy-proof. Overall, our analysis suggests that interdistrict school choice can help achieve desirable policy goals such as student diversity, but only with an appropriate design of constraints, admissions rules, and placement mechanisms.

We regard this paper as a first step toward formal analysis of interdistrict school choice based on tools of  market design. As such, we envision a variety of directions for future research. For example, it may be interesting to study cases in which the conditions for our results are violated. Although we already know the policy goals are not guaranteed to be satisfied for our stability results (our results provide necessary and sufficient conditions), the seriousness of the failure of the policy goals studied in the present paper is an open question. Quantitative measures or an approximation argument like those used in ``large matching market'' studies  (e.g., \citet{roth99}, \citet{kojpat09}, \citet{kojima2013matching}, and \citet{ashlagi2014stability}) may prove useful, although this is speculative at this point and beyond the scope of the present paper.

We studied policy goals that we regarded as among the most important ones, but they are far from being exhaustive. Other important policy goals may include a diversity policy requiring certain proportions of different student types in each district (see \citet{ngvoh17} for a related policy at the level of schools), as well as a balanced exchange policy requiring a certain bound on the difference in the numbers of students received from and sent to other districts (see \citet{dur2015two} for a related policy at the level of schools). 
Given that the existing literature has not studied interdistrict school choice, we envision that many policy goals await to be studied within our framework.

While our paper is primarily theoretical and aimed at proposing a general framework to study interdistrict school choice, the main motivation comes from applications to actual programs such as Minnesota's AI program. Given this motivation, it would be interesting to study interdistrict school choice empirically. For instance, evaluating how well the existing programs are doing in terms of balanced exchange, student welfare, and diversity, and how much improvement could be made by a conscious design based on theories such as the ones suggested in the present paper, are important questions left for future work. In addition, implementation of our designs in practice would be interesting. Doing so may, for instance, shed new light on the tradeoff between SPDA and TTC, which has been studied in the intradistrict school choice from a practical perspective (e.g., \citet{abpatrotson06}, \citet{abchepatroter:17}). We are only beginning to learn about the interdistrict school choice problem, and thus we expect that these and other questions could be answered as more researchers analyze it.


\bibliographystyle{aer}
\bibliography{matching}


\appendix

\section{Additional Results}
In this section, we provide some additional results. Before we proceed, we introduce two admissions rule properties. An admissions rule $Ch$ satisfies \df{path independence} if for every $X,Y\subseteq \X$, $Ch(X\cup Y)=Ch(X \cup Ch(Y))$. Path independence states that a set can be divided into not-necessarily disjoint subsets and the admissions rule can be applied to the subsets in any order so that the chosen set of contracts is always the same. An admissions rule $Ch$ satisfies \df{the irrelevance of rejected contracts} (IRC) if for every $X \subseteq \X$ and $x\notin Ch(X)$, $Ch(X\setminus \{x\})=Ch(X)$. The irrelevance or rejected contracts states that a rejected contract can be removed
from a set without changing the chosen set. Path independence is equivalent to substitutability and IRC \citep{aizmal81}. Furthermore, an admissions rule satisfies substitutability and LAD if, and only if, it satisfies path independence and LAD.\footnote{See \cite{aygson12a} for a study of IRC and \citet{chayen13} for a study of path independence in a matching context.}

\subsection{Improving Student Welfare for  Districts with Intradistct School Choice}\label{sec:cen}
 In Section \ref{sec:ir}, we studied when SPDA satisfies individual rationality, which requires that,
under interdistrict school choice, every student is matched with a school that is weakly more preferred
than her initial school. In this section, we consider an alternative setting where each district uses
SPDA to assign its students to schools when there is no interdistrict school choice. In other words,
the status quo is SPDA when there is only intradistrict school choice. More explicitly,
each student ranks schools in their home districts (or contracts associated with their home districts) and
SPDA is used between a district and students from that district. Note that we assume each student's
ranking over contracts associated with the home district is the same as the relative ranking in the
original preferences. Importantly, in this setting, we
compare SPDA outcomes in interdistrict and intradistrict school choice. In such a setting, we characterize district
admissions rules which guarantee that no student is hurt from interdistrict school choice.

The next property of district admissions rules plays a crucial role to achieve this policy.


\begin{definition}
A district admissions rule $\chd$ \df{favors own students} if for any matching $X$ that is feasible for students,
\begin{center}
$\chd(X)\supseteq \chd(\{x\in X|d(s(x))=d\})$.
\end{center}
\end{definition}

When a district admissions rule favors own students, any contract that is chosen from a set of contracts associated with students from a district is also chosen from a superset that includes additional contracts associated with students from the other districts. Roughly, this condition requires that, under interdistrict school choice, a district prioritizes its own students that it used to admit over students from the other districts (even though an out-of-district student can still be admitted when a student from the district is rejected).

The following result shows that this is exactly the condition which guarantees that interdistrict school choice weakly improves
the outcome for every student.

\begin{theorem}\label{thm:welfimprove}
Every student weakly prefers the SPDA outcome under interdistrict school choice to the
SPDA outcome under intradistrict school choice for all student preferences if, and only if, each district's admissions rule
favors own students.
\end{theorem}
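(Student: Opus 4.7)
For the ``if'' direction, my plan is to reframe the intradistrict SPDA outcome $Y$ as the interdistrict SPDA outcome under a modified admissions profile, and then appeal to a standard monotonicity principle. For each district $d$, define
\begin{equation*}
\widehat{Ch}_d(Z) \equiv Ch_d\bigl(\{y \in Z : d(s(y)) = d\}\bigr),
\end{equation*}
which agrees with $Ch_d$ on own-district contracts and auto-rejects any out-of-district contract. I would first observe that running interdistrict SPDA under the profile $(\widehat{Ch}_d)_{d \in \mathcal{D}}$ yields exactly $Y$: because every out-of-district proposal is immediately rejected, each own-district student's sequence of \emph{effective} (non-auto-rejected) proposals coincides with her intradistrict sequence, so each district ends up with the same matching of its own students as in intradistrict SPDA. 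Next, ``favors own students'' is precisely the statement $Ch_d(Z) \supseteq \widehat{Ch}_d(Z)$ for every $Z$ feasible for students, so $Ch_d$ is uniformly more accepting than $\widehat{Ch}_d$ on all sets SPDA actually processes. Then a standard comparative-statics principle for SPDA with substitutable LAD completions (along the lines of \citet{hatkom14}) asserts that replacing each district's admissions rule by a more accepting one weakly improves every student's outcome, yielding $X_s \mathrel{R_s} Y_s$ for all $s$. The only routine bookkeeping is to verify that $\widehat{Ch}_d$ inherits a substitutable LAD completion from that of $Ch_d$, which follows by pre-restricting the input as in the definition above.

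For the ``only if'' direction, suppose some district $d$'s admissions rule $Ch_d$ fails to favor own students. Then there exist a matching $X$ feasible for students and a contract $x = (s, d, c)$ with $d(s) = d$ such that $x \in Ch_d(X_{own})$ while $x \notin Ch_d(X)$, where $X_{own} \equiv \{y \in X : d(s(y)) = d\}$. My plan is to construct student preferences that engineer the first round of both SPDAs to exhibit this failure. I would first preprocess by replacing $X$ with $X' \equiv Ch_d(X_{own}) \cup (X \setminus X_{own})$; using path independence of the substitutable LAD completion of $Ch_d$, one checks $Ch_d(X') = Ch_d(X)$ and $X'_{own} = Ch_d(X'_{own})$, so the witness is preserved and every own-district contract in $X'$ is selected by $Ch_d$ when only own-district proposals are present. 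Then I would define the preference profile so that every student whose contract lies in $X'$ ranks that contract top, with tails specified so that (i) no other student interferes with the proposals to $d$ in round one, (ii) in intradistrict SPDA, district $d$ receives exactly $X'_{own}$ in round one, accepts $s$ at $c$ (since $x \in Ch_d(X'_{own})$), and no subsequent round displaces $s$ because the round-one tentative assignment is self-stable, and (iii) in interdistrict SPDA, district $d$ receives exactly $X'_d$ in round one, rejects $s$ (since $x \notin Ch_d(X'_d) = Ch_d(X)$), and $s$'s remaining preferences are arranged so that all post-rejection proposals lead to strictly worse outcomes than $c$. Then $Y_s = (s, c) \mathrel{P_s} X_s$, breaking the weak-welfare-improvement property.

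The main obstacle lies in the ``only if'' construction, where one must simultaneously ensure that $s$ is never displaced from $c$ in intradistrict SPDA (handled via the preprocessing that makes round-one intradistrict assignments self-stable) and that $s$ has no interdistrict escape to a school she weakly prefers to $c$ (arranged by designing her tail so that the remaining schools are either strictly less preferred or will reject her). The ``if'' direction is conceptually cleaner once the $\widehat{Ch}_d$-reformulation is in place, reducing to a monotonicity application.
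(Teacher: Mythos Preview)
Your overall architecture matches the paper's: both directions hinge on the auxiliary rule $\widehat{Ch}_d(Z) = Ch_d(\{y \in Z : d(s(y)) = d\})$ and the observation that interdistrict SPDA under $(\widehat{Ch}_d)_d$ reproduces the intradistrict outcome.

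The ``if'' direction, however, has a genuine gap. You invoke a ``standard comparative-statics principle'' to conclude that, because $Ch_d(Z) \supseteq \widehat{Ch}_d(Z)$ on every $Z$ feasible for students, SPDA under $(Ch_d)_d$ weakly improves on SPDA under $(\widehat{Ch}_d)_d$. The paper explicitly flags that this step is \emph{not} standard: the available comparative-statics results require the inclusion to hold for \emph{all} contract sets, and that stronger inclusion fails here, even at the level of the completions (substitutability of $\widetilde{Ch}_d$ gives the wrong direction of inclusion between $\widetilde{Ch}_d(X)$ and $\widetilde{Ch}_d(X^*)$ for $X^* \subseteq X$). Concretely, the intradistrict outcome $Y$ is typically \emph{not} stable under $(Ch_d)_d$, since an out-of-district student may block with $d$ under $Ch_d$ while being auto-rejected under $\widehat{Ch}_d$; hence student-optimality of SPDA under $(Ch_d)_d$ does not by itself compare its outcome to $Y$. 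The paper closes this gap with a new lemma (Lemma~\ref{lem:comp}), proved via an improvement algorithm that starts from $Y$, iteratively satisfies blocking districts under $(\widetilde{Ch}_d)_d$, and verifies---using the feasible-for-students inclusion at each step---that no student is ever displaced. That argument is the real content of the ``if'' direction, and your proposal treats it as off-the-shelf.

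Your ``only if'' sketch is closer in spirit to the paper's but omits a needed ingredient. After your preprocessing to $X'$, you claim intradistrict SPDA stabilizes at round one with $s$ held at $c$. But own-district students with no contract in $X'_{own}$ must still propose somewhere in $d$ during intradistrict SPDA (they have no out-of-district options there), and nothing in your construction prevents them from displacing $s$ in later rounds. The paper handles this by first building a feasible matching $Y \supseteq Ch_d(X^*)$ that places \emph{every} own-district student in $d$, then designing preferences so that intradistrict SPDA terminates at $Y_d$ after two rounds with $s$ at $c$, while interdistrict SPDA rejects $s$ at round one.
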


In the proof, we show that in the intradistrict school choice the SPDA outcome can alternatively be produced by an
interdistrict school choice model where students rank contracts with all districts and districts have modified admissions
rules: For any set of contracts $X$, each district $d$ chooses the following contracts: $\chd(\{x\in X|d(s(x))=d\})$.
Since the original district admissions rules favor own students, the chosen set under the modified admissions rule is a subset of $\chd(X)$ when $X$ is feasible for students. Then the conclusion that students receive weakly more preferred outcomes in interdistrict school choice than in intradistrict school choice follows from a comparative statics property of SPDA that we show (Lemma \ref{lem:comp}).\footnote{We cannot use the comparative statics result of \cite{yen14} because in our setting $Ch_d(X)\supseteq Ch'_d(X)$ only when $X$ is feasible for students, whereas \cite{yen14} requires this property for all sets of contracts $X$.} To show
the ``only if'' part, when there exists a district admissions rule that fails to favor own students, we construct
student preferences such that interdistrict school choice makes at least one student strictly worse off than
intradistrict school choice.

\subsection{District-level Type-specific Ceilings}\label{sec:districtlevel}
In this section, we show the incompatibility of type-specific ceilings at the district level
with the existence of a stable matching.

\begin{definition}
A district admissions rule $Ch_{d}$ has a \textbf{\textit{district-level type-specific
ceiling}} of $q_{d}^{t}$ for type-$t$ students if the number
of type-$t$ students admitted from a matching that is feasible for students
cannot exceed this ceiling. More formally, for
any matching $X$ that is feasible for students,
\begin{center}
$|\{x\in Ch_d(X)| \tau(s(x))=t\}| \leq q^t_d$.
\end{center}
\end{definition}

Note that, as in the case of school-level type-specific ceilings, district admissions rules do not necessarily satisfy acceptance
once district-level type-specific ceilings are imposed. We define a weaker version of the acceptance
assumption as follows.

\begin{definition}
A district admissions rule $Ch_{d}$ that has district-level type-specific ceilings
is \textbf{d-weakly acceptant} if, for any contract $x$ associated with a
type-$t$ student and district $d$ and matching $X$ that is feasible for students,
if $x$ is rejected from $X$, then at $Ch_{d}(X)$,

\begin{itemize}
\item the number of students assigned to school $c(x)$ is equal to $q_{c(x)}$, or

\item the number of students assigned to district $d$ is at least $k_{d}$, or

\item the number of type-$t$ students assigned to district $d$ is at least $q_d^t$.
\end{itemize}
\end{definition}

This admissions rule property states that a student can be rejected only when one
of these three conditions is satisfied.

We establish that in an interdistrict school choice problem in which district admissions rules have
district-level type-specific ceilings that also satisfy some other desired properties,
there may exist no stable matching.

\begin{theorem}\label{thm:imposswithschools}
There exist districts, schools, students, and their types such that for every
admissions rule of a district with district-level type-specific ceilings
that satisfies $d$-weak acceptance and IRC, there
exist  admissions rules for the other districts
that satisfy substitutability and IRC and
student preferences such that no stable matching exists.
\end{theorem}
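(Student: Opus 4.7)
The plan is to construct a single fixed problem instance---with specific numbers of districts, schools, students, types, and district-level ceilings at $d_1$---such that for every admissions rule $Ch_{d_1}$ satisfying district-level type-specific ceilings, $d$-weak acceptance, and IRC, we can exhibit admissions rules for the remaining districts (satisfying substitutability and IRC) together with student preferences under which no stable matching exists. The guiding intuition is that a district-level type ceiling forces choices across contracts that are distributed among different schools, which prevents $Ch_{d_1}$ from being substitutable and therefore destroys the usual stability-existence arguments.

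First I would set up two districts $d_1, d_2$ with two schools each of capacity one, and impose a district-level ceiling $q^{t}_{d_1}=1$ for some type $t$. The student set would include two type-$t$ students $s_1, s_2$ whose home districts and initial schools are chosen so that $d$-weak acceptance forces $Ch_{d_1}$ to admit at least one of them whenever both apply via distinct schools, while the type ceiling forces rejection of the other. Additional students would be added to create out-of-district demand, so that any student rejected at $d_1$ has a concrete school at $d_2$ they can block with.

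Next, for any fixed $Ch_{d_1}$ satisfying the hypotheses, I would perform a short case analysis on how $Ch_{d_1}$ resolves the competition between the two type-$t$ contracts on the full application set. By the type ceiling at least one type-$t$ student is rejected; IRC pins the behavior on relevant subsets down from there. In each case I would explicitly define $Ch_{d_2}$ as a responsive rule with a chosen priority order (which automatically satisfies substitutability and IRC) and construct student preferences so that whichever student $Ch_{d_1}$ rejects forms a blocking pair with a school in $d_2$; that, in turn, displaces another student who blocks with $d_1$ via the now-vacated slot; and so on, producing a rotation of blocking pairs with no fixed point. This rules out the existence of any stable matching.

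The main obstacle will be ensuring the case analysis is truly exhaustive over all admissions rules satisfying the hypotheses, since $d$-weak acceptance and IRC do not uniquely determine $Ch_{d_1}$'s choice when the ceiling binds. To manage this, I would design the problem primitives symmetrically in $s_1$ and $s_2$, so that the two resolutions of the ceiling are interchangeable up to relabeling and a single symmetric argument covers both. A secondary technical point is to verify that the constructed $Ch_{d_2}$ does satisfy substitutability and IRC; taking it to be responsive to a linear priority ranking makes this immediate. The delicate part is calibrating student preferences so that the blocking-pair rotation closes up no matter which resolution $Ch_{d_1}$ picks and so that it is robust to any other contract choices that $Ch_{d_1}$ might make on off-path application sets consistent with its hypotheses.
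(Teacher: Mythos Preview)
Your broad plan---fix a problem instance, then case-split on how $Ch_{d_1}$ resolves the type-ceiling conflict and engineer a blocking-pair cycle in each case---is in the right spirit, but there is a genuine gap in how you propose to control $Ch_{d_1}$ across the case analysis. You write that ``IRC pins the behavior on relevant subsets down'' once you know $Ch_{d_1}$ on the full application set. This is not so: IRC only tells you that removing a \emph{rejected} contract leaves the choice unchanged; it says nothing about what happens when you remove a \emph{chosen} contract or pass to an arbitrary subset. For a direct blocking-cycle argument you must know $Ch_{d_1}$'s behavior on many different application sets (one for each candidate stable matching), and the axioms you have---$d$-weak acceptance, IRC, and the ceiling---leave a great deal of freedom on those sets. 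The ``single symmetric argument'' you envision would in fact have to quantify over all such extensions, and with only two schools in $d_1$ it is not clear this case tree closes.

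The paper avoids this difficulty by first invoking the Hatfield--Kojima (2008) result: any $Ch_d$ satisfying IRC but violating \emph{weak substitutability} admits a preference profile and a substitutable choice rule for another district under which no stable matching exists. This reduces the task to a purely choice-theoretic statement---show that no $Ch_d$ with district-level ceilings can simultaneously satisfy $d$-weak acceptance, IRC, and weak substitutability---and now weak substitutability is available as an extra \emph{hypothesis for contradiction}, which is precisely the monotonicity tool that propagates choices to subsets. Even with that tool the paper needs three schools in $d$, four students, and ceilings on two types, and the case analysis still has a main case with two nested sub-cases. Your two-school setup and the hope for a ``short case analysis'' substantially underestimate what is required; I would recommend adopting the weak-substitutability reduction and then revisiting the construction.
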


To show this result, we construct an environment such that a district
admissions rule with the desired properties cannot satisfy \emph{weak substitutability},
a necessary condition to guarantee the existence of a stable matching \citep{hatfield2008matching}.

\section{Examples of District Admissions Rules}\label{sec:examples}
In this section, we first provide a class of district admissions rules that are feasible and acceptant and, furthermore, have completions that satisfy substitutability and LAD. Then, based on this class, we identify admissions rules that also satisfy the properties stated in Theorems \ref{thm:welfimprove2}, \ref{thm:balance}, \ref{thm:diversity}, and \ref{thm:welfimprove}.

\subsection{An Example of a District Admissions Rule}\label{section:genex}
Consider a district $d$ with schools $c_1,\ldots,c_n$. Each school $c_i$ has an admissions rule $Ch_{c_i}$ such that, for any set of contracts $X$, $Ch_{c_i}(X)=Ch_{c_i}(X_{c_i})\subseteq X_{c_i}$. District $d$'s admissions rule $Ch_d$ is defined as follows. For any set of contracts $X$,
\begin{center}
$Ch_d(X)=Ch_{c_1}(X) \cup Ch_{c_2}(X\setminus Y_1) \cup \ldots \cup Ch_{c_n} (X\setminus Y_{n-1})$,
\end{center}
where $Y_i$ for $i=1,\ldots,n-1$ is the set of all contracts in $X$  associated with students who have contracts in $Ch_{c_1}(X)\cup \ldots \cup Ch_{c_i}(X\setminus Y_{i-1})$. In words, we order the schools and let schools choose in that order. Furthermore, if a student is chosen by some school, we remove all contracts associated with this student for the remaining schools.

We study when district admissions rule $\chd$ satisfies our assumptions.

\begin{claim}\label{claim:feasible}
Suppose that for every school $c_i$ and matching $X$, $\abs{Ch_{c_i}(X)}\leq q_{c_i}$. Then district admissions rule $Ch_d$ is feasible.
\end{claim}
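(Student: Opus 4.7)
The plan is to verify the two defining properties of a feasible matching, as applied to the output $Ch_d(X)$ for an arbitrary input $X$: (i) feasibility for students, meaning at most one contract per student; and (ii) respect for each school's capacity, meaning $|Ch_d(X)_c|\le q_c$ for every $c$. I will treat these two in turn, and the argument is essentially bookkeeping from the definition.

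For (i), the key observation is the removal step built into the construction of $Ch_d$. Fix any student $s$, and suppose for contradiction that $Ch_d(X)$ contains two distinct contracts of $s$. Because each $Ch_{c_i}(\,\cdot\,) \subseteq (\cdot)_{c_i}$ consists only of contracts at school $c_i$, and $(\cdot)_{c_i}$ contains at most one contract per student, these two contracts must have come from different stages, say stages $i<j$. But then the contract of $s$ chosen at stage $i$ lies in $Ch_{c_1}(X)\cup\cdots\cup Ch_{c_i}(X\setminus Y_{i-1})$, so by definition $Y_i$ contains every contract of $s$ in $X$. Since $Y_i \subseteq Y_{j-1}$, the stage-$j$ school operates on the set $X\setminus Y_{j-1}$, which contains no contract of $s$. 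This contradicts the assumption that $Ch_{c_j}(X\setminus Y_{j-1})$ contains a contract of $s$.

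For (ii), fix any school $c$ in district $d$, say $c=c_i$. Since $Ch_{c_j}(X\setminus Y_{j-1}) \subseteq (X\setminus Y_{j-1})_{c_j}$ contains only contracts at school $c_j$, the only stage contributing contracts at $c_i$ is stage $i$. Hence
\[
Ch_d(X)_{c_i} = Ch_{c_i}(X\setminus Y_{i-1}),
\]
and by the hypothesis of the claim, $|Ch_{c_i}(X\setminus Y_{i-1})|\le q_{c_i}$. This gives (ii). The claim follows by combining (i) and (ii).

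There is no real obstacle here; the proof is a direct unpacking of the definition of $Ch_d$ together with the defining property that each $Ch_{c_i}$ chooses only contracts associated with $c_i$ and never exceeds $q_{c_i}$. The only subtlety worth stating explicitly in the write-up is the role of the $Y_i$'s in ensuring that a student cannot be selected by two different schools, since school-level choice functions individually are blind to contracts at other schools.
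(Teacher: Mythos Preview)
Your proof is correct and follows essentially the same approach as the paper's own proof: both arguments check feasibility for students by observing that each school-level choice function selects at most one contract per student and that the sets $Y_i$ remove all contracts of an already-selected student from subsequent stages, and both check capacities by noting that only stage $i$ contributes contracts at $c_i$ together with the hypothesis $|Ch_{c_i}(\cdot)|\le q_{c_i}$. Your write-up is slightly more explicit (e.g., the observation $Y_i\subseteq Y_{j-1}$), but there is no substantive difference.
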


\begin{proof}
Since every student-school pair uniquely defines a contract, for every matching $X$, every school $c_i$, and every student $s$, there is at most one contract  associated with $s$ in $Ch_{c_i}(X)$. In addition, whenever a student's contract with a school $c_i$ is chosen, her contracts with the remaining schools are included in $Y_j$ for every $j \ge i$ by the construction of $\chd$. Hence, for every $X$, $\chd(X)$ is feasible for students.
Furthermore, by assumption, $\abs{Ch_{c_i}(X)}\leq q_{c_i}$ for each $c_i$. Therefore, $\chd$ is feasible.
\end{proof}

\begin{claim}\label{claim:acceptant}
Suppose that for every school $c_i$ and matching $X$, $\abs{Ch_{c_i}(X)}=\min\{q_{c_i},\abs{X_{c_i}}\}$. Then district admissions rule $Ch_d$ is acceptant.
\end{claim}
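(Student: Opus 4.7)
The plan is to unpack the definition of acceptance, identify at which step in the construction of $\chd$ the rejected contract $x$ is considered, and then invoke the hypothesis on $|Ch_{c_i}(\cdot)|$ to conclude that the school in question is full at $\chd(X)$.

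Fix a matching $X$ that is feasible for students and a contract $x=(s,d,c_i)\in X$ with $x\notin \chd(X)$. The goal is to show that the first acceptance clause holds, namely $|\chd(X)_{c_i}|=q_{c_i}$. First I would observe that since $X$ is feasible for students, $s$ has no other contract in $X$, so $s$ does not appear in any contract chosen by the earlier schools $c_1,\dots,c_{i-1}$ in the sequential construction of $\chd$. Indeed, for $s$ to appear in $Ch_{c_j}(X\setminus Y_{j-1})$ for some $j<i$, we would need $(s,c_j)\in X$, which contradicts the fact that $(s,c_i)\in X$ is the unique contract for $s$ in $X$. Consequently $x\notin Y_{i-1}$, i.e., $x\in X\setminus Y_{i-1}$, and in particular $x\in (X\setminus Y_{i-1})_{c_i}$.

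Since $\chd(X)=\bigcup_{j}Ch_{c_j}(X\setminus Y_{j-1})$ with $Ch_{c_j}(X\setminus Y_{j-1})\subseteq (X\setminus Y_{j-1})_{c_j}$, the only summand that can contribute contracts associated with school $c_i$ is $Ch_{c_i}(X\setminus Y_{i-1})$. Hence $\chd(X)_{c_i}=Ch_{c_i}(X\setminus Y_{i-1})$. Because $x\notin \chd(X)$ and $x\in (X\setminus Y_{i-1})_{c_i}$, we have $x\in (X\setminus Y_{i-1})_{c_i}\setminus Ch_{c_i}(X\setminus Y_{i-1})$, so $|Ch_{c_i}(X\setminus Y_{i-1})|<|(X\setminus Y_{i-1})_{c_i}|$. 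Applying the hypothesis $|Ch_{c_i}(Z)|=\min\{q_{c_i},|Z_{c_i}|\}$ with $Z=X\setminus Y_{i-1}$ then yields $|Ch_{c_i}(X\setminus Y_{i-1})|=q_{c_i}$, which gives $|\chd(X)_{c_i}|=q_{c_i}$, verifying the first clause of acceptance.

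There is no substantive obstacle; the only point that needs care is the first paragraph above, ensuring that $x$ is actually examined at step $i$ of the sequential construction rather than being discarded earlier via $Y_{i-1}$. This is precisely where feasibility of $X$ for students is essential, and it is also why the second acceptance clause (the $k_d$ threshold) never needs to be invoked in this class of admissions rules.
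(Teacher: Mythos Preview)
Your proof is correct and follows essentially the same route as the paper's: identify the step $i$ at which $x$ is considered, use feasibility for students to ensure $x\in X\setminus Y_{i-1}$, and then apply the hypothesis on $|Ch_{c_i}(\cdot)|$ to conclude that $c_i$ is filled to capacity. You have simply spelled out in more detail why $x\notin Y_{i-1}$ and why $\chd(X)_{c_i}=Ch_{c_i}(X\setminus Y_{i-1})$, points the paper leaves implicit.
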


\begin{proof}
Suppose that matching $X$ is feasible for students and $x\in X_d \setminus \chd(X)$. There exists $i\leq n$ such that $c_i = c(x)$. Since $X$ is feasible for students, $x\in X\setminus Y_{i-1}$ where $Y_{i-1}$ is as defined in the construction of $\chd$. Because $x\in X_d \setminus \chd(X)$, $x\notin Ch_{c_i}(X\setminus Y_{i-1})$. Then $\abs{Ch_{c_i}(X\setminus Y_{i-1})}=q_{c_i}$ by assumption, which implies that district admissions rule $\chd$ is acceptant.
\end{proof}

Next we study when district admissions rule $Ch_d$ has a completion that satisfies substitutability and LAD. Consider the following district admissions rule $Ch'_d$: For any set of contracts $X$,
\begin{center}
$Ch'_d(X)= Ch_{c_1}(X) \cup \ldots \cup Ch_{c_n}(X)$.
\end{center}

\begin{claim}\label{claim:completion}
Suppose that for every school $c_i$, $Ch_{c_i}$ satisfies substitutability and LAD. Then district admissions rule $Ch'_d$ is a completion of $Ch_d$, and it satisfies substitutability and LAD.
\end{claim}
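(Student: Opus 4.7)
The claim has three parts: (i) $Ch'_d$ is a completion of $Ch_d$, (ii) $Ch'_d$ satisfies substitutability, and (iii) $Ch'_d$ satisfies LAD. My plan is to dispose of (ii) and (iii) quickly and concentrate on (i), which is the main work. A preliminary fact I will invoke throughout is that substitutability together with LAD implies IRC for each $Ch_{c_i}$: if $x \notin Ch_{c_i}(Y)$, then substitutability gives $Ch_{c_i}(Y) \subseteq Ch_{c_i}(Y \setminus \{x\})$, while LAD gives the reverse cardinality inequality, forcing equality of the chosen sets.

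For substitutability of $Ch'_d$, the argument is immediate: if $x = (s, c_i) \in Ch'_d(Y)$ and $x \in X \subseteq Y$, then by construction $x \in Ch_{c_i}(Y)$, and substitutability of $Ch_{c_i}$ gives $x \in Ch_{c_i}(X) \subseteq Ch'_d(X)$. For LAD of $Ch'_d$, I will first observe that the sets $Ch_{c_1}(X), \ldots, Ch_{c_n}(X)$ are pairwise disjoint because contracts in $Ch_{c_i}(X)$ all specify school $c_i$, so $|Ch'_d(X)| = \sum_i |Ch_{c_i}(X)|$; applying LAD of each $Ch_{c_i}$ to any $X \subseteq Y$ then yields $|Ch'_d(X)| \leq |Ch'_d(Y)|$.

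For (i), I need to show $Ch'_d(X) = Ch_d(X)$ whenever $Ch'_d(X)$ is feasible for students. Fix such an $X$ and prove by induction on $i$ that $Ch_{c_i}(X \setminus Y_{i-1}) = Ch_{c_i}(X)$, where $Y_{i-1}$ is the set of contracts of already-chosen students as in the definition of $Ch_d$. Since $Ch_{c_i}(Z) = Ch_{c_i}(Z_{c_i})$, the comparison reduces to those contracts in $Y_{i-1}$ of the form $(s, c_i)$. The key observation is that any such $(s, c_i) \in Y_{i-1}$ corresponds to a student $s$ who already has a contract $(s, c_j) \in Ch_{c_j}(X) \subseteq Ch'_d(X)$ with $j < i$; if in addition $(s, c_i) \in Ch_{c_i}(X) \subseteq Ch'_d(X)$, then $Ch'_d(X)$ would contain two contracts of $s$, contradicting its feasibility for students. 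Thus every such contract is rejected by $Ch_{c_i}$ from $X$, so applying IRC one removal at a time gives $Ch_{c_i}(X \setminus Y_{i-1}) = Ch_{c_i}(X)$. The main subtlety---and the step that will require the most care---is this iterative use of IRC, because after each removal I must re-verify that the next contract to be removed is still rejected; this follows automatically because IRC leaves the chosen set unchanged at each step, so prior rejections persist. Taking the union over $i$ then yields $Ch_d(X) = Ch'_d(X)$, completing the proof.
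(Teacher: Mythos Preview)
Your proof is correct and follows essentially the same route as the paper: the same induction on $i$ to establish $Ch_{c_i}(X\setminus Y_{i-1})=Ch_{c_i}(X)$ via IRC, using feasibility of $Ch'_d(X)$ for students to rule out $(s,c_i)\in Ch_{c_i}(X)$ whenever $s$ was already chosen by an earlier school. The only differences are expository---you derive IRC directly from substitutability and LAD (the paper instead cites the equivalence with path independence), and you spell out the substitutability/LAD verification for $Ch'_d$ and the iterated IRC removals, whereas the paper leaves these as one-liners.
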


\begin{proof}
To show that $Ch'_d$ is a completion of $Ch_d$, suppose that $X$ is a set of contracts such that $Ch'_d(X)$ is feasible for students. By mathematical induction, we show that $Ch_{c_i}(X)=Ch_{c_i}(X\setminus Y_{i-1})$ for $i=1,\ldots,n$, where $Y_i$ is defined as above for $i>1$ and $Y_0=\emptyset$. The claim trivially holds for $i=1$. Suppose that it also holds for $1,\ldots,i-1$. We show the claim for $i$. Since $Ch'_d(X)$ is feasible for students, $Ch_{c_i}(X)$ and $Ch_{c_1}(X)\cup \ldots \cup Ch_{c_{i-1}}(X)$ do not have any contracts associated with the same student. Therefore, $Ch_{c_i}(X) \cap Y_{i-1} = \emptyset$. Since $Ch_{c_i}$ satisfies IRC, $Ch_{c_i}(X)=Ch_{c_i}(X\setminus Y_{i-1})$. As a result, $Ch_d(X)=Ch'_d(X)$, which completes the proof that $Ch'_d$ is a completion of $Ch_d$.

Since all school admissions rules satisfy substitutability and LAD, so does $Ch'_d$.
\end{proof}

All of the assumptions on school admissions rules stated in Claims \ref{claim:feasible}, \ref{claim:acceptant}, and \ref{claim:completion} are satisfied when school admissions rules are \df{responsive}: each school has a ranking of contracts associated with itself and, from  any given set of contracts, each school chooses contracts  with the highest rank until the capacity of the school is full or there are no more contracts left. Responsive admissions rules satisfy substitutability and LAD. Furthermore, for every school $c_i$, $\abs{Ch_{c_i}(X)}=\min\{q_{c_i},\abs{X_{c_i}}\}$.\footnote{See \citet{chayen17} for a characterization of responsive admissions rules using substitutability.} By the claims stated above, when school admissions rules are responsive, district admissions rule $\chd$ is feasible and acceptant, and it has a completion that satisfies substitutability and LAD.

Based on these results, we provide examples of district admissions rules that further satisfy the additional assumptions considered in different parts of our paper.

\subsection{District Admissions Rules Satisfying the Assumptions in Theorem \ref{thm:welfimprove2}}\label{ex:respecting}

We use the district admissions rule construction above and we further specify each school's admissions rule. Each school has a responsive admissions rule. If a student is initially matched with a school, then her contract with this school is ranked higher than contracts of students who are not initially matched with the school. 
As before, district admissions rule $\chd$ is feasible and acceptant, and it has a completion that satisfies substitutability and LAD.

\begin{claim}
District admissions rule $\chd$ respects the initial matching.
\end{claim}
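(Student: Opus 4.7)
The plan is to verify the definition of respecting the initial matching directly using the construction of $\chd$ as a sequential application of school-level responsive rules, combined with the tie-breaking that prioritizes contracts of students initially matched with each school.

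First I would fix an arbitrary student $s$ whose initial school $c$ lies in district $d$, an arbitrary matching $X$ that is feasible for students, and assume $(s,d,c)\in X$. I would let $i$ be the index with $c=c_i$ in the ordering used to define $\chd$, and aim to show $(s,d,c_i)\in Ch_{c_i}(X\setminus Y_{i-1})$, which immediately yields $(s,d,c_i)\in\chd(X)$ by construction.

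Next I would argue that $(s,d,c_i)\in X\setminus Y_{i-1}$. Because $X$ is feasible for students, the only contract in $X$ associated with $s$ is $(s,d,c_i)$ itself. The set $Y_{i-1}$ consists of contracts in $X$ held by students who have already been chosen by some school $c_j$ with $j<i$, but since $s$ has no contract with any such $c_j$ in $X$, the contract $(s,d,c_i)$ cannot belong to $Y_{i-1}$.

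Finally, I would invoke the structure of $Ch_{c_i}$. Because $c_i$ is responsive and ranks contracts of students initially matched with $c_i$ strictly above contracts of other students, and because the initial matching is feasible (so at most $q_{c_i}$ students have $c_i$ as their initial school), all contracts in $X\setminus Y_{i-1}$ held by students initially at $c_i$ fit within the capacity and are chosen by $Ch_{c_i}$. Since $(s,d,c_i)$ is one such contract, it is chosen, giving $(s,d,c_i)\in\chd(X)$. The only delicate point is the bookkeeping showing that $(s,d,c_i)$ survives the removal of $Y_{i-1}$, but this is immediate from student-feasibility of $X$; the rest of the argument is routine once the responsive, initial-matching-prioritizing structure of the school rules is unwound.
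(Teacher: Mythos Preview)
Your proposal is correct and follows essentially the same approach as the paper's proof. The paper argues more tersely that ``$c$ chooses $x$ from any set of contracts and $s$ does not have any other contract in $X$,'' while you unpack these two points explicitly: the feasibility-for-students argument showing $(s,d,c_i)$ survives removal of $Y_{i-1}$, and the capacity-plus-priority argument showing $Ch_{c_i}$ must select it.
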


\begin{proof}
Let $c$ be the initial school of student $s$ and $x=(s,d,c)$. By construction, for any matching $X$ that is feasible for students, $x\in X$ implies $x\in \chd(X)$ because $c$ chooses $x$ from any set of contracts and $s$ does not have any other contract in $X$. Therefore, $\chd$ respects the initial matching.
\end{proof}

\subsection{District Admissions Rules Satisfying the Assumptions in Theorem \ref{thm:balance}}\label{ex:rationed}

We modify the district admissions rule construction in Appendix \ref{section:genex}.
Each school has a ranking of contracts associated with itself.
When it is the turn of a school, it accepts contracts
that have the highest rank until the capacity of the school is full, or the number
of contracts chosen by the district is $k_d$, or there are no more contracts left.
The remaining contracts of a chosen student are removed.

District admissions rule $\chd$ is feasible because no school admits more students than its capacity and no student is admitted to more than one school.

\begin{claim}
District admissions rule $Ch_d$ is acceptant.
\end{claim}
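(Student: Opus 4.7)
My plan is to verify the definition of acceptance directly from the modified construction. Fix a matching $X$ that is feasible for students and a rejected contract $x = (s,d,c) \in X_d \setminus Ch_d(X)$; write $c = c_i$ for the school in the ordering used by $Ch_d$. The goal is to show that in $Ch_d(X)$, either $c_i$ has filled its capacity or the district has accepted at least $k_d$ students.

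First I would argue that $x$ survives to $c_i$'s turn, i.e., $x \in X \setminus Y_{i-1}$. Because $X$ is feasible for students, $x$ is the unique contract of $s$ in $X$, so no earlier school $c_j$ with $j < i$ can have selected $s$: no contract of $s$ even appears in $X_{c_j}$ for $j < i$. Consequently $s \notin$ (the students whose contracts form) $Y_{i-1}$, so $x \notin Y_{i-1}$, and when $c_i$'s turn arrives $x$ is among the contracts $c_i$ ranks.

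Next I would invoke the stopping rule. School $c_i$ processes its ranked list in decreasing rank and halts the moment one of three conditions first occurs: (i) $c_i$'s own capacity $q_{c_i}$ is attained, (ii) the running count of contracts chosen by the district reaches $k_d$, or (iii) the list is exhausted. Because $x$ is in the list yet is rejected, (iii) is excluded. In case (i), the number of contracts of $c_i$ in $Ch_d(X)$ equals $q_{c_i}$, which delivers the first bullet of acceptance. In case (ii), at least $k_d$ contracts have already been chosen by the district by the time $x$ is rejected.

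The only subtle point, and what I expect to be the main (albeit minor) obstacle, is converting case (ii) into a statement about $Ch_d(X)$, since $c_i$ need not be the last school in the order. Here I would observe that the district-wide running count is monotone nondecreasing across subsequent school turns, so each later school $c_j$ with $j > i$ finds condition (ii) already met at the very start of its turn and accepts no contract. Hence $|Ch_d(X)| \ge k_d$, and since $Ch_d(X)$ is feasible for students the number of students assigned to $d$ equals $|Ch_d(X)|$ and is therefore at least $k_d$, matching the second bullet of acceptance. This completes the verification.
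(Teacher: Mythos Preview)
Your proof is correct and follows essentially the same approach as the paper's: show that $x$ survives to school $c_i$'s turn because $X$ is feasible for students, then invoke the stopping rule to conclude that either $c_i$ filled its capacity or the district-wide count hit $k_d$. Your treatment is slightly more explicit than the paper's in spelling out why later schools cannot accept further contracts once the running count reaches $k_d$, but this is the same argument carried out in more detail rather than a different route.
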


\begin{proof}
To show acceptance, suppose that matching $X$ is feasible for students and $x\in X_d \setminus \chd(X)$.
There exists $i\leq n$ such that $c_i = c(x)$. Since $X$ is feasible for students, $x\in X\setminus Y_{i-1}$
where $Y_{i-1}$ is the set of all contracts in $X$ associated with students who are chosen by schools $c_1,\ldots,c_{i-1}$.
Because $x\in X_d \setminus \chd(X)$, $x$ is not chosen by $c_i$. Then, by construction,
either $c_i$ fills its capacity or the district admits $k_d$ students, which implies that $\chd$ is acceptant.
\end{proof}

\begin{claim}
District admissions rule $Ch_d$ has a completion that satisfies substitutability and LAD.
\end{claim}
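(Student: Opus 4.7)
The plan is to construct a completion $Ch'_d$ explicitly by running the same sequential procedure as $Ch_d$ but \emph{without} the step that removes other contracts of already-chosen students. For any $X\subseteq \mathcal{X}$, process schools $c_1,\ldots,c_n$ in the fixed order; when processing $c_i$, consider contracts in $X_{c_i}$ in $c_i$'s ranking and admit them one by one, continuing until either school $c_i$ reaches its capacity $q_{c_i}$, the running district total reaches $k_d$, or $X_{c_i}$ is exhausted. When $X$ is feasible for students, each student has at most one contract in $X$, so the removal step in the definition of $Ch_d$ is vacuous; hence $Ch_d(X)=Ch'_d(X)$ on such matchings, showing that $Ch'_d$ is a completion of $Ch_d$. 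The LAD property is immediate from the observation that $|Ch'_d(X)| = \min\bigl\{k_d,\; \sum_{i}\min\{q_{c_i}, |X_{c_i}|\}\bigr\}$, which is weakly monotone in $X$.

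The main obstacle is substitutability, which I would establish by recognizing $Ch'_d$ as a matroid-greedy choice rule. Let $\mathcal{I}=\{Y\subseteq \mathcal{X}_d : |Y_{c_i}|\leq q_{c_i}\text{ for all }i\text{ and }|Y|\leq k_d\}$. This family is the truncation at rank $k_d$ of the partition matroid induced by the school capacities, and the exchange axiom is verified by a pigeonhole across schools: if $|I|<|J|$ with $I,J\in\mathcal{I}$, then some $c_i$ has $|I_{c_i}|<|J_{c_i}|$, so any $y\in J_{c_i}\setminus I_{c_i}$ satisfies $I\cup\{y\}\in\mathcal{I}$. Introduce a linear order $\succ^*$ on contracts in district $d$ that ranks $c_i$-contracts above $c_j$-contracts whenever $i<j$ and, within a school, agrees with the school's internal ranking. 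Then $Ch'_d(X)$ coincides with the greedy-maximum independent set of $X$ in $\mathcal{I}$ with respect to $\succ^*$. Substitutability now follows from a standard property of matroid greedy: $x\in Ch'_d(Y)$ is equivalent to $x$ not lying in the matroid closure of $Y\cap\{z : z\succ^* x\}$, and monotonicity of closure under set inclusion yields that, for $x\in X\subseteq Y$, $x$ is also not in the closure of $X\cap\{z : z\succ^* x\}$, giving $x\in Ch'_d(X)$.
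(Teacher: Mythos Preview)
Your construction of $Ch'_d$ coincides with the paper's, and the overall plan (completion, then LAD, then substitutability) matches. There is, however, a small gap in the completion step. The definition of completion requires $Ch'_d(X)=Ch_d(X)$ whenever $Ch'_d(X)$ is feasible for students, for \emph{every} $X$ --- not only for those $X$ that are themselves feasible for students. Your argument covers only the latter case. The paper argues the correct direction (``suppose $Ch'_d(X)$ is feasible for students; then by construction $Ch'_d(X)=Ch_d(X)$''). The fix is short: if $Ch'_d(X)$ is feasible for students, then at each school $c_i$ the contracts in $Y_{i-1}\cap X_{c_i}$ (those belonging to students already chosen at earlier schools) cannot be among the contracts $Ch'_d$ selects at $c_i$, hence lie strictly below the cutoff in $c_i$'s ranking; removing them therefore does not change the prefix chosen, and an induction over $i$ gives $Ch_d(X)=Ch'_d(X)$.

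Your treatment of substitutability is correct and takes a genuinely different route from the paper. The paper argues directly and combinatorially: it tracks the running district total and shows that, for $X\subseteq Y$ with $x\in Ch'_d(Y)$, the number of contracts chosen before school $c(x)$ is processed is weakly smaller under $X$ than under $Y$, so the cutoff at $c(x)$ is at least as permissive under $X$ and $x$ is again chosen. You instead recognise $Ch'_d$ as the greedy algorithm on the truncated partition matroid $\mathcal{I}$ with respect to the lexicographic order $\succ^*$, and invoke the standard characterisation ``$x$ is greedily chosen from $Y$ iff $x\notin\operatorname{cl}(\{z\in Y:z\succ^* x\})$'' together with monotonicity of matroid closure. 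This is cleaner and makes substitutability an instance of a general principle (greedy choice over any matroid with respect to any linear order is substitutable), at the cost of importing matroid machinery that the paper's self-contained counting argument avoids. Your LAD argument via the formula $|Ch'_d(X)|=\min\bigl\{k_d,\sum_i\min(q_{c_i},|X_{c_i}|)\bigr\}$ is precisely the rank function of the matroid $\mathcal{I}$ and is correct; the paper instead argues LAD directly by comparing, school by school, the numbers of contracts chosen from $X$ and from $Y\supseteq X$.
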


\begin{proof}
First, we construct a completion of $\chd$. Define the following district admissions rule: given a set of contracts $X$, when it is the turn
of a school, it chooses from all the contracts in $X$. Each school chooses contracts using the same priority order until the school
capacity is full, or the district has $k_d$ contracts, or there are no more contracts left. Denote this admissions rule by $Ch'_d$. Suppose
that $Ch'_d(X)$ is feasible for students. Then, by construction, $Ch'_d(X)=\chd(X)$. Therefore, $Ch'_d$ is a completion of $\chd$.

Next, we show that $\chd'$ satisfies LAD. 
Suppose that $Y\supseteq X$. 
Every school $c_{i}$ chooses weakly more contracts from $Y$ than $X$ unless the number of contracts chosen from $Y$ by the district reaches $k_d$. Since the number of chosen contracts from $X$ cannot exceed $k_d$ by construction, $\chd'$ satisfies LAD.

Finally, we show that $\chd'$ satisfies substitutability. Suppose that $x\in X \subseteq Y$ and $x\in \chd'(Y)$. Therefore, the number of
contracts chosen from $Y$ by schools preceding $c(x)$ is strictly less than $k_d$. This implies that the number of contracts
chosen from $X$ by schools preceding $c(x)$ is weakly less than this number as weakly more contracts are chosen by schools preceding school $c(x)$ in $Y$ than $X$.
As a result, for school $c(x)$, weakly more contracts can be chosen from $X$ than $Y$.

The ranking of contract $x$ among $Y$ in the ranking of school $c(x)$ is high enough that it is chosen from set $Y$. Therefore, the ranking of contract $x$ among $X$ in the ranking of school $c(x)$ must be high enough to be chosen from set $X$ because weakly more contracts are chosen from $X$ than $Y$ for school $c(x)$.
\end{proof}

Furthermore, by construction, district admissions rule $\chd$ never chooses more than $k_d$ students. Therefore, it is also rationed.

\subsection{District Admissions Rules Satisfying the Assumptions in Theorem \ref{thm:diversity}}\label{ex:reserve}

A profile of district admissions rules can accommodate unmatched students by \emph{reserving} seats for different types of students:

\begin{definition}
Let $c$ be a school in district $d$. A district admissions rule $\chd$ has a \df{reserve} of $r_c^t$ for type-$t$ students at school $c$ if, for any feasible matching $X$ that does not have any contract associated with type-$t$ student $s$, if $\abs{\{x\in X_c| \tau(s(x))=t\}}<r^t_c$, then $x=(s,d,c)$ satisfies $x \in \chd(X\cup \{x\})$.
\end{definition}

A reserve for a student type at a school $c$ guarantees space for this type at school $c$. 
Therefore, when a student is unmatched at a feasible matching and the reserve for her type is not yet filled at a school, the district will accept this student at that school if she applies to it.

\begin{claim}\label{claim:reserve}
Suppose that districts have admissions rules with reserves such that $\sum_{c} r_c^t=k^t$ for every type $t$. Then the profile of
district admissions rules accommodates unmatched students.
\end{claim}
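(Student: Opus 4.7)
The plan is a counting argument based on the pigeonhole principle applied to the reserves. Fix an arbitrary student $s$ of type $t$ and a feasible matching $X$ in which $s$ is unmatched. I want to exhibit a contract $x = (s, d, c)$ such that $x \in Ch_d(X \cup \{x\})$, and the natural candidate is a school $c$ whose type-$t$ reserve is not yet filled under $X$. The definition of reserves then delivers acceptance immediately, provided $X$ does not already contain a contract for $s$ (which follows from $s$ being unmatched in $X$).

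The heart of the argument is the pigeonhole step. Since $X$ is feasible (in particular feasible for students) and $s$ is unmatched in $X$, the number of type-$t$ students appearing in $X$ is at most $k^t - 1$. On the other hand, by hypothesis,
\[
\sum_{c \in \mathcal{C}} r_c^t \;=\; k^t \;>\; k^t - 1 \;\geq\; \sum_{c \in \mathcal{C}} \bigl|\{x \in X_c \mid \tau(s(x)) = t\}\bigr|.
\]
Hence there must exist some school $c^{\ast}$ for which
\[
\bigl|\{x \in X_{c^{\ast}} \mid \tau(s(x)) = t\}\bigr| \;<\; r_{c^{\ast}}^{\,t}.
\]

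Now I would apply the reserve property at $c^{\ast}$ with district $d^{\ast} = d(c^{\ast})$. Because $s$ is unmatched in $X$, the matching $X$ contains no contract associated with $s$; together with the strict inequality above, the definition of a reserve yields $x^{\ast} := (s, d^{\ast}, c^{\ast}) \in Ch_{d^{\ast}}(X \cup \{x^{\ast}\})$. Since $s$ and $X$ were arbitrary, the profile $(Ch_d)_{d \in \mathcal{D}}$ accommodates unmatched students, completing the proof.

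There is no real obstacle here: the only subtlety is making sure the hypotheses of the reserve property are literally satisfied (no contract for $s$ already in $X$, and the reserve not yet filled at $c^{\ast}$), both of which follow immediately from $s$ being unmatched and the pigeonhole inequality. I would state the argument in this order to keep the logic transparent.
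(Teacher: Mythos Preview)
Your proof is correct and follows essentially the same approach as the paper's own proof: pigeonhole on the type-$t$ reserves to find a school $c^\ast$ with an unfilled reserve, then invoke the definition of reserves. The paper's version is terser (it simply asserts the existence of such a school ``because $\sum_c r_c^t = k^t$''), while you spell out the counting inequality explicitly, but the substance is identical.
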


\begin{proof}
Suppose that student $s$ is unmatched at a feasible matching $X$. Let
$t$ be the type of student $s$. Then there exists a school $c$ such that the number of
type-$t$ students in $c$ at $X$ is strictly less than $r_{c}^t$ because $\sum_c r_c^t=k^t$.
By definition of reserves, $x=(s,c)$ satisfies $x\in Ch_{d(c)}(X \cup \{x\})$.
\end{proof}

A district can have type-specific reserves at its schools in different ways. In the rest of this subsection, we use school
admissions rules with reserves introduced by \cite{hayeyi13}
to construct a fairly general
 example in which a district has schools with type-specific reserves.  Let $r_{c}^t$ be the number of seats reserved by school $c$ for type-$t$ students. Suppose that the type-specific ceilings for schools are given and that they satisfy the assumptions in Section \ref{sec:diversity}. 
Assume that, for every district $d$, $\sum_{c: d(c)=d} \sum_t r_c^t=k_d$, $\sum_c r_c^t=k^t$ and, for every type $t$ and school $c$, $r_c^t\leq q_c^t$. Furthermore, assume that $\sum_{t} r_{c}^t \leq q_{c}$ for every school $c$.

Consider the following district admissions rule for district $d$. Schools are ordered as  $c_1,c_2,\dots, c_n$.
Each school has a ranking over contracts associated with it
and a linear order over student types. First, all schools choose contracts for their reserve seats according to the order
$c_1,c_2,\dots, c_n$. When it is the turn of school $c_i$, all
contracts associated with students whose contracts were previously chosen are
removed. School $c_i$ chooses contracts for its reserved seats 
so that, for every type, either reserved seats
are filled or there are no more contracts associated with  students of that type remaining. 
Then all schools choose contracts for their empty seats in order.
When it is the turn of school $c_i$, all contracts of previously chosen students are
removed. School $c_i$ chooses from the remaining contracts in order. When a contract of a
type-$t$ student is considered, this contract is chosen unless the school's capacity is filled or
its type-$t$ ceiling is filled or the district has $k_d$ contracts. Denote this district
admissions rule by $\chd$.

District admissions rule $\chd$ is feasible because a student cannot have more than one contract
and a school cannot have more contracts than its capacity at any chosen set of contracts.
It is also weakly acceptant and rationed by construction. Furthermore, for every type $t$ and
school $c$, the district cannot admit more than $q_c^t$ type-$t$ students at $c$, so it has a school-level type-specific ceiling of $q_c^t$ for type-$t$ students and school $c$.

\begin{claim}
District admissions rule $\chd$ has a completion that satisfies substitutability and LAD.
\end{claim}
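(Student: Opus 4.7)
The plan is to construct a completion $\chd'$ that mirrors $\chd$ but without removing contracts of students already chosen: in phase~1, each school $c_i$ considers $X_{c_i}$ in isolation and, for each type $t$, picks the top (up to) $r_{c_i}^t$ type-$t$ contracts by its priority; in phase~2, the schools $c_1,\dots,c_n$ are processed in order, and each $c_i$ walks down its priority list on $X_{c_i}$ minus $c_i$'s own phase-1 picks, adding each contract unless $c_i$'s capacity $q_{c_i}$, its type-$\tau(x)$ ceiling $q_{c_i}^{\tau(x)}$, or the running district total has reached $k_d$.

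To verify $\chd'$ completes $\chd$, I would fix $X$ with $\chd'(X)$ feasible for students and argue by induction over the step order of $\chd$ (phase~1 of $c_1,\dots,c_n$, then phase~2 of $c_1,\dots,c_n$) that the two rules make identical choices at each step. Feasibility of $\chd'(X)$ forces the current school's $\chd'$-picks to contain no student already chosen, so the contracts $\chd$ deletes are contracts that $\chd'$ would not pick anyway, and an IRC-style argument applied stepwise yields equality. For LAD, observe that each school's internal rule inside $\chd'$---reserves by type followed by a priority sweep under capacity and ceilings---produces a set whose cardinality is monotone in the input; summing across schools and capping by $k_d$, the quantity $|\chd'(X)|$ is monotone in $X$.

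For substitutability, take $x=(s,c_i)\in\chd'(Y)$ with $x\in X\subseteq Y$. The internal rule at $c_i$ is itself substitutable (a reserves-plus-ceilings rule of the form studied in the controlled school choice literature), so $x$ is among the contracts $c_i$ would pick on $X_{c_i}$. Mirroring the argument in Appendix~\ref{ex:rationed}, the schools preceding $c_i$ pick weakly fewer contracts on $X$ than on $Y$, so the remaining district-cap budget at $c_i$'s turn is weakly larger on $X$; combined with the fact that $x$'s priority position within the ordering used to fill $c_i$'s allotment can only move up when the input shrinks, $x$ is still selected on $X$ within the available budget. The main obstacle is precisely this phase-2 step: the global cap $k_d$ couples the schools, and the interaction of type-specific reserves and ceilings can cause $c_i$'s phase-1 picks to differ between $X$ and $Y$, so one must rule out that a contract promoted into $c_i$'s phase-1 on $X$ consumes the capacity or type-ceiling room needed for $x$---a type-by-type comparison using substitutability of the internal rule handles this.
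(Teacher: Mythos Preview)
Your completion $\chd'$ (run the same two-phase procedure but without deleting contracts of already-chosen students) and its verification are exactly the paper's. Your LAD argument is in fact more direct than the paper's contradiction-based count comparison: once one checks the identity $|\chd'(X)|=\min\bigl\{k_d,\,\sum_{c:d(c)=d} f_c(X_c)\bigr\}$ with $f_c(X_c)=\min\bigl\{q_c,\,\sum_t\min\{q_c^t,|X^t_c|\}\bigr\}$, monotonicity is immediate---though you should state and justify this identity rather than just assert ``sum and cap.''

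The substitutability sketch is on the right track but leaves the central step unproved. Two concrete omissions. First, the budget inequality (that the number of contracts chosen before $c_i$'s phase-2 step under $Y$ is at least that under $X$) needs the prior fact that this number under $Y$ is strictly below $k_d$; that in turn requires $x$ to be chosen in phase~2 under $Y$, which holds because $x\notin\chd'(X)$ forces $x$ outside the top $r^{\tau(x)}_{c_i}$ of $X^{\tau(x)}_{c_i}$ and hence outside that of $Y^{\tau(x)}_{c_i}$---you never state or use this. Second, ``schools preceding $c_i$'' is the wrong bookkeeping: before $c_i$'s phase-2 step, \emph{every} school's phase-1 picks (including $c_j$ with $j>i$) already count toward the running district total, and the paper treats these separately from the phase-2 picks of $c_1,\dots,c_{i-1}$. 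Your closing line (``a type-by-type comparison using substitutability of the internal rule handles this'') only gestures at what the paper carries out: it verifies that at $c_i$'s phase-2 step under $X$, the pool of remaining contracts is a subset of that under $Y$, while the remaining type-ceiling room, capacity room, and district budget are each weakly larger; those three comparisons together are what force $x$ to be picked. Without making them explicit, the substitutability argument is incomplete.
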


\begin{proof}
For any set of contracts $X$, school $c$, and type $t$, let $X_c^t$ denote the set of all contracts in $X$ that are associated with school $c$ and  type-$t$ students.

Consider the construction of $\chd$ above, but modify it by not removing  contracts of students who are chosen
previously. Denote this district admissions rule by $\chd'$. To show that $\chd'$ is a completion of $\chd$, consider a set of contracts $X$ and suppose that
$\chd'(X)$ is feasible for students. Since the only difference in the constructions of $\chd$ and
$\chd'$ is the removal of contracts of previously chosen students, it must be that $\chd'(X)=\chd(X)$. Therefore,
$\chd'$ is a completion of $\chd$.

To prove substitutability of $\chd'$, suppose, for contradiction, that there exist sets of contracts $X$ and $Y$ with $X \subseteq Y$
and a contract $x \in X$ such that  $x \in \chd'(Y) \setminus \chd'(X)$. Let $s$ and $c$ be such that $x=(s,c)$ and $t=\tau(s)$.
First, note that $|X^t_c|>r_c^t$ because $x \not\in \chd'(X)$. Since $Y \supseteq X$,
$|Y^t_c| \ge |X^t_c|>r_c^t$ is implied. Therefore, it is after all schools in $d$ have chosen contracts based on their reserves in the
algorithm describing $\chd'$ that contract $x$ is chosen by $\chd'$ given $Y$.  Let $n(X)$ and $n(Y)$ be the numbers of
contracts that have been chosen by all schools before the step (call it step $\kappa_c$) at  which school $c$ chooses students
beyond its reserve under $X$ and $Y$, respectively. Because $x \in \chd'(Y)$, it follows that $n(Y)<k_d$. Therefore, for each
school $c'$, the number of contracts chosen by $c'$ before step $\kappa_c$ under $Y$ is weakly larger than those under $X$,
which we prove as follows:

\begin{itemize}
\item Suppose that school $c'$ is processed after school $c$ in the algorithm deciding $\chd$. Then, by step $\kappa_c$, $c'$ is matched with students of each type $t'$ only up to its type-$t'$ reserve. More formally, the numbers of type-$t'$ students matched to $c'$ are equal to $\min\{r^{t'}_{c'}, |X^{t'}_{c'}|\}$ and $\min\{r^{t'}_{c'}, |Y^{t'}_{c'}|\}$ under $X$ and $Y$, respectively. Obviously, the latter expression is no smaller than the former expression.
\item Suppose that school $c'$ is processed before school $c$ in the algorithm deciding $\chd$. Recall that $n(Y)<k_d$. Therefore, for school $c'$, it is either (i) as many as $q_{c'}$ students are matched to $c'$ under $\chd(Y)$, or (ii) for each type $t'$, the number of type-$t'$ students matched to $c'$ in $Y$ is $\min\{q^{t'}_{c'},|Y^{t'}_{c'}|\}$. In case (i), the desired conclusion follows trivially because, given any set of contracts, the number of students matched to $c'$ cannot exceed $q_{c'}$. For case (ii), under $X$, the number of type-$t'$ students matched to $c'$ cannot exceed $\min\{q^{t'}_{c'},|X^{t'}_{c'}|\} \le \min\{q^{t'}_{c'},|Y^{t'}_{c'}|\}$. Summing up across all types, we obtain the desired conclusion.
\end{itemize}

Thus $n(X) \le n(Y)$, so $k_d-n(X) \ge k_d -n(Y)$. Now, in step $\kappa_c$, school $c$ will choose all the applications until either
the total number of contracts chosen reaches $k_d$, or the total number of contracts chosen at $c$ reaches $q_c$, or the
number of  contracts chosen at $c$ that are associated with type $t$ students reaches $q^t_c$. Given the previous fact
that $k_d-n(X) \ge k_d -n(Y)$, the fact that $Y \supseteq X$, and the fact that $x$ is chosen by $c$ in step $\kappa_c$ under
$Y$, it has to be the case that $x$ is also chosen by $c$ under $X$ in step $\kappa_c$ or before. We prove this as follows:

\begin{itemize}
\item If $|X^t_c| \le r^t_c$, then $x$ is chosen in the reserve stage by construction.

\item  Let $|X^t_c| > r^t_c$. First note that at step $\kappa_c$ under $X$ and $Y$, for each type $t$, there are fewer contracts
associated with school $c$ and type-$t$ students that remain to be processed under $X$ than under $Y$ ($X \subseteq Y$, and there is no contract in $X=X \cap Y$ that is processed in the reserve stage under $Y$ but not under $X$), so the subset of $X$ that should be processed in $\kappa_c$ is a subset of the corresponding subset of $Y$. Moreover, the remaining number of contracts to
be chosen before reaching the ceiling at $c$ for each type $t$ in step $\kappa_c$ is weakly larger at $X$ than at $Y$ by the definition of the reserve stage. Finally, as argued above, the total number of students in the district who can still be chosen at $\kappa_c$ is weakly larger under $X$ than at $Y$, so whenever $x$ is chosen under $Y$ in this stage, $x$ is chosen under $X$ in this stage or the reserve stage.
\end{itemize}

This is a contradiction to the assumption that $x\notin Ch'_d(X)$.

To show that $\chd'$ satisfies LAD, suppose, for contradiction, that there exist two sets of contracts $X, Y$ with $X\subseteq Y$ and $|\chd'(Y)|<|\chd'(X)|$.
Then, because $\chd'(Y)=\bigcup_{c: d(c)=d} (\chd'(Y) \cap Y_c)$ and $\chd'(X)=\bigcup_{c: d(c)=d} (\chd'(X) \cap X_c)$, there exists a school $c$ with $d(c)=d$ such that
\begin{align}
|\chd'(Y) \cap Y_c|<|\chd'(X) \cap X_c|.\label{load-inequality1}
\end{align}
 Fix such $c$ arbitrarily. Next, note that
\begin{align*}
|\chd'(Y)|<|\chd'(X)| \le k_d,\\
|\chd'(Y) \cap Y_c|<|\chd'(X) \cap X_c| \le q_c,
\end{align*}
where the first line follows because $\chd'$ is rationed by construction, and the second line also holds by construction of $\chd'$.
Therefore,
\begin{align}
|\chd'(Y) \cap Y_c^t| & = \min \{|Y_c^t|,q^t_c\} \notag \\
& \ge \min \{|X_c^t|,q^t_c\} \notag \\
& = |\chd'(X) \cap X_c^t|, \label{load-inequality2}
\end{align}
 for each type $t \in \T$. Because $\chd'(Y) \cap Y_c = \cup_{t \in \T} (\chd'(Y) \cap Y_c^t)$ and $\chd'(X) \cap X_c= \cup_{t \in \T}
 (\chd'(X) \cap X_c^t)$, inequality (\ref{load-inequality2}) and the fact $Y_c^t \cap Y_c^{t'}=X_c^t \cap X_c^{t'}=\emptyset$ for any pair of types $t, t'$ with $t \neq t'$ imply
\begin{align*}
|\chd'(Y) \cap Y_c| \ge |\chd'(X) \cap X_c|,
\end{align*}
which contradicts inequality (\ref{load-inequality1}).
\end{proof}

\subsection{District Admissions Rules Satisfying the Assumptions in Theorem \ref{thm:welfimprove}}\label{ex:favor}
Consider the district admissions rule construction in Appendix \ref{section:genex}. In this example, let each school use a priority ranking in such a way that all contracts of students from district $d$ are ranked higher than the other contracts.

\begin{claim}
District admissions rule $\chd$ favors own students.
\end{claim}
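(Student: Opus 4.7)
The plan is to prove the containment $Ch_d(X) \supseteq Ch_d(X')$, where $X' \equiv \{x \in X : d(s(x)) = d\}$ denotes the contracts of own-district students, by a strong induction on the school index $i \in \{1,\ldots,n\}$ in the fixed order $c_1,\ldots,c_n$ used to define $Ch_d$. Write $A_i \equiv X_{c_i} \setminus (Y_{i-1})_{c_i}$ and $A'_i \equiv X'_{c_i} \setminus (Y'_{i-1})_{c_i}$ for the subsets processed by $c_i$ in the computations of $Ch_d(X)$ and $Ch_d(X')$ respectively, and set $T_i \equiv Ch_{c_i}(A_i)$, $T'_i \equiv Ch_{c_i}(A'_i)$. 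The goal is to show $T_i \supseteq T'_i$ for every $i$, from which $Ch_d(X) = \bigcup_i T_i \supseteq \bigcup_i T'_i = Ch_d(X')$ follows immediately.

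The inductive hypothesis I would maintain is twofold: (a) the sets of own-district students selected among $c_1,\ldots,c_{i-1}$ coincide in the two computations, and (b) $T_j \supseteq T'_j$ for every $j < i$. Part (a) implies the key bookkeeping identity $Y_{i-1} \cap X' = Y'_{i-1}$: since every contract in $X'$ is associated with an own-district student, restricting the full removal set $Y_{i-1}$ to $X'$ only keeps contracts of own-district students chosen earlier, which by (a) are exactly those chosen in the $X'$ computation. Consequently $A_i \cap X' = A'_i$, and in particular $A'_i \subseteq A_i$. For the step at $c_i$ I would then use that each school is responsive with all own-district contracts ranked strictly above every out-of-district contract, and split into two cases. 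If $|A'_i| \geq q_{c_i}$, then $c_i$ fills its capacity from own-district contracts in both computations, taking the same top $q_{c_i}$ elements of $A'_i$, so $T_i = T'_i$ and both pick exactly the same own-district students at round $i$. If $|A'_i| < q_{c_i}$, then $c_i$ selects all of $A'_i$ in both computations and may additionally draw out-of-district contracts when processing $A_i$, so again $T_i \supseteq T'_i$ and the own-district selections agree.

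The step I expect to require the most care is verifying $Y_{i-1} \cap X' = Y'_{i-1}$. The definition of $Y_{i-1}$ removes \emph{every} $X$-contract of each previously chosen student, not merely the contract that was accepted, so I must be explicit that (i) students selected in the $X$ computation of earlier rounds who are not own-district contribute no elements to $Y_{i-1} \cap X'$, while (ii) the own-district ones contribute precisely their $X'$-contracts, which by the inductive hypothesis are exactly the contracts that the $X'$ computation removes at the corresponding rounds. Everything else---the two-case analysis for $c_i$, and the closure of the induction by showing the round-$i$ own-district selections agree---is a direct consequence of responsiveness together with the own-district-first priority built into each school's ranking.
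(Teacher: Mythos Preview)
Your induction is correct, but you are working much harder than necessary because you never invoke the hypothesis that $X$ is feasible for students, which is built into the definition of ``favors own students.'' The paper's proof uses precisely this: since each student has at most one contract in $X$, a student whose contract is chosen by some $c_j$ with $j<i$ has \emph{no} contract associated with $c_i$ at all, so removing $Y_{i-1}$ does nothing to $X_{c_i}$. Hence $Ch_d(X)=\bigcup_i Ch_{c_i}(X_{c_i})$ decomposes school by school with no cross-round interaction, and likewise $Ch_d(X')=\bigcup_i Ch_{c_i}(X'_{c_i})$. The containment $Ch_{c_i}(X_{c_i})\supseteq Ch_{c_i}(X'_{c_i})$ is then immediate from responsiveness with own-district contracts ranked on top, and taking the union finishes the argument in one line.

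Your approach, by contrast, tracks $Y_{i-1}$ and $Y'_{i-1}$ through an induction and proves the bookkeeping identity $Y_{i-1}\cap X'=Y'_{i-1}$ at each step. This is sound, and would in fact establish the containment even for $X$ not feasible for students---but that extra generality is not needed here and costs you the induction machinery. The place you flagged as ``requiring the most care'' simply evaporates once you use feasibility for students: both $Y_{i-1}$ and $Y'_{i-1}$ become irrelevant to the round-$i$ computation.
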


\begin{proof}
Suppose that $X$ is feasible for students. When it is the turn of school $c_i$, it considers $X_{c_i}$. Therefore, $\chd(X)=Ch_{c_1}(X_{c_1})\cup \ldots \cup Ch_{c_k}(X_{c_k})$. Furthermore, $Ch_{c_i}(X_{c_i})\supseteq Ch_{c_i}(\{x\in X_{c_i}|d(s(x))=d\})$ by construction. Taking the union of all subset inclusions yields $Ch_d(X)\supseteq Ch_{d}(\{x\in X_d|d(s(x))=d\})$. Therefore, $\chd$ favors own students.
\end{proof}


\section{An Example for Section \ref{sec:diversity}} \label{subsec:diversityexample}
In this section, we provide an example in which the conditions on the admissions rules
stated in Theorem \ref{thm:diversity} are satisfied and, therefore, SPDA satisfies the diversity policy.

Consider a problem with two school
districts, $d_{1}$ and $d_{2}$. District $d_{1}$ has school $c_{1}$ with
capacity three and school $c_{2}$ with capacity two. District $d_{2}$ has
school $c_{3}$ with capacity two and school $c_{4}$ with capacity one. There
are seven students: students $s_{1}$, $s_{2}$, $s_{3}$, and $s_{4}$ are from
district $d_{1}$, whereas students $s_{5}$, $s_{6}$, and $s_{7}$ are from
district $d_{2}$. Students $s_{1}$, $s_{5}$, $s_{6}$, and $s_{7}$ have type
$t_{1}$ and $s_{2}$, $s_{3}$, and $s_{4}$ have type $t_{2}$. To construct
district admissions rules that satisfy the properties stated in Theorem
\ref{thm:diversity}, let us first specify type-specific ceilings and calculate
implied floors and implied ceilings. Suppose that%
\begin{eqnarray*}
q_{c_{1}}^{t_{1}} =1,\text{ }q_{c_{1}}^{t_{2}}=2,\text{ }
q_{c_{2}}^{t_{1}}=1,\text{ }q_{c_{2}}^{t_{2}}=1, \\
\text{ }q_{c_{3}}^{t_{1}} =2,\text{ }q_{c_{3}}^{t_{2}}=1,\text{ }
q_{c_{4}}^{t_{1}}=1,\text{ }q_{c_{4}}^{t_{2}}=1.
\end{eqnarray*}

\noindent
These yield {the following implied floors},{\footnote{
To see this, note that there cannot be zero type-$t_{1}$ students in
$d_{1}$ (otherwise not all type-$t_{1}$ students can be matched
since there are only three spaces available for type-$t_{1}$ students in
$d_{2}$). If there is one type-$t_{1}$ student in $d_{1}$,
there has to be three type-$t_{1}$ students in $d_{2}$, which
implies there cannot be any type-$t_{2}$ students in $d_{2}$, and
this implies there will be three type-$t_{2}$ students in $d_{1}$.
If there are two type-$t_{1}$ students in $d_{1}$, there have to be
two type-$t_{2}$ students in $d_{2}$, which implies there is one
type-$t_{2}$ student in $d_{2}$, and this implies there will be two
type-$t_{2}$ students in $d_{1}$. By noting these minimum and
maximum numbers, we obtain the implied reserves and implied ceilings
accordingly. These bounds are achievable because it is feasible to
have (i) one type-$t_{1}$ student in $d_{1}$,
three type-$t_{1}$ students in $d_{2}$, zero type-$t_{2}$ students in $d_{2}$,
and three type-$t_{2}$ students in $d_{1}$, and (ii) two type-$t_{1}$
students in $d_{1}$, two type-$t_{2}$ students in $d_{2}$, one type-$t_{2}$
student in $d_{2}$, and two type-$t_{2}$ students in $d_{1}$.}}
\begin{equation*}
\hat{p}_{d_{1}}^{t_{1}}=1,\text{ }\hat{p}_{d_{1}}^{t_{2}}=2,\text{ }\hat{p}%
_{d_{2}}^{t_{1}}=2,\text{ }\hat{p}_{d_{2}}^{t_{2}}=0,
\end{equation*}%
and {implied ceilings }
\begin{equation*}
\hat{q}_{d_{1}}^{t_{1}}=2,\text{ }\hat{q}_{d_{1}}^{t_{2}}=3,\text{ }\hat{q}%
_{d_{2}}^{t_{1}}=3,\text{ }\hat{q}_{d_{2}}^{t_{2}}=1.
\end{equation*}

For any type $t$ and two districts $d$ and $d^{\prime }$, denote {$\hat{q}%
_{d}^{t}/k_{d}-\hat{p}_{d^{\prime }}^{t}/k_{d^{\prime }}$ } by $\Delta
_{d,d^{\prime }}^{t}$. Using the implied floors and ceilings above, we get:
\begin{align*}
\Delta _{d_{1},d_{2}}^{t_{1}}& =2/4-2/3=-1/6, \\
\Delta _{d_{2},d_{1}}^{t_{1}}& =3/3-1/4=3/4, \\
\Delta _{d_{1},d_{2}}^{t_{2}}& =3/4-0/3=3/4,\text{ and} \\
\Delta _{d_{2},d_{1}}^{t_{2}}& =1/3-2/4=-1/6.
\end{align*}%
Hence, these type-specific ceilings satisfy the condition stated in Theorem %
\ref{thm:diversity} that $\Delta _{d,d^{\prime }}^{t}\leq \alpha $ for $%
\alpha =0.75$.

We construct district admissions rules that have type-specific ceilings, and are rationed and weakly acceptant.
Furthermore, the profile of district admissions rules accommodates unmatched students. As in
Appendix \ref{ex:reserve}, we consider type-specific reserves (as detailed below,
we first fill in the reserves while applying the district admissions rule that uses type-specific reserves). Let us
consider the reserves for schools as follows:

\begin{equation*}
r_{c_{4}}^{t_{2}}=0,\text{ and }r_{c}^{t}=1\text{ for all other }c,t.
\end{equation*}

Consider the following district admissions rule. 
 For each district, schools
and student types are ordered and each school has a linear order over students.
First, schools choose contracts for their reserved seats following the master
priority list until the reserves are filled or all the applicants of the
relevant type are processed. Then, following the given order over schools
and student types, schools choose from the remaining contracts following the
linear order over students in order to fill the rest of their seats until
the school capacity is filled, or the district has $k_{d}$ contracts, or
district type-specific ceilings are filled, or there are no more remaining
contracts.\footnote{
In Appendix \ref{ex:reserve}, we provide a class of admissions rules that
include the one we consider here. These admissions rules satisfy all of the
assumptions that we make in this section.}

To give a more concrete example, suppose that the linear order over students
for each school is as follows: $s_{1}\succ s_{2}\succ s_{3}\succ s_{4}\succ
s_{5}\succ s_{6}\succ s_{7}$ and schools and types are ordered from the
lowest index to the highest. Then, for example, we have the following:
\begin{equation*}
Ch_{d_{1}}(\{(s_{1},c_{1}),(s_{2},c_{1}),\left( s_{3},c_{1}\right) ,\left(
s_{4},c_{1}\right)
,(s_{5},c_{2}),(s_{6},c_{2})\})=%
\{(s_{1},c_{1}),(s_{2},c_{1}),(s_{3},c_{1}),(s_{5},c_{2})\}.
\end{equation*}%
Let us elaborate on how we determine the chosen set of contracts in the
above case. School $c_{1}$ considers contracts with students $s_{1}$, $s_{2}$, $s_{3}$,
and $s_{4}$. Among these students, $c_{1}$ accepts $s_{1}$ for its
reserve for type $t_{1}$, and $s_{2}$ for its reserve for type $t_{2}$.
Moreover, school $c_{2}$ considers contracts with students $s_{5}\ $and $%
s_{6}$. Among these students students, $c_{2}$ accepts $s_{5}$ for its
reserve for type $t_{2}$. For the remainder of seats, $s_{3}$ is accepted by
$c_{1}$ since (i) $c_{1}$'s type $t_{2}$ ceiling is not full, (ii) $c_{1}$'s
capacity is not full, and (iii) district $d_{1}$ has only three accepted
contracts at this point. Next, $s_{4}$ and $s_{5}$ are rejected since $d_{1}$
has accepted four contracts at this point. This results in the chosen set of
contracts presented above.

To illustrate the SPDA outcomes, consider
student preferences given by the following table.

\begin{equation*}
\begin{tabular}{lllllll}
$\underline{P_{s_{1}}}$ & $\underline{P_{s_{2}}}$ & $\underline{P_{s_{3}}}$
& $\underline{P_{s_{4}}}$ & $\underline{P_{s_{5}}}$ & $\underline{P_{s_{6}}}$
& $\underline{P_{s_{7}}}$ \\
$c_{2}$ & $c_{3}$ & $c_{4}$ & $c_{1}$ & $c_{1}$ & $c_{1}$ & $c_{2}$ \\
$\vdots $ & $c_{1}$ & $c_{2}$ & $c_{3}$ & $c_{2}$ & $c_{4}$ & $c_{3}$ \\
& $\vdots $ & $\vdots $ & $c_{2}$ & $c_{3}$ & $c_{3}$ & $\vdots $ \\
&  &  & $c_{4}$ & $c_{4}$ & $c_{2}$ &
\end{tabular}
\
\end{equation*}

SPDA results in the following outcome:
\begin{equation*}
\{(s_{1},c_{2}),(s_{2},c_{3}),(s_{3},c_{2}),(s_{4},c_{1}),(s_{5},c_{1}),(s_{6},c_{4}),(s_{7},c_{3})\}.
\end{equation*}

District $d_{1}$ is assigned two students of both types and
district $d_{2}$ is assigned two type-$t_{1}$ students and one type-$t_{2}$
student. As a result, the ratio difference for type-$t_{1}$ students between
these districts
is roughly $0.17$, and the ratio
difference for type-$t_{2}$ students
is roughly $0.17$.
This example illustrates that the
actual ratio differences can be significantly lower than the
one given by Theorem \ref{thm:diversity} ($0.17$ versus $0.75$).


\section{Omitted Proofs}\label{sec:proofs}

In this section, we provide the omitted proofs.

\begin{proof}[Proof of Theorem \ref{thm:welfimprove2}]
First, to show the ``if'' part, suppose that all district admissions rules respect the initial matching. In SPDA, each student $s$ goes down in her preference order, and either SPDA ends before student $s$ reaches her initial school (which is a preferred outcome over the initial school), or student $s$ reaches her initial school. In the latter case, she is matched with her initial school because the district's admissions rule respects the initial matching and the district always considers a set of contracts that is feasible for students at any step of SPDA. From this step on, the district accepts this contract, so student $s$ is matched with her initial school. Therefore, SPDA satisfies individual rationality.

To prove the ``only if'' part, suppose that there exists a district $d$ with an admissions rule that fails to respect the initial matching.  Hence, there exists a matching $X$, which is feasible for students, that includes $x=(s,d,c)$ where school $c$ is the initial school of student $s$ and $x\notin Ch_d(X)$. Now, consider student preferences such that every student associated with a contract in $X_d$ prefers that contract the most and all other students prefer a contract associated with a different district the most. Then, at the first step of SPDA, district $d$ considers matching $X_d$ and tentatively accepts $Ch_d(X_d)$. Since $x\notin Ch_d(X_d)$, contract $x$ is rejected at the first step. Therefore, student $s$ is matched with a strictly less preferred school than her initial school, which implies that SPDA does not satisfy individual rationality.
\end{proof}
\medskip

\begin{proof}[Proof of Theorem \ref{thm:balance}]
We first prove that if each district admissions rule is rationed,
then SPDA satisfies the balanced-exchange policy. Let $X$ be the matching
produced by SPDA for a given preference profile.

We begin by showing that each student must be matched with a school in $X$. Suppose, for contradiction, that student $s$ is unmatched. Since $X$ is a stable matching, every contract $x=(s,d,c)$ associated with the student is rejected by the corresponding district, i.e., $x\notin \chd(X \cup \{x\})$. Otherwise, student $s$ and district $d$ would like to match with each other using contract $x$, contradicting the stability of matching $X$. Since $X \cup \{x\}$ is feasible for students, acceptance implies that, for each district $d$, either every school in the district is full or that the district has at least $k_d$ students at matching $X$. Both of them imply that the district has at least $k_d$ students in matching $X$ since the sum of the school capacities in district $d$ is at least $k_d$. But this is a contradiction to the assumption that student $s$ is unmatched since the existence of an unmatched student implies that there is at least one district $d$ such that the number of students in $X_d$ is less than $k_d$. Therefore, all students are matched in $X$.

Because $X$ is the outcome of SPDA, it is feasible for students. Therefore, because district admissions rules are rationed,
the number of students in district $d$ cannot be strictly more than $k_d$ for any district $d$. Furthermore, since every student is matched, the number of students in district $d$ must be exactly $k_d$ (because, otherwise, at least one student would have been unmatched.) As a result, SPDA satisfies the balanced-exchange policy.

Next, we prove that if at least one district's admissions rule fails to be rationed, then there exists a student preference profile under which SPDA does not satisfy the balanced-exchange policy. Suppose that there exist a district $d$ and a matching $X$, which is feasible for students, such that $\abs{\chd(X)}> k_d$. Consider a feasible matching $X'$ such that (i) all students are matched, (ii) $X'_d =\chd(X)$, and (iii) for every district $d'\neq d$, $\abs{X'_{d'}}\leq k_{d'}$. The existence of such $X'$ is guaranteed since every district has enough capacity to serve its students (i.e., for every district $d'$, $\sum_{c:d(c)=d'} q_c \geq k_{d'}$), and $\abs{\chd(X)} > k_{d}$. Now, consider any student preferences, where every student likes her contract in $X'$ the most.

We show that SPDA stops in the first step. For district $d'\neq d$, $X'_{d'}$ is feasible and the number of students matched to $d'$ at $X'_{d'}$ is weakly less than $k_{d'}$. Since $Ch_{d'}$ is acceptant, $Ch_{d'}(X'_{d'})=X'_{d'}$. For district $d$, we need to show that $Ch_{d}(X'_{d})=X'_{d}$, which is equivalent to $\chd(\chd(X))=\chd(X)$. Let $Ch'_d$ be a completion of $\chd$ that satisfies path independence. Because $X$ and $\chd(X)$ are feasible for students, $Ch'_d(X)=\chd(X)$ and $Ch'_d(Ch'_d(X))=\chd(\chd(X))$. Furthermore, since $Ch'_d$ is path independent, $Ch'_d(Ch'_d(X))=Ch'_d(X)$, which implies $\chd(\chd(X))=\chd(X)$. As a result, $\chd(X'_d)=X'_d$. Therefore, SPDA stops at the first step since no contract is rejected.

Since SPDA stops at the first step, the outcome is matching $X'$. But $X'$ fails the balanced-exchange policy because $\abs{X'_d}=\abs{\chd(X)}>k_d$.
\end{proof}
\medskip

\begin{proof}[Proof of Theorem \protect\ref{thm:diversity}]
To prove this result, we provide the following lemmas.

\begin{lemma}\label{lem:match} 
If a profile of district admissions rules accommodates unmatched
students, every student is matched to a school in SPDA.
\end{lemma}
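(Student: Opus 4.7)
The plan is to argue by contradiction and leverage the stability of SPDA. Suppose, for contradiction, that some student $s$ remains unmatched at the SPDA outcome $X$, so $X_s=\emptyset$. Under the maintained assumption that each district's admissions rule has a completion that satisfies substitutability and LAD, \citet{hatkom14} guarantees that $X$ is stable, and hence in particular feasible.

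Next I would invoke accommodation directly: since $X$ is feasible and $s$ is unmatched in $X$, Definition \ref{def:accommodate} yields a contract $x=(s,d,c)\in\X$ such that $x\in Ch_d(X\cup\{x\})$. Because every school is strictly preferred to the outside option, we have $x \mathrel{P_s} \emptyset = X_s$. Combined with $x\in Ch_d(X\cup\{x\})$, this exhibits a pair $(s,d)$ that blocks $X$ via the contract $x$, contradicting the stability of $X$. Therefore, no student is unmatched at the SPDA outcome.

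The argument is essentially a direct synthesis of the accommodation property with the stability of SPDA, so there is no substantial calculation or inductive construction. The only subtle point to watch is that accommodation is stated for feasible matchings in which $s$ is unmatched, which is exactly the situation we land in once stability is invoked; thus checking that the SPDA outcome really is feasible (guaranteed by the completion property) is the most delicate step, though it follows immediately from results already in the paper.
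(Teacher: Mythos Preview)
Your proof is correct and follows essentially the same approach as the paper: assume a student is unmatched at the (stable) SPDA outcome, invoke accommodation to produce a contract the district would accept, and observe this yields a blocking pair since every school is preferred to being unmatched. The paper's version is slightly terser, but the logic is identical.
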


\begin{proof}[Proof of Lemma \protect\ref{lem:match}]
Let $X$ be the outcome of SPDA for some preference profile. Suppose, for
contradiction, that student $s$ is unmatched. Since $X$ is a stable matching
and student $s$ prefers any contract $x=(s,d,c)$ to being unmatched,
$x\notin Ch_d(X \cup \{x\})$. But this is a contradiction to the assumption
that the profile of district admissions rules accommodates unmatched students.
\end{proof}

\begin{lemma}\label{lem:lower}
For each type $t$, district $d$, and legitimate matching $X$,
we have $\hat{q}_{d}^{t} \geq \xi_{d}^{t} (X) \geq \hat{p}_{d}^{t}$.
Moreover, for each type $t$ and district $d$, there exist legitimate
matchings $X$ and $X'$ such that $\xi_{d}^{t}(X) = \hat{p}_{d}^{t}$ and
$\xi_{d}^{t}(X') = \hat{q}_{d}^{t}$.
\end{lemma}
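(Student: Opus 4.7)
The plan is to prove both parts of the lemma by directly linking legitimate matchings with feasible solutions of the two linear programs that define $\hat{p}_d^t$ and $\hat{q}_d^t$. The key observation is that the four defining conditions of a legitimate matching translate verbatim into the four constraints of the LPs, so the distribution of a legitimate matching is always LP-feasible, and conversely any integer LP-feasible distribution can be realized as a legitimate matching.

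For the bounds $\hat{q}_d^t \geq \xi_d^t(X) \geq \hat{p}_d^t$, I would fix a legitimate matching $X$ and set $y_c^{t'} \equiv \xi_c^{t'}(X)$ for every school $c$ and type $t'$. Condition (i) of legitimacy gives LP-constraint (i); condition (ii) gives (ii); condition (iii) gives (iii); and condition (iv) gives (iv). Hence $(y_c^{t'})$ is a feasible point for both optimization problems. Since the objective evaluated at this point equals $\sum_{c:d(c)=d} \xi_c^t(X) = \xi_d^t(X)$, and since $\hat{p}_d^t$ and $\hat{q}_d^t$ are the minimum and maximum of this objective over the feasible set, the two inequalities follow immediately.

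For the achievability part, fix $t$ and $d$, and focus on $\hat{p}_d^t$ (the argument for $\hat{q}_d^t$ is symmetric). As the footnote to the paper observes, the LP is a minimum-cost flow problem with integer supplies $(k_d)_{d\in\D}$, integer demands $(k^{t'})_{t'\in\T}$, and integer capacity bounds $(q_c^{t'})$, so by the integrality property there exists an integer-valued optimal solution $\xi^* \in \mathbb{Z}_+^{|\C|\times|\T|}$ with $\sum_{c:d(c)=d}\xi_c^{*t} = \hat{p}_d^t$. I would then construct a matching $X$ by partitioning the students by type and, for each pair $(c,t')$, assigning exactly $\xi_c^{*t'}$ type-$t'$ students to school $c$; constraint (ii) of the LP guarantees that the total number of type-$t'$ slots equals the supply $k^{t'}$ of such students, so this assignment is well-defined. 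The resulting $X$ satisfies $\xi(X) = \xi^*$, and legitimacy of $X$ follows because each of conditions (i)–(iv) in the definition of a legitimate matching is inherited from the corresponding LP constraint. In particular, $\xi_d^t(X) = \hat{p}_d^t$, as required.

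The one subtlety to flag is the integrality step: without it, the LP optimum might only be attained by a fractional distribution, which cannot be realized as an actual student-to-school assignment. This is not a real obstacle because both LPs are standard minimum-cost flow problems with integer data, and the integrality property of such problems, already invoked in the paper's footnote, guarantees an integer optimum. Once integrality is in hand, the construction above is routine, and the lemma follows.
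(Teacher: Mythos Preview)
Your proposal is correct and follows essentially the same approach as the paper's own proof: observe that the distribution of any legitimate matching satisfies the LP constraints (giving the bounds), and conversely that an integer optimal LP solution can be realized as a legitimate matching (giving achievability). You supply more detail than the paper does—in particular, you make the integrality step explicit and spell out how to construct the matching from the optimal distribution—but the underlying argument is the same.
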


\begin{proof}[Proof of Lemma \protect\ref{lem:lower}]
Observe that for every legitimate matching $X$, the induced distribution satisfies
the constraints of the linear program. Therefore, the first part follows from the definition
of the implied floors and ceilings. For the second part, note that there exists a solution
to the linear program such that the ceiling and the floor are attained. Furthermore,
every solution $y=(y^t_c)_{c \in \C, t \in \T}$ of the linear program can be
supported by a legitimate matching $X$ such that $y_c^t=\xi_c^t(X)$ for every $c$ and $t$.
\end{proof}

\begin{lemma}
\label{lem:suff} For each $t \in \mathcal{T}$ and $d ,d^{\prime }\in
\mathcal{D}$ with $d \neq d^{\prime }$, there exists a legitimate matching $%
X $ such that $\xi_{d}^{t} (X) = \hat{q}_d^t$ and $\xi_{d^{\prime }}^{t} (X)
= \hat{p}_{d^{\prime }}^t$.
\end{lemma}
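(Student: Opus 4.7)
The plan is to exploit the min-cost flow / totally unimodular (TU) structure of the linear program that defines the implied floors and ceilings. The LP's constraint matrix is essentially that of a transportation problem with side box constraints, hence TU; consequently, any face of the feasible polytope cut out by integer-valued equality constraints inherits TU, and its vertex-optimal solutions are integral. This will let me move freely between LP-level existence statements and genuine legitimate matchings.

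By Lemma \ref{lem:lower}, fix legitimate matchings $X^p$ with $\xi_{d'}^t(X^p) = \hat{p}_{d'}^t$ and $X^q$ with $\xi_d^t(X^q) = \hat{q}_d^t$. If either already achieves both bounds, we are done, so assume otherwise. Consider the auxiliary LP that maximizes $y_d^t$ subject to the original feasibility constraints augmented by the equation $y_{d'}^t = \hat{p}_{d'}^t$. This LP is nonempty (it contains $X^p$) and its constraint matrix remains TU, so it admits an integer optimum $y^\ast$ corresponding to a legitimate matching $X^\ast$ with $\xi_{d'}^t(X^\ast) = \hat{p}_{d'}^t$ and $\xi_d^t(X^\ast) \leq \hat{q}_d^t$. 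The whole task reduces to showing $\xi_d^t(X^\ast) = \hat{q}_d^t$.

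Suppose for contradiction that $\xi_d^t(X^\ast) < \hat{q}_d^t$. View $X^\ast$ and $X^q$ as integer flows and decompose the integer circulation $X^q - X^\ast$ into simple unit directed cycles $C_1,\ldots,C_k$ in the residual network of $X^\ast$. Each $C_i$ perturbs $(\xi_d^t,\xi_{d'}^t)$ by $(\Delta_d^t(C_i), \Delta_{d'}^t(C_i)) \in \{-1,0,+1\}^2$. Two structural constraints eliminate several cycle types: because $\hat{p}_{d'}^t$ is the global minimum of $y_{d'}^t$ over legitimate matchings, no cycle in the residual of $X^\ast$ can satisfy $\Delta_{d'}^t(C_i) < 0$ (pushing it on $X^\ast$ would produce a legitimate matching below the minimum); and because $X^\ast$ is optimal on the face $y_{d'}^t = \hat{p}_{d'}^t$, no cycle can satisfy $\Delta_d^t(C_i) > 0$ together with $\Delta_{d'}^t(C_i) = 0$ (pushing such a cycle would strictly improve $X^\ast$). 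Since $\sum_i \Delta_d^t(C_i) = \hat{q}_d^t - \xi_d^t(X^\ast) > 0$, some $C_i$ must therefore have $(\Delta_d^t, \Delta_{d'}^t) = (+1, +1)$.

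The main obstacle is converting such a $(+1,+1)$ cycle into a pure $(+1,0)$ improvement at $X^\ast$, which would contradict optimality on the face. To do this, I would exploit the second matching $X^p$: decompose $X^p - X^\ast$ into cycles in the residual of $X^\ast$; since both matchings lie on the face, the sum of $\Delta_{d'}^t$ over these cycles is zero while each individual contribution is nonnegative, forcing $\Delta_{d'}^t = 0$ for every such cycle. Combining the $(+1,+1)$ cycle from $X^q - X^\ast$ with these face-preserving cycles via the standard cycle-reduction in the residual graph (which preserves integrality by TU) produces an integer augmenting cycle of net effect $(+1,0)$ in the residual of $X^\ast$, contradicting the optimality of $X^\ast$. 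Thus $\xi_d^t(X^\ast) = \hat{q}_d^t$ and $X^\ast$ is the required legitimate matching. The delicate technical step is the cycle recombination, since a priori the $(+1,+1)$ cycle and the face-preserving cycles may share edges in complicated ways; this is handled by iterating the decomposition and invoking the integrality of the flow polytope at every stage.
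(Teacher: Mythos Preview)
Your flow-based approach is genuinely different from the paper's combinatorial argument and is sound in outline, but the execution has a real gap at exactly the point you flag as ``delicate.'' The cycle-recombination step is not justified: cycles from the conformal decomposition of $X^q - X^\ast$ and cycles from that of $X^p - X^\ast$ both live in the residual of $X^\ast$, but there is no general operation that merges a cycle from one decomposition with cycles from the other into a single feasible augmenting cycle with a prescribed effect on $(\xi_d^t,\xi_{d'}^t)$. Indeed, the cycles you extract from $X^p - X^\ast$ all have $\Delta_{d'}^t = 0$, so none of them can offset the $+1$ that your putative $(+1,+1)$ cycle contributes to $\xi_{d'}^t$.

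The good news is that the recombination step is unnecessary, because no $(+1,+1)$ cycle can exist. In the natural flow formulation (source $\to$ type nodes $\to$ school nodes $\to$ district nodes $\to$ sink, with the arc $t \to c$ carrying $y_c^t$), the source--$t$ arc is fixed at $k^t$ and hence absent from the residual graph. A simple residual cycle through the type-$t$ node must therefore enter along a backward arc $c_1 \to t$ and leave along a forward arc $t \to c_2$, so it decreases exactly one $y_{c_1}^t$ and increases exactly one $y_{c_2}^t$. Consequently $\Delta_d^t = [d(c_2){=}d]-[d(c_1){=}d]$ and similarly for $d'$; the pair $(+1,+1)$ would force $d(c_2)=d$ and $d(c_2)=d'$ simultaneously, which is impossible since $d \neq d'$. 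Once you note this, your argument closes immediately: any cycle with $\Delta_d^t = +1$ has $\Delta_{d'}^t \in \{-1,0\}$, both of which you have already ruled out, so no such cycle exists---contradicting $\sum_i \Delta_d^t(C_i) > 0$. By contrast, the paper's proof is self-contained combinatorics: it fixes a matching on the face $\xi_{d'}^t = \hat p_{d'}^t$ that minimizes the $\ell_1$ distance of its distribution to a ceiling-achieving matching $\hat X$, and then explicitly constructs a chain of type--school swaps that would strictly reduce that distance while remaining legitimate and on the face. Your approach effectively replaces that hand-built chain by the off-the-shelf conformal decomposition of integer circulations.
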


\begin{proof}[Proof of Lemma \protect\ref{lem:suff}]

Let $\hat{X}$ be a legitimate matching such that $\xi _{d}^{t}(\hat{X})=\hat{%
q}_{d}^{t}$ and $\mathcal{M}_{0}$ be the set of all legitimate matchings. Let%
\begin{equation*}
\mathcal{M}_{1}\equiv\{X\in \mathcal{M}_{0}|\xi _{d^{\prime }}^{t}(X)=\hat{p}%
_{d^{\prime }}^{t}\}.
\end{equation*}%
$\mathcal{M}_{1}$ is nonempty due to Lemma \ref{lem:lower}.
Next, let
\begin{equation*}
\mathcal{M}_{2}\equiv\{X\in \mathcal{M}_{1}|\sum_{\t,\c}\mid \xi _{\c}^{\t}(X)-\xi
_{\c}^{\t}(\hat{X})\mid \leq \sum_{\t,\c}\mid \xi _{\c}^{\t}(X^{\prime })-\xi
_{\c}^{\t}(\hat{X})\mid \text{ for every $X^{\prime }\in \mathcal{M}_{1}$}\}.
\end{equation*}%
$\mathcal{M}_{2}$ is  nonempty because $\mathcal M_1$ is a finite set.
We will show that for any $X\in \mathcal{M}_{2}$, $\xi _{d}^{t}(X)=\xi
_{d}^{t}(\hat{X})=\hat{q}_{d}^{t}$.

To prove the above claim, assume for contradiction that there exists $X\in \mathcal{M}_{2}$ such that $%
\xi _{d}^{t}(X)\neq \xi _{d}^{t}(\hat{X})$.
By Lemma \ref{lem:lower}, $\xi _{d}^{t}(X)\neq \xi _{d}^{t}(\hat{X})$
implies that $\xi _{d}^{t}(X)<\xi _{d}^{t}(\hat{X})$. Then there exists $c$ with $d(c)=d$ such that
$\xi _{c}^{t}(X)<\xi _{c}^{t}(\hat{X})$.  Consider the following procedure.

\begin{description}
\item[Step 0] Initialize by setting $(t_1,c_1):=(t,c)$. Note that $\xi _{c_1}^{t_1}(X)<\xi _{c_1}^{t_1}(\hat{X})$ by definition of $c$.

\item[Step $i \ge 1$] Given  sequences of type-school pairs $((t_j,c_j))_{1 \le j \le i}$ and $((t_{j+1},c^*_j))_{1 \le j < i}$, proceed as follows. We begin with $( t_i,c_i)$. Note that (by assumption for $i=1$, and as shown later for $i \ge 2$), $\xi _{c_i}^{t_i}(X)<\xi _{c_i}^{t_i}(\hat{X})$.
 Denote $d_i=d(c_i)$. Now,
\begin{enumerate}
\item Suppose that
 there exists $i'<i$ such that either (i) $c^*_{i'}=c_i$ or (ii) $\xi^{t_i}_{c_i}(X)<q_{c_i}$ and $d(c^*_{i'})=d(c_i)$.
If such an index $i'$ exists, then set $(t_{i+1},c^*_i):=(t_{i'+1},c^*_{i'})$.

\item Suppose not. Then, if there exists $t' \in \T$ such that  $\xi _{c_i}^{t'}(X)>\xi _{c_i}^{t'}(\hat{X})$, then set $(t_{i+1},c^*_i):=(t',c_i)$.
\item If not, then note that $\sum_{\t \in \T} \xi^{\t}_{c_i}(X) <q_{c_i}$.\footnote{A proof of this fact is as follows. By an earlier argument, $\xi _{c_i}^{t_i}(X)<\xi _{c_i}^{t_i}(\hat{X})$. Moreover, by assumption $\xi _{c_i}^{\t}(X) \le \xi _{c_i}^{\t}(\hat{X})$ for every $\t \in \T$. Therefore, $\sum_{\t \in \T} \xi _{c_i}^{\t}(X)<\sum_{\t \in \T} \xi _{c_i}^{\t}(\hat{X}) \le q_{c_i}$.} Also note that
there exists a type-school pair $(t',c')$ with $c' \neq c_i$ such that
$\xi _{c'}^{t'}(X)>\xi _{c'}^{t'}(\hat{X})$
 and $d(c') =d_i$ because $\sum_{\c: d(\c)=d_i,\t \in \T} \xi^{\t}_{\c}(X)=\sum_{\c: d(\c)=d_i,\t \in \T} \xi^{\t}_{\c}(\hat X)=k_{d_i}$.
\begin{enumerate}
\item If $t'=t_i$, then let $\bar X$
be a matching such that
\begin{align*}
\xi^{\t}_{\c}(\bar X)= \begin{cases} \xi^{t_i}_{c_i}(X)+1 & \mbox{ for $(\t,\c)=(t_i,c_i)$}, \\
\xi^{t'}_{c'}(X)-1 & \mbox{ for $(\t,\c)=(t_i,c')$}, \\
\xi^{\t}_{\c}(X) & \mbox{ otherwise.}
\end{cases}
\end{align*}
Note that $\bar X \in \mathcal M_1$.\footnote{A proof  of this fact is as follows.  Because $\sum_{\t \in \T} \xi^{\t}_{c_i}(X) <q_{c_i}$, $\sum_{\t \in \T} \xi^{\t}_{c_i}(\bar X) = \sum_{\t \in \T} \xi^{\t}_{c_i}(X) +1 \le q_{c_i}$. For every $\c \neq c_i$, $\sum_{\t \in \T} \xi^{\t}_{\c}(\bar X) \le \sum_{\t \in \T} \xi^{\t}_{\c}(X) \le q_{\c}$. Thus, all school capacities are satisfied. For all $\c,\t$, $\xi^{\t}_{\c}(\bar X) \le \max \{ \xi^{\t}_{\c}(X),\xi^{\t}_{\c}(\hat X) \} \le q^{\t}_{\c}$ by construction, so all type-specific ceilings are satisfied. And $\sum_{\t \in \T,\c \in \C} \xi^{\t}_{\c}(\bar X) = \sum_{\t \in \T} \xi^{\t}_{\c}(X)$ by definition of $\bar X$, so $\bar X$ is a legitimate matching.
Finally, $\xi^{\t}_{\d}(\bar X)=\xi^{\t}_{\d}(X)$ for every $\t$ and $\d$, so $\bar X \in \mathcal M_1$.} Also, by construction, $\sum_{\t,\c}\mid \xi _{\c}^{\t}(\bar X)-\xi
_{\c}^{\t}(\hat{X}) \mid =\sum_{\t,\c}\mid \xi _{\c}^{\t}(X)-\xi
_{\c}^{\t}(\hat{X}) \mid -2 < \sum_{\t,\c}\mid \xi _{\c}^{\t}(X)-\xi
_{\c}^{\t}(\hat{X}) \mid$, which contradicts the assumption that $X \in \mathcal{M}_2$.
\item Therefore, suppose that $t' \neq t_i$ and set $(t_{i+1},c^*_i):=(t',c')$.
\end{enumerate}
\item The pair $(t_{i+1},c^*_i)$ created above satisfies $\xi_{c^*_i}^{t_{i+1}}(X)>\xi _{c^*_i}^{t_{i+1}}(\hat{X})$, so there exists $c' \in \C$ such that
$\xi_{c'}^{t_{i+1}}(X)<\xi _{c'}^{t_{i+1}}(\hat{X})$. Set $c_{i+1}=c'$. Note that $\xi _{c_{i+1}}^{t_{i+1}}(X)<\xi _{c_{i+1}}^{t_{i+1}}(\hat{X})$.
\end{enumerate}

\end{description}

We follow the procedure above to define $(t_1,c_1), (t_2,c^*_1), (t_2,c_2), (t_3,c^*_2), (t_3,c_3)$, and so forth. 
Because
 $\T$ is a finite set, we have $i$ and $j>i$ with $t_i=t_j$. Consider the smallest $j$  with this property (note that given such $j$, $i$ is uniquely identified). Now,
let $\bar X$ 
  be a matching such that
\begin{align*}
\xi^{\t}_{\c}(\bar X)= \begin{cases} \xi^{t_k}_{c_k}(X)+1 & \mbox{ for $(\t,\c)=(t_k,c_k)$ for any $k \in \{i,i+1, \dots,j-1\}$}, \\
\xi^{t_{k+1}}_{c^*_k}(X)-1 & \mbox{ for $(\t,\c)=(t_{k+1},c^*_k)$ for any $k \in \{i,i+1, \dots,j-1\}$}, \\
\xi^{\t}_{\c}(X) & \mbox{ otherwise.}
\end{cases}
\end{align*}

We will show $\bar X \in \mathcal{M}_{1}$. To do so, by construction of $\bar X$, first note
that  $\sum_{\t \in \T} \xi^{\t}_{\c}(\bar X) \le  \sum_{\t \in \T} \xi^{\t}_{\c}(X) +1 \le q_{\c}$ for any $\c \in \{c_i,\dots, c_{j-1}\}$ such that $\sum_{\t \in \T} \xi^{\t}_{\c}(X) <q_{\c}$.
Next, by construction of $\bar X$, $\sum_{\t \in \T} \xi^{\t}_{\c}(\bar X) =   \sum_{\t \in \T} \xi^{\t}_{\c}(X) = q_{\c}$ for every $\c \in \{c_i,\dots, c_{j-1}\}$ such that  $\sum_{\t \in \T} \xi^{\t}_{\c}(X) =q_{\c}$.
 Moreover,  $\sum_{\t \in \T} \xi^{\t}_{\c}(\bar X) \le   \sum_{\t \in \T} \xi^{\t}_{\c}(X) = q_{\c}$ for every $\c \in \{c^*_i,\dots, c^*_{j-1}\}$.
Finally, for every $\c \in \C \setminus\{c_i,\dots,c_{j-1},c^*_i,\dots,c^*_{j-1}\}$, $\sum_{\t \in \T} \xi^{\t}_{\c}(\bar X) = \sum_{\t \in \T} \xi^{\t}_{\c}(X) \le q_{\c}$. Thus, all school capacities are satisfied by $\bar X$. Also by construction of $\bar X$, for each $\d \in \D$, $\sum_{\c: d(\c)=\d} \xi^{\t}_{\c}(\bar X)=\sum_{\c: d(\c)=\d}  \xi^{\t}_{\c}(X)=k_{\d}$, so $\bar X$ is rationed.
Furthermore, for every $\c \in \C$ and $\t \in \T$, $\xi^{\t}_{\c}(\bar X) \le \max \{ \xi^{\t}_{\c}(X),\xi^{\t}_{\c}(\hat X) \}$ by construction, so all type-specific ceilings are satisfied.
 Moreover, by construction of $\bar X$, for
each $\t \in \T$,  either $\xi^{\t}_{\c}(\bar X)=\xi^{\t}_{\c}(X)$ for every $\c \in \C$ or there exists exactly one pair of schools $\c'$ and $\c''$ in $\C$ such that $\xi^{\t}_{\c'}(\bar X)=\xi^{\t}_{\c'}(\bar X)+1$, $\xi^{\t}_{\c''}(\bar X)=\xi^{\t}_{\c''}(\bar X)-1$, and $\xi^{\t}_{\c}(\bar X)=\xi^{\t}_{\c}(X)$ for every $\c \in \C \setminus \{\c', \c''\}$. Thus,  $\t \in \T$, $\sum_{\c \in \C}\xi^{\t}_{\c}(\bar X)=\sum_{\c \in \C}\xi^{\t}_{\c}(X)$ for every $\t \in \T$. Therefore, $\bar X$ is legitimate.

By construction of $\bar X$, either $%
\xi _{d'}^{t}(\bar X)=\xi _{d'}^{t}(X)$ or $\xi _{d'}^{t}(\bar{X})=\xi
_{d'}^{t}(X)-1$. This implies that $\bar{X}\in \mathcal{M}_{1}$.
Furthermore, $\sum_{\t,\c}\mid \xi _{\c}^{\t}(\bar{X})-\xi
_{\c}^{\t}(\hat{X})\mid < \sum_{\t,\c}\mid \xi _{\c}^{\t}(X)-\xi _{\c}^{\t}(\hat{X%
})\mid$, since while creating the $\xi _{\c}^{\t}(%
\bar{X})$ entries, we add $1$ to some entries of $X$ that satisfy $\xi _{\c}^{\t}(X)<\xi _{\c}^{\t}(\hat{X})$ and subtract $1$  from some entries of $X$ that
satisfy $\xi _{\c}^{\t}(X)>\xi _{\c}^{\t}(\hat{X})$.
These lead to a contradiction to the assumption that $X \in \mathcal M_2$, which completes the proof.
\end{proof}

Now we are ready to prove the theorem. The ``if'' part follows from Lemmas
\ref{lem:match} and \ref{lem:lower}. Specifically, by Lemma \ref{lem:match},
SPDA produces a legitimate matching. Therefore, by Lemma \ref{lem:lower}, we
have $\hat{p}_{d}^{t}\leq \xi _{d}^{t}\left( X\right) \leq \hat{q}_{d}^{t}$
for every $t\in \mathcal{T}$ and $d\in \mathcal{D}$. For each school
district $d$, hence, the maximum proportion of type-$t$ students that can be
admitted is $\hat{q}_{d}^{t}/k_{d}$ and the minimum proportion of type $t$
students that can be admitted is $\hat{p}_{d}^{t}/k_{d}$. Therefore, the
ratio difference of type-$t$ students in any two districts is at most $%
\underset{d\neq d^{\prime }}{\max }\{\hat{q}_{d}^{t}/k_{d}-\hat{p}%
_{d^{\prime }}^{t}/k_{d^{\prime }}\}$. We conclude that the $\alpha $%
-diversity policy is achieved when $\hat{q}_{d}^{t}/k_{d}-\hat{p}_{d^{\prime
}}^{t}/k_{d^{\prime }}\leq \alpha $ for every $t$, $d$, and $d^{\prime }$
with $d\neq d^{\prime }$.

The ``only if'' part of the theorem follows from Lemma \ref{lem:suff}. Suppose
that $\hat{q}_{d}^{t}/k_{d}-\hat{p}_{d^{\prime }}^{t}/k_{d^{\prime }}>\alpha
$ for some $t$, $d$, and $d^{\prime }$ with $d\neq d^{\prime }$. From Lemma %
\ref{lem:suff}, we know the existence of a legitimate matching $X$ such that
$\xi _{d}^{t}\left( X\right) =\hat{q}_{d}^{t}$ and $\xi _{d^{\prime
}}^{t}\left( X\right) =\hat{p}_{d^{\prime }}^{t}$. Consider a student
preference profile where each student prefers her contract in $X$ the most.
Then, since the admissions rules are weakly acceptant, SPDA ends at the
first step as all applications are accepted. Thus $X$ is the outcome
of SPDA and, therefore, the $\alpha$-diversity policy
is not satisfied.
\end{proof}

\medskip

\begin{proof}[Proof of Theorem \ref{thm:ttc}]
\cite{suzuki17} study a setting in which each student is initially matched with a school and there are no constraints associated with student types, that is, when there is just one type. In that setting, they show that if the distribution is M-convex, then their mechanism, called TTC-M, satisfies the policy goal, constrained efficiency, individual rationality, and strategy-proofness. To adapt their result to our setting, consider the hypothetical matching problem that we have introduced before
the definition of TTC in which each student is matched with a school-type pair and each student has strict preferences over all school-type pairs. It is straightforward to verify that this hypothetical problem satisfies all the conditions assumed by \cite{suzuki17}. In particular, M-convexity of  $\Xi \cap \Xi^0$ holds by assumption. Therefore, TTC-M in this market satisfies the policy goal, constrained efficiency, individual rationality,
and strategy-proofness.

We note that the outcome of our TTC is isomorphic to the outcome of TTC-M in the hypothetical problem in the following sense. Student $s$ is allocated to contract $(s,c)$ under preference profile $P=(P_s)_{s \in \S}$ at the outcome of TTC if, and only if, student $s$ is allocated to the school-type pair $(c,t)$ under preference profile $\tilde P=(\tilde P_s)_{s \in \S}$ at TTC-M in the hypothetical problem. The rest of the proof is devoted to showing that our TTC satisfies the desired properties in the original problem.

The result that TTC satisfies the policy goal follows from the result in \cite{suzuki17} that the distribution
corresponding to the TTC-M outcome is in $\Xi \cap \Xi^0$.

To show constrained efficiency, let $X$ be the outcome of TTC and, for each student $s \in \S$, let $(s,c_s)$ be the contract associated with student $s$ at matching $X$. Suppose, for contradiction, that there exists a feasible matching $X'$ with $\xi(X')\in \Xi$ that Pareto dominates matching $X$. Denoting $X'_s=(s,c'_s)$ for each student $s \in \S$, this implies $(s,c'_s) \mathrel{R_s} (s,c_s)$ for every student $s \in \S$, with at least one relation being strict. Then, by the construction of preferences $\mathrel{\tilde R_s}$ in the hypothetical problem, we have $(c'_s, \tau(s)) \mathrel{\tilde R_s} (c_s,\tau(s))$ for every student $s \in \S$, with at least one relation being strict. Moreover, because matching $X'$ is feasible in the original problem, 
$Y'=\{(c'_s, \tau(s))|(s,c'_s) \in X'\}$ is feasible in the hypothetical problem, and $Y=\{(c_s, \tau(s))|(s,c_s) \in X\}$ is the result of TTC-M. This is a contradiction to the result in \cite{suzuki17} that TTC-M is constrained efficient.

To show individual rationality, let matching $X$ be the outcome of TTC and, for each student $s \in \S$, let $X_s=(s,c_s)$ be the contract associated with student $s$ at matching $X$. Additionally, let $Y=\{(c_s, \tau(s))|(s,c_s) \in X\}$ be the result of TTC-M in the hypothetical problem. \cite{suzuki17} establish that TTC-M is individually rational, so $(c_s, \tau(s)) \mathrel{\tilde R_s} (c_0(s), \tau(s))$ for every $s \in \S$, where $c_0(s)$ denotes the initial school of student $s$. By the construction of $\tilde R_s$, this relation implies  $(s, c_s) \mathrel{R_s} (s,c_0(s))$ for every student $s \in \S$, which means $X$ is individually rational in the original problem.

To show strategy-proofness, in the original problem, let $s$ be a student, $t$ her type, $P_{-s}$ the preference profile of students other than student $s$,  $P_s$ the true preference of student $s$, and $P'_s$ a misreported preference of student $s$. Furthermore, let $c$ and $c'$ be schools assigned to student $s$ under $(P_s,P_{-s})$ and $(P_s',P_{-s})$ for TTC, respectively. Note that the previous argument establishes that, in the hypothetical problem, student $s$ is allocated to $(c,t)$ and $(c',t)$ under $(\tilde P_s, \tilde P_{-s})$ and  $(\tilde P'_s, \tilde P'_{-s})$, respectively.  Because TTC-M is strategy-proof, it follows that $(c,t) \mathrel{\tilde P_s} (c',t)$ or $c=c'$. By the construction of $\tilde P_s$, this relation implies $(s, c) \mathrel{P_s} (s,c')$ or $ (s, c)=(s,c')$, establishing strategy-proofness of TTC in the original problem.
\end{proof}
\medskip

\begin{proof}[Proof of Corollary \ref{corollary:TTC}]
Assume that $\sum_{t} \xi^t_c \leq q_c$  for every $\xi \in \Xi$ and $c\in \C$.
Under this presumption, we show that when the policy goal $\Xi$ is M-convex,
so is $\Xi \cap \Xi^0$. Then the result follows immediately from Theorem
\ref{thm:ttc} because the initial matching satisfies $\Xi$.

Suppose that $\xi,\tilde{\xi} \in \Xi \cap \Xi^0$ such that $\xi^t_c>\tilde{\xi}^t_c$ for some school $c$ and type $t$. Since $\Xi$ is M-convex and $\xi,\tilde{\xi} \in \Xi$, there exist school $c'$ and type $t'$ with $\xi^{t'}_{c'}<\tilde{\xi}^{t'}_{c'}$
such that $\xii \equiv \xi-\chi_{c,t}+ \chi_{c',t'}\in \Xi$ and $\xij \equiv \tilde{\xi}+\chi_{c,t}-\chi_{c',t'} \in \Xi$. We will  show that $\xii,\xij \in \Xi^0$. 

Because $\xi \in \Xi^0$,
  $\sum_{\c \in \C,\t \in \T}\xii^{\t}_{\c}=\sum_{\c \in \C,\t \in \T}\xi^{\t}_{\c}=\sum_d k_d$. Furthermore, for every $\c,\t$, by definition of $\xii$, we have $\xii^{\t}_{\c} \le \max\{\xi^{\t}_{\c},\tilde \xi^{\t}_{\c}\} \le q^{\t}_{\c}$. These two properties imply that $\xii \in \Xi^0$. A similar argument shows $\xij \in \Xi^0$.
\end{proof}
\medskip

\begin{proof}[Proof of Lemma \ref{lem:f-diversity}]
Note that $\Xi(f,\lambda) \cap \Xi^0 = \{ \xi \in \Xi_0 | f(\xi) \geq \lambda \}$.

\noindent{\textbf{The ``if'' direction:}} Suppose that $\xi \in \Xi(f,\lambda) \cap \Xi^0$ and $\tilde{\xi} \in \Xi(f,\lambda) \cap \Xi^0$ are distinct. Therefore, $f(\xi),f(\tilde{\xi}) \geq \lambda$.
By assumption, there exist $(c,t)$ and $(c',t')$ with $\xi_c^t>\tilde{\xi}_c^t$ and $\xi_{c'}^{t'}<\tilde{\xi}_{c'}^{t'}$ such that
  \begin{center}
  $\min \{f(\xi-\chi_{c,t}+\chi_{c',t'}), f(\tilde{\xi}+\chi_{c,t}-\chi_{c',t'})\} \geq \min \{f(\xi),f(\tilde{\xi})\}$.
  \end{center}

This implies $f(\xi-\chi_{c,t}+\chi_{c',t'}),f(\tilde{\xi}+\chi_{c,t}-\chi_{c',t'}) \geq \lambda$. Furthermore,
$\xi-\chi_{c,t}+\chi_{c',t'}, \tilde{\xi}+\chi_{c,t}-\chi_{c',t'} \in \Xi^0$ since the sum of coordinates
is equal to $\sum_d k_d$ and no school is assigned more students than its capacity.
Therefore, $\xi-\chi_{c,t}+\chi_{c',t'},\tilde{\xi}+\chi_{c,t}-\chi_{c',t'}\in \Xi(f,\lambda) \cap \Xi^0$, so, by Theorem 4.3. of Murota (2003), $\Xi(f,\lambda) \cap \Xi^0$ is M-convex.

\noindent{\textbf{The ``only if'' direction:}} Suppose that the function $f$ is not pseudo M-concave, so that there exist distinct
$\xi,\tilde{\xi} \in \Xi_0$ such that for all $(c,t)$ and $(c',t')$ with $\xi_c^t>\tilde{\xi}_c^t$ and $\xi_{c'}^{t'}<\tilde{\xi}_{c'}^{t'}$ we have
  \begin{center}
  $\min \{f(\xi-\chi_{c,t}+\chi_{c',t'}), f(\tilde{\xi}+\chi_{c,t}-\chi_{c',t'})\} < \min \{f(\xi),f(\tilde{\xi})\}$.
  \end{center}
Let $\lambda \equiv \min\{f(\xi),f(\tilde{\xi})\}$. The above condition implies that
$\Xi(f,\lambda) \cap \Xi^0$ is not M-convex.
\end{proof}
\medskip

\begin{proof}[Proof of Theorem \ref{thm:ttcequ}]
Let $f(\xi)=1$ when $\xi \in \Xi \cap \Xi^0$ and $f(\xi)=0$ otherwise.

First we show that $f$ is pseudo M-concave. Take two distinct $\xi,\tilde{\xi} \in \Xi_0$. If
$\min \{f(\xi),f(\tilde{\xi})\}=0$, then $\min \{f(\xi-\chi_{c,t}+\chi_{c',t'}), f(\tilde{\xi}+\chi_{c,t}-\chi_{c',t'})\}\geq 0$
for every $(c,t)$ and $(c',t')$, so the desired inequality holds. Suppose that $f(\xi)=f(\tilde{\xi})=1$. By
the construction of $f$, we have $\xi,\tilde{\xi} \in \Xi \cap \Xi^0$. Since $\Xi \cap \Xi^0$ is M-convex, there
exist $(c,t)$ and $(c',t')$ such that $\xi-\chi_{c,t}+\chi_{c',t'}, \tilde{\xi}+\chi_{c,t}-\chi_{c',t'} \in \Xi \cap \Xi^0$.
By the construction of $f$, $f(\xi-\chi_{c,t}+\chi_{c',t'})=f(\tilde{\xi}+\chi_{c,t}-\chi_{c',t'})=1$. Therefore, the
desired inequality also holds for this case, so $f$ is pseudo M-concave.

Next we show that $\Xi(f,\lambda) \cap \Xi^0 = \Xi \cap \Xi^0$ for $\lambda=1$. For any $\xi \in \Xi(f,1) \cap \Xi^0$,
$f(\xi)=1$, which implies that $\xi \in \Xi \cap \Xi^0$ by the construction of $f$. Therefore,
$\Xi(f,1) \cap \Xi^0 \subseteq \Xi \cap \Xi^0$. Now, let $\xi \in \Xi \cap \Xi^0$. Then, by the construction of $f$,
$f(\xi)=1$, so $\xi \in \Xi(f,1) \cap \Xi^0$. Therefore, $\Xi \cap \Xi^0 \subseteq \Xi(f,1) \cap \Xi^0$. We conclude
that $\Xi(f,1) \cap \Xi^0 = \Xi \cap \Xi^0$.
\end{proof}
\medskip

\begin{proof}[Proof of Corollary \ref{corollary:convexdiv}]
Suppose that $\Xi$ is a school-level diversity policy. We will first show that $\Xi \cap \Xi^0$ is an M-convex set. Recall that $\Xi=\{\xi | \forall c,t \text{ } q_c^t \geq \xi_c^t \geq p_c^t \}$ and $\Xi^0=\{\xi | \sum_{c,t}{\xi_c^t}=\sum_d k_d \text{ and } \forall c \text{ } q_c \geq \sum_t \xi_c^t\}$.

Suppose that there exist $\xi,\tilde{\xi}\in \Xi \cap \Xi^0$ such that $\xi_c^t>\tilde{\xi}_{c}^{t}$. To show M-convexity, we will  find school $c'$ and type $t'$ with $\xi_{c'}^{t'}<\tilde{\xi}_{c'}^{t'}$ such that (1) $\xii  \equiv \xi-\chi_{c,t}+ \chi_{c',t'}\in \Xi \cap \Xi^0$ and (2) $\xij \equiv \tilde{\xi}+\chi_{c,t}-\chi_{c',t'} \in \Xi \cap \Xi^0$.   
To show both conditions, we look at two possible cases depending on whether $c'=c$ or not.


\textbf{Case 1:}  First consider the case in which there exists type $t'$ such that $\xi_{c}^{t'}<\tilde{\xi}_{c}^{t'}$. We prove (1) for $c'=c$. First, by definition of $\xii$,  we have
$\sum_{\t \in \T, \c \in \C}\xii^{\t}_{\c}=\sum_{\t \in \T, \c \in \C}\xi^{\t}_{\c}=\sum_d k_d$.
Next, since $\sum_{\t \in \T}\xii^{\t}_c=\sum_{\t \in \T}\xi^{\t}_c$, we have $\sum_{\t \in \T}\xii^{\t}_c \le q_c$.
Therefore, $\xii \in \Xi^0$.

Next, we have
$\xii^t_c=\xi_c^t-1 \ge \tilde{\xi}_{c}^{t} \ge p_c^t$ (the equality comes from the definition of $\xii$, the first inequality comes from the assumption $\xi_c^t>\tilde{\xi}_{c}^{t}$, and the second inequality
comes from the assumption  $\tilde{\xi} \in \Xi$),  and $\xii^t_c =\xi_c^t-1 < \xi^t_c \le q_c^t$ (the equality comes from the definition of $\xii$, the first inequality is obvious, and the second inequality comes from the assumption
$\xi \in \Xi$).
Moreover, we have $\xii^{t'}_c =\xi_{c}^{t'}+1 > \xi_{c}^{t'} \ge p_c^{t'}$ (the equality comes from the definition of $\xii$, the first inequality is obvious, and the second inequality comes from the assumption $\xi \in \Xi$), and
$\xii_c^{t'} = \xi_{c}^{t'}+1 \le \tilde{\xi}_{c}^{t'} \le q_c^{t'}$
(the equality comes from the definition of $\xii$, the first inequality comes from the assumption $\xi^{t'}_c<\tilde \xi^{t'}_c$, and the second inequality comes from the assumption $\tilde \xi \in \Xi$). For any $\c,\t$ with $(\c,\t) \not\in \{(c,t),(c,t')\},$ we have $\xii^{\t}_{\c}=\xi^{\t}_{\c}$ by definition of $\xii$, so $p^{\t}_{\c} \le \xii^{\t}_{\c} \le q^{\t}_{\c}$.
 Therefore, $\xii \in \Xi$ and hence we conclude (1).

The proof that (1) is satisfied follows from the facts that $\xi_c^t>\tilde{\xi}_{c}^{t}$ and $\xi_{c}^{t'}<\tilde{\xi}_{c}^{t'}$. By changing the roles of $t$ with $t'$ and $\xi$ with $\tilde{\xi}$ in the preceding argument, we get the implication of (1) that $\xij \in \Xi \cap \Xi^0$. But this is exactly (2).

\textbf{Case 2:} Second, consider the case in which there exists no type $t'$ such that $\xi_{c}^{t'}<\tilde{\xi}_{c}^{t'}$. Then, $\xi_{c}^{t'}\geq \tilde{\xi}_{c}^{t'}$ for every $t'\neq t$. This in particular implies $\sum_{\t \in \T} \xi^{\t}_c>\sum_{\t \in \T} \tilde \xi^{\t}_c$. Because
$\sum_{\t \in \T, \c \in \C} \xi^{\t}_{\c}=\sum_{\t \in \T,\c \in \C} \tilde \xi^{\t}_{\c}$ by the assumption that $\xi, \tilde \xi \in \Xi^0$,
there exists a school $c' \neq c$ such that
$\sum_{\t \in \T} \xi^{\t}_{c'}<\sum_{\t \in \T} \tilde \xi^{\t}_{c'}$. In particular, there exists a type $t'$ such that  $\tilde{\xi}_{c'}^{t'}>\xi_{c'}^{t'}$.

Now we proceed to show condition (1) for this case. To do so first note that, by definition of $\xii$, we have
$\sum_{\t \in \T, \c \in \C}\xii^{\t}_{\c}=\sum_{\t \in \T, \c \in \C}\xi^{\t}_{\c}$. In addition,  the relation $\sum_{\t \in \T} \xi^{\t}_c>\sum_{\t \in \T} \tilde \xi^{\t}_c=\sum_{\t \in \T} \xii^{\t}_c$ and the assumption $\xi \in \Xi$ imply that $\sum_{\t \in \T} \xii^{\t}_c \le q_c$. Likewise,
$\sum_{\t \in \T} \xii^{\t}_{c'} = \sum_{\t \in \T} \xi^{\t}_{c'}+1 \le \sum_{\t \in \T} \tilde \xi^{\t}_{c'} \le q_{c'}$. Finally, for any $\c,\t$ with $(\c,\t) \not\in \{(c,t),(c',t')\},$ we have $\xii^{\t}_{\c}=\xi^{\t}_{\c}$ by definition of $\xii$, so $\sum_{\t \in \T} \xii^{\t}_{\c} \le q_{\c}$ for every $\c \neq c, c'$.
Thus, $\xii \in \Xi^0$.

Next, $\xii_c^t = \xi_c^t-1 \ge \tilde{\xi}_{c}^{t} \ge p_c^t$ (the first inequality follows from the assumption $\xi_c^t>\tilde{\xi}_{c}^{t}$ and the second from $\tilde{\xi} \in \Xi$), and
$
\xii_c^t=\xi_c^t-1<\xi_c^t  \le q_c^t
$
(the first inequality is obvious and the second inequality follows from $\xi \in \Xi$).
Moreover,
 $\xii_{c'}^{t'}= \xi_{c'}^{t'}+1 > \xi_{c'}^{t'} \ge p_{c'}^{t'}$ (the first inequality is obvious and the second follows from  $\xi \in \Xi$), and $\xii_{c'}^{t'}=\xi_{c'}^{t'}+1 \le \tilde{\xi}_{c'}^{t'} \le q_{c'}^{t'}$ (the first inequality follows from $\xi_{c'}^{t'}<\tilde{\xi}_{c'}^{t'}$ and the second follows from $\tilde{\xi} \in \Xi$). For any $\c,\t$ with $(\c,\t) \not\in \{(c,t),(c',t')\},$ we have $\xii^{\t}_{\c}=\xi^{\t}_{\c}$ by definition of $\xii$, so $p^{\t}_{\c} \le \xii^{\t}_{\c} \le q^{\t}_{\c}$.
Therefore, $\xii \in \Xi$ and hence we conclude (1).

The proof that (1) is satisfied follows from the facts that $\xi_c^t>\tilde{\xi}_c^t$, $\tilde{\xi}_{c'}^{t'}>\xi_{c'}^{t'}$, there are more students assigned to school $c$ at $\xi$ than $\tilde{\xi}$, and there are more students assigned to school $c'$ at $\tilde{\xi}$ than $\xi$. If we change the roles of $\xi$ with $\tilde{\xi}$, $c$ with $c'$, and $t$ with $t'$, then (1) would imply $\xij \in \Xi \cap \Xi^0$. But this is exactly (2). Therefore, $\Xi\cap \Xi^0$ is an M-convex set.

The desired conclusion then follows from the fact that $\Xi\cap \Xi^0$ is an M-convex set and Theorem \ref{thm:ttc}.
\end{proof}
\medskip

\begin{proof}[Proof of Corollary \ref{cor:fdiverse}]
We show that $f$ is pseudo M-concave. Let $\xi,\tilde{\xi} \in \Xi^0$ be distinct. Then $U \equiv \{(c,t)|\xi^c_t>\tilde{\xi}^c_t\}$ is a nonempty set. Partition this set into three subsets: $U_1 \equiv \{(c,t)|\hat{\xi}^c_t \geq \xi^c_t>\tilde{\xi}^c_t\}$, $U_2 \equiv \{(c,t)| \xi^c_t>\hat{\xi}^c_t>\tilde{\xi}^c_t\}$,
and $U_3 \equiv \{(c,t)| \xi^c_t>\tilde{\xi}^c_t \geq \hat{\xi}^c_t\}$. Likewise $V \equiv \{(c',t')|\tilde{\xi}^{c'}_{t'}>\xi^{c'}_{t'}\}$ is
a nonempty set that can be partitioned into three subsets: $V_1 \equiv \{(c',t') | \hat{\xi}^{c'}_{t'} \geq \tilde{\xi}^{c'}_{t'} > \xi^{c'}_{t'}\}$,
$V_2 \equiv \{(c',t')| \tilde{\xi}^{c'}_{t'}>\hat{\xi}^{c'}_{t'}>\xi^{c'}_{t'}\}$, and $V_3 \equiv \{(c',t')| \tilde{\xi}^{c'}_{t'} > \xi^{c'}_{t'} \geq \hat{\xi}^{c'}_{t'}\}$. We consider several cases.

\noindent{\textbf{Case 1: $U_2$ is nonempty.}} There exists $(c,t)$ such that $\xi_t^c> \hat{\xi}_t^c >\tilde{\xi}_t^c$. Since $\xi,\tilde{\xi}\in \Xi_0$, there exists
$(c',t')$ such that $\xi_{t'}^{c'}<\tilde{\xi}_{t'}^{c'}$. In this case, $f(\xi-\chi_{c,t}+\chi_{c',t'})\geq f(\xi)$ and
$f(\tilde{\xi}+\chi_{c,t}-\chi_{c',t'}) \geq f(\tilde{\xi})$ by definition of $f$. Therefore,
\[
\min \{f(\xi-\chi_{c,t}+\chi_{c',t'}), f(\tilde{\xi}+\chi_{c,t}-\chi_{c',t'})\} \geq \min \{f(\xi),f(\tilde{\xi})\}.
\]

\noindent{\textbf{Case 2: $V_2$ is nonempty.}} There exists $(c',t')$ such that $\tilde{\xi}_{t'}^{c'} > \hat{\xi}_{t'}^{c'} > \xi_{t'}^{c'}$. The proof of this case is similar to the proof of Case 1.

\noindent{\textbf{Case 3: $U_1$ and $V_1$ are nonempty.}} There exist $(c,t)$ and $(c',t')$ such that $\hat{\xi}_t^c \geq \xi_t^c>  \tilde{\xi}_t^c$ and
$\hat{\xi}_{t'}^{c'} \geq \tilde{\xi}_{t'}^{c'} > \xi_{t'}^{c'}$. In this case, $f(\xi-\chi_{c,t}+\chi_{c',t'}) = f(\xi)$ and
$f(\tilde{\xi}+\chi_{c,t}-\chi_{c',t'}) = f(\tilde{\xi})$ by definition of $f$. Then,
\[
\min \{f(\xi-\chi_{c,t}+\chi_{c',t'}), f(\tilde{\xi}+\chi_{c,t}-\chi_{c',t'})\} = \min \{f(\xi),f(\tilde{\xi})\}.
\]

\noindent{\textbf{Case 4: $U_3$ and $V_3$ are nonempty.}} There exist $(c,t)$ and $(c',t')$ such that $\xi_t^c>  \tilde{\xi}_t^c \geq \hat{\xi}_t^c$ and
$\tilde{\xi}_{t'}^{c'} > \xi_{t'}^{c'} \geq \hat{\xi}_{t'}^{c'}$. The proof is similar to the proof of Case 3.

\noindent{\textbf{Case 5: $U=U_1$ and $V=V_3$.}} For every $(c,t)$ such that $\xi_t^c> \tilde{\xi}_t^c$, we have $\hat{\xi}_t^c \geq \xi_t^c  > \tilde{\xi}_t^c$,
and for every $(c',t')$ such that $\tilde{\xi}_{t'}^{c'} > \xi_{t'}^{c'}$, we have $\tilde{\xi}_{t'}^{c'} > \xi_{t'}^{c'} \geq \hat{\xi}_{t'}^{c'}$.
In this case, $f(\tilde{\xi})+2 \leq f(\xi)$, so $\min\{f(\tilde{\xi}), f(\xi)\}=f(\tilde{\xi})$. Furthermore, for any choice of $(c,t)$ and $(c',t')$
such that $\xi_c^t>\tilde{\xi}_c^t$ and $\xi_{c'}^{t'}<\tilde{\xi}_{c'}^{t'}$, we have $f(\xi-\chi_{c,t}+\chi_{c',t'})=f(\xi)-2$ and
$f(\tilde{\xi}+\chi_{c,t}-\chi_{c',t'})=f(\tilde{\xi})+2$.

Since $f(\xi)-2\geq f(\tilde{\xi})$ and $f(\tilde{\xi})+2>f(\tilde{\xi})$, we get the desired conclusion that
\[
\min \{f(\xi-\chi_{c,t}+\chi_{c',t'}), f(\tilde{\xi}+\chi_{c,t}-\chi_{c',t'})\} \geq f(\tilde{\xi}) = \min \{f(\xi),f(\tilde{\xi})\}.
\]

\noindent{\textbf{Case 6: $U=U_3$ and $V=V_1$.}} For every $(c,t)$ such that $\xi_t^c> \tilde{\xi}_t^c$, we have $\xi_t^c  > \tilde{\xi}_t^c \geq \hat{\xi}_t^c$,
and for every $(c',t')$ such that $\tilde{\xi}_{t'}^{c'} > \xi_{t'}^{c'}$, we have $\hat{\xi}_{t'}^{c'} \geq \tilde{\xi}_{t'}^{c'} > \xi_{t'}^{c'}$.
The proof is similar to the proof of Case 5.

We have considered all the possible cases: If $U_2$ or $V_2$ are nonempty, then we are done by Cases 1 and 2, respectively. Suppose that they are both empty. Therefore, $U=U_1 \cup U_3$ and $V=V_1 \cup V_3$ are both nonempty. If $U_1$ and $V_1$ are nonempty, then we are done by Case 3. If $U_3$ and $V_3$ are nonempty, then we are done by Case 4. If $U_1$ and $V_3$ are nonempty and one of $U_3$ or $V_1$ is nonempty, then we are done by Cases 3 or 4. Otherwise, if $U_3$ and $V_1$ are empty when $U_1$ and $V_3$ are nonempty, then $U=U_1$ and $V=V_3$, which is covered by Case 5.
If $U_3$ and $V_1$ are nonempty and one of $U_1$ or $V_3$ is nonempty, then we are done by Cases 3 or 4. Otherwise, if
$U_1$ and $V_3$ are empty when $U_3$ and $V_1$ are nonempty, then $U=U_3$ and $V=V_1$, which is covered by Case 6.

We conclude that $f$ is pseudo M-concave since in all possible cases we derive the desired inequality. Then the
proof follows from Theorem \ref{thm:ttcf}.
\end{proof}
\medskip

\begin{proof}[Proof of Corollary \ref{corollary:convexbal}]
Let the balanced-exchange policy be denoted by $\Xi$.
We first show that $\Xi\cap \Xi^0$ is M-convex.

Suppose that there exist $\xi,\tilde{\xi} \in \Xi \cap \Xi^0$ such that $\xi_c^t>\tilde{\xi}_{c}^{t}$. To show M-convexity, we need to
find school $c'$ and type $t'$ with $\xi_{c'}^{t'}<\tilde{\xi}_{c'}^{t'}$ such that (1) $\xii \equiv \xi-\chi_{c,t}+ \chi_{c',t'}\in \Xi \cap \Xi^0$ and
(2) $\xij \equiv \tilde{\xi}+\chi_{c,t}-\chi_{c',t'} \in \Xi \cap \Xi^0$.

If there exists $t'$ such that $\tilde{\xi}_{c}^{t'}>\xi_{c}^{t'}$, then
$\sum_{\t \in \T} \xii_{d}^{\t}=\sum_{\t \in \T} \xij_{d}^{\t}=\sum_{\t \in \T} \xi_{d}^{\t} =k_d$ for every $d$  and  $\sum_{\t \in \T} \xii_c^{\t}=\sum_{\t \in \T} \xij_c^{\t}=\sum_{\t \in \T} \xi_c^{\t} \le q_c$ for every $c \in C$,  so both (1) and (2) are satisfied.

Now suppose $\tilde{\xi}_{c}^{t'}\leq \xi_{c}^{t'}$ for every type $t'\neq t$. Therefore, $\sum_{\t \in \T} \tilde{\xi}_{c}^{\t} < \sum_{\t \in \T} \xi_{c}^{\t}$. Because $\sum_{\t \in \T, \c: d(\c)=d} \tilde{\xi}_{\c}^{\t} = \sum_{\t \in \T,\c: d(\c)=d} \xi_{\c}^{\t}$, where $d\equiv d(c)$,
there exists another school $c'$ in district $d$ such that $\sum_{\t \in \T} \tilde{\xi}_{c'}^{\t} > \sum_{\t \in \T} \xi_{c'}^{\t}$. In particular, there exists a type $t'$ such that  $\tilde{\xi}_{c'}^{t'}>\xi_{c'}^{t'}$.

We first show (1). To do so, first note that since both schools $c$ and $c'$ are in district $d$,
$\sum_{\t \in \T, \c: d(\c)=d} \xii_{\c}^{\t} = \sum_{\t \in \T,\c: d(\c)=d} \xi_{\c}^{\t}=k_d$. Moreover, for any $\d \neq d$, $\sum_{\t \in \T, \c: d(\c)=\d} \xii_{\c}^{\t} = \sum_{\t \in \T,\c: d(\c)=\d} \xi_{\c}^{\t}=k_{\d}$ because $\xii^{\t}_{\c}=\xi^{\t}_{\c}$ for any $\t$ and $\c$ with $d(\c) = \d$ by definition of $\xii$. Thus, $\xii \in \Xi$.
Next we show $\xii \in \Xi^0$. To do so, first observe that $\sum_{\t \in \T} \tilde{\xi}_{c}^{\t} = \sum_{\t \in \T} \xi_{c}^{\t}-1 < q_c$.
Moreover, $ \sum_{\t \in \T} \xii_{c'}^{\t}= \sum_{\t \in \T} \xi_{c'}^{\t}+1 \le
\sum_{\t \in \T} \tilde{\xi}_{c'}^{\t} \le q_{c'}.$ Furthermore, for any $\c \neq c, c'$,
$ \sum_{\t \in \T} \xii_{\c}^{\t}= \sum_{\t \in \T} \xi_{\c}^{\t} \le q_{\c}$. Therefore, $\xii \in \Xi^0$ and hence (1) holds.

Note that the above argument relies on the facts $\xi_c^t>\tilde{\xi}_c^t$, $\xi_{c'}^{t'}<\tilde{\xi}_{c'}^{t'}$, and $d(c)=d(c')$. If we switch the roles of $c$ with $c'$ and $\xi$ with $\tilde{\xi}$, the implication of (1) is  (2).

The result then follows from Theorem \ref{thm:ttc} because $\Xi \cap \Xi^0$ is M-convex and the initial matching trivially satisfies the balanced-exchange policy.
\end{proof}
\medskip

\begin{proof}[Proof of Corollary \ref{corollary:mix}]

The proof is very similar to those of Corollary \ref{corollary:convexdiv} and Corollary \ref{corollary:convexbal}. We first show that $\Xi \cap \Xi^0$ is an M-convex set. Recall that $\Xi=\{\xi | \forall c,t \text{ } q_c^t \geq \xi_c^t \geq p_c^t \text{ and } \forall d \text{ } \sum_t \xi_d^t =k_d \}$ and $\Xi^0=\{\xi | \sum_{c,t}{\xi_c^t}=\sum_d k_d \text{ and } \forall c \text{ } q_c \geq \sum_t \xi_c^t\}$.

Suppose that there exist $\xi,\tilde{\xi}\in \Xi \cap \Xi^0$ such that $\xi_c^t>\tilde{\xi}_c^t$. To show M-convexity, we need to find school $c'$ and type $t'$ with $\xi_{c'}^{t'}<\tilde{\xi}_{c'}^{t'}$ such that (1) $\xii \equiv \xi-\chi_{c,t}+ \chi_{c',t'}\in \Xi \cap \Xi^0$ and (2) $\xij \equiv \tilde{\xi}+\chi_{c,t}-\chi_{c',t'} \in \Xi \cap \Xi^0$. Let $d\equiv d(c)$. To show both conditions, we look at two possible cases depending on whether $c'=c$ or not.

\textbf{Case 1:}  First consider the case in which there exists type $t'$ such that $\xi_{c}^{t'}<\tilde{\xi}_{c}^{t'}$. We prove (1) for $c'=c$.
First, by definition of $\xii$,  we have
$\sum_{\t \in \T, \c \in \C}\xii^{\t}_{\c}=\sum_{\t \in \T, \c \in \C}\xi^{\t}_{\c}=\sum_d k_d$.
Next, since $\sum_{\t \in \T}\xii^{\t}_c=\sum_{\t \in \T}\xi^{\t}_c$, we have $\sum_{\t \in \T}\xii^{\t}_c \le q_c$.
Therefore, $\xii \in \Xi^0$.

Next, we have
$\xii^t_c=\xi_c^t-1 \ge \tilde{\xi}_{c}^{t} \ge p_c^t$ (the equality comes from the definition of $\xii$, the first inequality comes from the assumption $\xi_c^t>\tilde{\xi}_{c}^{t}$, and the second inequality
comes from the assumption  $\tilde{\xi} \in \Xi$),  and $\xii^t_c =\xi_c^t-1 < \xi^t_c \le q_c^t$ (the equality comes from the definition of $\xii$, the first inequality is obvious, and the second inequality comes from the assumption
$\xi \in \Xi$).
Moreover, we have $\xii^{t'}_c =\xi_{c}^{t'}+1 > \xi_{c}^{t'} \ge p_c^{t'}$ (the equality comes from the definition of $\xii$, the first inequality is obvious, and the second inequality comes from the assumption $\xi \in \Xi$), and
$\xii_c^{t'} = \xi_{c}^{t'}+1 \le \tilde{\xi}_{c}^{t'} \le q_c^{t'}$
(the equality comes from the definition of $\xii$, the first inequality comes from the assumption $\xi^{t'}_c<\tilde \xi^{t'}_c$, and the second inequality comes from the assumption $\tilde \xi \in \Xi$). For any $\c,\t$ with $(\c,\t) \not\in \{(c,t),(c,t')\},$ we have $\xii^{\t}_{\c}=\xi^{\t}_{\c}$ by definition of $\xii$, so $p^{\t}_{\c} \le \xii^{\t}_{\c} \le q^{\t}_{\c}$. Finally, $\sum_{\t \in \T} \xii_{d}^{\t}=\sum_{\t \in \T} \xij_{d}^{\t}=\sum_{\t \in \T} \xi_{d}^{\t} =k_d$ for every $d$.
 Therefore, $\xii \in \Xi$ and hence we conclude (1).

The proof that (1) is satisfied follows from the facts that $\xi_c^t>\tilde{\xi}_{c}^{t}$ and $\xi_{c}^{t'}<\tilde{\xi}_{c}^{t'}$. By changing the roles of $t$ with $t'$ and $\xi$ with $\tilde{\xi}$ in the preceding argument, we get the implication of (1) that $\xij \in \Xi \cap \Xi^0$. But this is exactly (2).

\textbf{Case 2:} Second, consider the case in which there exists no type $t'$ such that $\xi_{c}^{t'}<\tilde{\xi}_{c}^{t'}$. Then, $\xi_{c}^{t'}\geq \tilde{\xi}_{c}^{t'}$ for every $t'\neq t$. This in particular implies $\sum_{\t \in \T} \xi^{\t}_c>\sum_{\t \in \T} \tilde \xi^{\t}_c$.
Because $\sum_{\t \in \T, \c: d(\c)=d} \tilde{\xi}_{\c}^{\t} = \sum_{\t \in \T,\c: d(\c)=d} \xi_{\c}^{\t}$, where $d\equiv d(c)$,
there exists another school $c'$ in district $d$ such that $\sum_{\t \in \T} \tilde{\xi}_{c'}^{\t} > \sum_{\t \in \T} \xi_{c'}^{\t}$.
 In particular, there exists a type $t'$ such that  $\tilde{\xi}_{c'}^{t'}>\xi_{c'}^{t'}$.

Now we proceed to show condition (1) for this case. To do so first note that, by definition of $\xii$, we have
$\sum_{\t \in \T, \c \in \C}\xii^{\t}_{\c}=\sum_{\t \in \T, \c \in \C}\xi^{\t}_{\c}$. In addition,  the relation $\sum_{\t \in \T} \xi^{\t}_c>\sum_{\t \in \T} \tilde \xi^{\t}_c=\sum_{\t \in \T} \xii^{\t}_c$ and the assumption $\xi \in \Xi$ imply that $\sum_{\t \in \T} \xii^{\t}_c \le q_c$. Likewise,
$\sum_{\t \in \T} \xii^{\t}_{c'} = \sum_{\t \in \T} \xi^{\t}_{c'}+1 \le \sum_{\t \in \T} \tilde \xi^{\t}_{c'} \le q_{c'}$. Finally, for any $\c,\t$ with $(\c,\t) \not\in \{(c,t),(c',t')\},$ we have $\xii^{\t}_{\c}=\xi^{\t}_{\c}$ by definition of $\xii$, so $\sum_{\t \in \T} \xii^{\t}_{\c} \le q_{\c}$ for every $\c \neq c, c'$.
Thus, $\xii \in \Xi^0$.

Next, $\xii_c^t = \xi_c^t-1 \ge \tilde{\xi}_{c}^{t} \ge p_c^t$ (the first inequality follows from the assumption $\xi_c^t>\tilde{\xi}_{c}^{t}$ and the second from $\tilde{\xi} \in \Xi$), and
$
\xii_c^t=\xi_c^t-1<\xi_c^t  \le q_c^t
$
(the first inequality is obvious and the second inequality follows from $\xi \in \Xi$).
Moreover,
 $\xii_{c'}^{t'}= \xi_{c'}^{t'}+1 > \xi_{c'}^{t'} \ge p_{c'}^{t'}$ (the first inequality is obvious and the second follows from  $\xi \in \Xi$), and $\xii_{c'}^{t'}=\xi_{c'}^{t'}+1 \le \tilde{\xi}_{c'}^{t'} \le q_{c'}^{t'}$ (the first inequality follows from $\xi_{c'}^{t'}<\tilde{\xi}_{c'}^{t'}$ and the second follows from $\tilde{\xi} \in \Xi$). For any $\c,\t$ with $(\c,\t) \not\in \{(c,t),(c',t')\},$ we have $\xii^{\t}_{\c}=\xi^{\t}_{\c}$ by definition of $\xii$, so $p^{\t}_{\c} \le \xii^{\t}_{\c} \le q^{\t}_{\c}$.
Next, note that since both schools $c$ and $c'$ are in district $d$,
$\sum_{\t \in \T, \c: d(\c)=d} \xii_{\c}^{\t} = \sum_{\t \in \T,\c: d(\c)=d} \xi_{\c}^{\t}=k_d$. Moreover, for any $\d \neq d$, $\sum_{\t \in \T, \c: d(\c)=\d} \xii_{\c}^{\t} = \sum_{\t \in \T,\c: d(\c)=\d} \xi_{\c}^{\t}=k_{\d}$ because $\xii^{\t}_{\c}=\xi^{\t}_{\c}$ for any $\t$ and $\c$ with $d(\c) = \d$ by definition of $\xii$.
Therefore, $\xii \in \Xi$ and hence we conclude (1).

The proof that (1) is satisfied follows from the facts that $\xi_c^t>\tilde{\xi}_c^t$, $\tilde{\xi}_{c'}^{t'}>\xi_{c'}^{t'}$, there are more students assigned to school $c$ at $\xi$ than $\tilde{\xi}$, and there are more students assigned to school $c'$ at $\tilde{\xi}$ than $\xi$. If we change the roles of $\xi$ with $\tilde{\xi}$, $c$ with $c'$, and $t$ with $t'$, then (1) would imply $\xij \in \Xi \cap \Xi^0$. But this is exactly (2). Therefore, $\Xi\cap \Xi^0$ is an M-convex set.

The result then follows from Theorem \ref{thm:ttc} because $\Xi \cap \Xi^0$ is M-convex.
\end{proof}
\medskip

\begin{proof}[Proof of Theorem \ref{thm:welfimprove}]
Suppose that district admissions rules favor own students. Fix a student preference profile. Recall that under
interdistrict school choice, students are assigned to schools by SPDA, where each student ranks all contracts associated
with her and each district $d$ has the admissions rule $\chd$. Under intradistrict school choice, students are assigned
to schools by SPDA where students only rank the contracts associated with their home districts and each district $d$
has the admissions rule $\chd$. We first show that the intradistrict SPDA outcome can be produced by SPDA when all
districts participate simultaneously and students rank all contracts, including the ones associated with the other
districts, by modifying admissions rules for the districts. Let $\chd'(X) \equiv \chd(\{x\in X|d(s(x))=d\})$ be the
modified admissions rule.

In SPDA, if district admissions rules have completions that satisfy path independence, then
SPDA outcomes are the same under the completions and the original admissions rules because in
SPDA a district always considers a set of proposals which is feasible for students. Furthermore,
SPDA does not depend on the order of proposals when district admissions rules are path independent.
As a result, SPDA does not depend on the order of proposals when district admissions rules have
completions that satisfy path independence. Therefore, the intradistrict SPDA outcome can be
produced by SPDA when all districts participate simultaneously and students rank all contracts
including the ones associated with the other districts and each district $d$ has the admissions
rule $\chd'$. The reason behind this is that when each district $d$ has admissions rule $\chd'$,
a student is not admitted to a school district other than her home district. Furthermore, because
$\chd$ favors own students, the set of chosen students under $\chd'$ is the same as that under
$\chd$ for any set of contracts of the form $\{x\in X|d(s(x))=d\}$ for any set $X$.

We next show that $\chd'$ has a path-independent completion. By assumption, for every district $d$,
there exists a path-independent completion $\widetilde{Ch}_d$ of $\chd$. Let $\widetilde{Ch}_d'(X) \equiv \widetilde{Ch}_d(\{x\in X|d(s(x))=d\})$. We show that $\widetilde{Ch}_d'$ is a path-independent completion of $Ch'_d$. To show that $\widetilde{Ch}_d'(X)$ is a completion, consider a set $X$ such that $\widetilde{Ch}_d'(X)$ is feasible for students. Let $X^*\equiv \{x\in X|d(s(x))=d\}$. Then we have the following:
\begin{equation*}
  \widetilde{Ch}_d'(X^*)=\widetilde{Ch}_d(X^*)=\chd(X^*)=Ch'_d(X^*),
\end{equation*}
where the first equality follows from the definition of $\widetilde{Ch}'_d$, the second equality follows from the fact that $\widetilde{Ch}_d$ is a completion of $\chd$, and the third equality follows from the definition of $Ch'_d$. Furthermore, because
$\widetilde{Ch}_d'(X)=\widetilde{Ch}_d'(X^*)$ and $Ch'_d(X^*)=Ch'_d(X)$, we get $\widetilde{Ch}_d'(X)=Ch'_d(X)$. Therefore, $\widetilde{Ch}_d'$ is a completion of $Ch'_d$.

To show that $\widetilde{Ch}_d'$ is path independent, consider two sets of contracts $X$ and $Y$. Let $X^*\equiv \{x\in X|d(s(x))=d\}$ and $Y^*\equiv \{x\in Y|d(s(x))=d\}$. Then we have the following:

\begin{alignat*}{2}
 \widetilde{Ch}_d'(X \cup \widetilde{Ch}_d'(Y))  &= \widetilde{Ch}_d'(X \cup \widetilde{Ch}_d(Y^*)) \\
 &=  \widetilde{Ch}_d(X^* \cup \widetilde{Ch}_d(Y^*)) \\
 & =  \widetilde{Ch}_d(X^* \cup Y^*) \\ & =  \widetilde{Ch}_d'(X \cup Y),
\end{alignat*}
where the first and second equalities follow from the definition of $\widetilde{Ch}_d'$, the third equality follows from path independence of $\widetilde{Ch}_d$, and the last equality follows from the definition of $\widetilde{Ch}_d'$. Therefore, $\widetilde{Ch}_d'$ is path independent.

Because $\chd$ favors own students, we have $\chd(X)\supseteq \chd'(X)$ for every $X$ that is feasible for students. Furthermore, for any such $X$, $\wt{Ch}_d(X)=\chd(X)$ and $\wt{Ch}'_d(X)=\chd'(X)$ because $\wt{Ch}_d$ is a completion of $\chd$ and $\wt{Ch}'_d$ is a completion of $\chd'$, respectively. Therefore, for any $X$ that is feasible for students, $\wt{Ch}_d(X)\supseteq \wt{Ch}'_d(X)$. We use this result to show the following lemma.

\begin{lemma}\label{lem:comp}
Every student weakly prefers the interdistrict SPDA outcome under $(\widetilde{Ch}_d)_{d\in \D}$ to the interdistrict SPDA outcome under $(\widetilde{Ch}'_d)_{d\in \D}$.
\end{lemma}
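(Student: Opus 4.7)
The plan is to adapt the standard comparative-statics argument for SPDA to our setting. The main technical obstacle, already flagged in the excerpt, is that the hypothesis $\wt{Ch}_d(X) \supseteq \wt{Ch}'_d(X)$ is guaranteed only when $X$ is feasible for students, so the comparative-statics theorem of \cite{yen14} cannot be invoked off the shelf. The observation that makes the argument go through is that this restriction is harmless for SPDA dynamics: at every step of SPDA, each district is presented with a set consisting of fresh proposals plus contracts tentatively held from the previous step, and since every student holds at most one such contract at any moment, the set is automatically feasible for students. Hence the pointwise comparison between $\wt{Ch}_d$ and $\wt{Ch}'_d$ is available exactly on the sets of contracts that actually arise in either run of the algorithm.

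Given this, I would run the two algorithms in parallel and prove by induction on the step number $n \geq 1$ the following invariant: for every student $s$ and every district $d$, any contract $(s,d,c)$ that has been permanently rejected by $d$ in the $(\wt{Ch}_d)$ run by the end of step $n$ has also been rejected by $d$ in the $(\wt{Ch}'_d)$ run by the end of step $n$. The base case $n=1$ is immediate: both runs start with the same proposal by each student to the district containing her top-ranked school, so the sets of proposals to each district coincide, and the pointwise comparison on this feasible-for-students set implies that $\wt{Ch}_d$ rejects a weak subset of what $\wt{Ch}'_d$ rejects. For the inductive step, the invariant at step $n$ implies that at the start of step $n+1$ each student's current tentative holding (or current proposal) in the $(\wt{Ch}_d)$ run is weakly preferred to her analogous object in the $(\wt{Ch}'_d)$ run; combined with substitutability of the two completions together with their pointwise comparison on feasible-for-students sets, this propagates the rejection containment to step $n+1$.

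The main subtlety I anticipate is aligning the sets of contracts each district actually sees in step $n+1$ across the two runs, since the rejection histories differ and so the step-$n+1$ proposals need not be literally comparable by set inclusion. To handle this, I would exploit path independence of both completions (which holds by our standing assumptions, since substitutability plus LAD implies path independence) together with the fact, already used in the excerpt, that SPDA outcomes do not depend on the order of proposals when choice functions are path-independent. This lets me reshuffle the proposals so that the two runs can be placed on a common comparison footing at each step, after which substitutability, IRC, and the pointwise comparison on feasible-for-students sets combine to yield the inductive step. Once the invariant is established, the conclusion of the lemma follows: each student ends up matched with the top-ranked contract among those not permanently rejected, so having weakly fewer permanent rejections in the $(\wt{Ch}_d)$ run delivers a weakly preferred final assignment.
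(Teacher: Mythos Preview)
Your approach is sound and takes a genuinely different route from the paper's. The paper does \emph{not} compare the two SPDA runs step by step. Instead, letting $\mu$ and $\mu'$ be the outcomes under $(\wt{Ch}_d)$ and $(\wt{Ch}'_d)$, it first observes $\wt{Ch}_d(\mu'_d)=\mu'_d$ for every $d$ (via the feasible-for-students comparison applied to $\mu'_d$), and then runs a blocking-pair--satisfying improvement process starting from $\mu'$: at each stage a district $d$ with a blocking contract re-chooses from its current allocation together with every contract some student strictly prefers to her current position. An induction---using stability of $\mu'$ under $(\wt{Ch}'_d)$, path independence, and the feasible-for-students comparison---shows that no district ever drops a contract it held, so every student weakly improves along the process; termination yields a matching stable under $(\wt{Ch}_d)$, which the student-optimal $\mu$ then dominates. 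Your direct comparative statics also works, and the inductive step can be written without any informal ``reshuffling'': granted the invariant through step $n$, every $y$ in the set $X_d^{n+1}$ that $d$ sees in the $\wt{Ch}_d$-run lies in the cumulative offer set $Y'^{\,n+1}_d$ of the $\wt{Ch}'_d$-run (since all contracts $s(y)$ prefers to $y$ were rejected by step $n$ in the $\wt{Ch}_d$-run, hence, by the invariant, by step $n$ in the $\wt{Ch}'_d$-run); then $x\notin \wt{Ch}_d(X_d^{n+1})$ gives $x\notin \wt{Ch}'_d(X_d^{n+1})$ by the feasible-for-students comparison, hence $x\notin \wt{Ch}'_d(Y'^{\,n+1}_d)$ by substitutability of $\wt{Ch}'_d$, and path independence identifies $\wt{Ch}'_d(Y'^{\,n+1}_d)$ with the step-$(n+1)$ tentative acceptance. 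Both arguments rest on the same ingredients, but yours tracks the algorithm directly while the paper routes through stability and the student-optimality of SPDA.
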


\begin{proof}
Let $\mu$ be the interdistrict SPDA outcome under $(\widetilde{Ch}_d)_{d\in \D}$ and $\mu'$ be the interdistrict SPDA outcome under $(\widetilde{Ch}'_d)_{d\in \D}$. If $\mu'$ is stable under $(\widetilde{Ch}_d)_{d\in \D}$, then the conclusion follows from the result that $\mu$ is the student-optimal stable matching under $(\widetilde{Ch}_d)_{d\in \D}$ because each $\widetilde{Ch}_d$ is path independent \citep{chayen13}.

Suppose that $\mu'$ is not stable under $(\widetilde{Ch}_d)_{d\in \D}$. Since $\mu'$ is stable under $(\widetilde{Ch}'_d)_{d\in \D}$, $\widetilde{Ch}'_d(\mu'_d)=\mu'_d$ for every district $d$. Furthermore, $\mu'_d$ is feasible for students, so $\wt{Ch}_d(\mu'_d)\supseteq \wt{Ch}'_d(\mu'_d)=\mu'_d$. By definition of admissions rules, $\mu'_d \supseteq \wt{Ch}_d(\mu'_d)$, so $\widetilde{Ch}_d(\mu'_d)=\mu'_d$. As a result, there must exist a blocking contract for matching $\mu'$ so that it is not stable under $(\widetilde{Ch}_d)_{d\in \D}$. Whenever there exists a blocking pair, we consider the following algorithm to improve student welfare. Let $d_1$ be a district associated with a blocking contract. Set $\mu^0 \equiv \mu'$.

\begin{description}
  \item[Step $\mathbf{n}$ ($\mathbf{n\geq 1}$)] Consider the following set of contracts associated with a district $d_n$ for which there exists an associated blocking contract: $X_{d_n}^n\equiv \{x=(s,d_n,c)| x \mathrel{P_s} \mu^{n-1}_s\}$. District $d_n$ accepts $\wt{Ch}_{d_n}(\mu^{n-1}_d \cup X_{d_n}^n)$ and rejects the rest of the contracts. Let $\mu^n_{d_n}\equiv \wt{Ch}_{d_n}(\mu^{n-1}_{d_n} \cup X_{d_n}^n)$ and $\mu^n_d \equiv \mu^{n-1}_d \setminus Y^n$ where $Y^n\equiv \{x\in \mu^{n-1}|\exists y \in \mu^n_{d_n} \text { s.t. } s(x)=s(y)\}$ for $d\neq d_n$. If there are no blocking contracts for matching $\mu^n$ under $(\widetilde{Ch}_d)_{d\in \D}$, then stop and return $\mu^n$, otherwise go to Step $n+1$.
\end{description}

We show that district $d_n$ does not reject any contract in $\mu_{d_n}^{n-1}$ by mathematical induction on $n$, i.e., $\mu^n_{d_n}\supseteq \mu^{n-1}_{d_n}$ for every $n\geq 1$. Consider the base case for $n=1$. Recall that $\mu^1_{d_1} = \wt{Ch}_{d_1}(\mu^{0}_{d_1} \cup X_{d_1}^1)=\wt{Ch}_{d_1}(\mu'_{d_1} \cup X_{d_1}^1)$. By construction, $\mu^1_{d_1}$ is a feasible matching. We claim that $\mu'_{d_1} \cup \mu^1_{d_1}$ is feasible for students. Suppose, for contradiction, that it is not feasible for students. Then there exists a student $s$ who has one contract in $\mu'_{d_1}$ and one in $\mu^1_{d_1}\setminus \mu'_{d_1}$. Call the latter contract $z$. By construction, $z \mathrel{P_s} \mu'_s$, and by path independence, $z\in \wt{Ch}_{d_1}(\mu'_{d_1} \cup \{z\})$. Furthermore, since student $s$ is matched with district $d_1$ in $\mu'$, $d(s)=d_1$. Therefore, $\wt{Ch}_{d_1}(\mu'_{d_1} \cup \{z\})=\wt{Ch}'_{d_1}(\mu'_{d_1} \cup \{z\})$ by definition of $\wt{Ch}'_{d_1}$ and construction of $\mu'$. Hence, $z\in \wt{Ch}'_{d_1}(\mu'_{d_1} \cup \{z\})$, which contradicts the fact that $\mu'$ is stable under $(\wt{Ch}'_d)_{d\in \D}$. Hence, $\mu'_{d_1} \cup \mu^1_{d_1}$ is feasible for students. Feasibility for students implies that $\wt{Ch}_{d_1}(\mu'_{d_1} \cup \mu^1_{d_1})\supseteq \wt{Ch}'_{d_1}(\mu'_{d_1} \cup \mu^1_{d_1})$. Path independence and construction of $\mu^1_{d_1}$ yield $\mu^1_{d_1}=\wt{Ch}_{d_1}(\mu'_{d_1}\cup \mu^1_{d_1})$. Furthermore, there exists no student $s$, such that $d(s)=d_1$ ,who has a contract in $\mu^1_{d_1}\setminus \mu'$, as this would contradict stability of $\mu'$ under $(\wt{Ch}'_d)_{d\in \D}$. This implies, by definition of $\wt{Ch}'_{d_1}$, that $\wt{Ch}'_{d_1}(\mu'_{d_1} \cup \mu^1_{d_1})=\wt{Ch}'_{d_1}(\mu'_{d_1})$, and, by stability of $\mu'$ under $(\wt{Ch}'_d)_{d\in \D}$, $\wt{Ch}'_{d_1}(\mu'_{d_1})=\mu'_{d_1}$. Therefore, $\mu^1_{d_1}=\wt{Ch}_{d_1}(\mu'_{d_1} \cup \mu^1_{d_1})\supseteq \wt{Ch}'_{d_1}(\mu'_{d_1} \cup \mu^1_{d_1})=\mu'_{d_1}=\mu^0_{d_1}$, which means that district $d_1$ does not reject any contracts.

Now consider district $d_n$ where $n>1$. There are two cases to consider. First consider the case when $d_n\neq d_i$ for every $i<n$. In this case, $\mu^{n-1}_{d_n}\subseteq \mu^0_{d_n}=\mu'_{d_n}$. We repeat the same arguments as in the previous paragraph. Stability of $\mu'$ under $(\wt{Ch}'_d)_{d\in \D}$ and path independence of $\wt{Ch}'_{d_n}$ implies that $\mu^n_{d_n}\cup \mu^{n-1}_{d_n}$ is feasible for students. Therefore, $\wt{Ch}_{d_n}(\mu^{n-1}_{d_n} \cup \mu^n_{d_n})\supseteq \wt{Ch}'_{d_n}(\mu^{n-1}_{d_n} \cup \mu^n_{d_n})$. Furthermore, there exists no student $s$, such that $d(s)=d_n$, who has a contract in $\mu^n_{d_n}\setminus \mu^{n-1}_{d_n}$. As a result, by definition of $\wt{Ch}'_{d_n}$ and by path independence, $\wt{Ch}'_{d_n}(\mu^{n-1}_{d_n} \cup \mu^n_{d_n})=\wt{Ch}'_{d_n}(\mu^{n-1}_{d_n})=\mu^{n-1}_{d_n}$. As in the previous paragraph, we conclude that $\mu^n_{d_n}=\wt{Ch}_{d_n}(\mu^{n-1}_{d_n} \cup \mu^n_{d_n})\supseteq \wt{Ch}'_{d_n}(\mu^{n-1}_{d_n} \cup \mu^n_{d_n})=\mu^{n-1}_{d_n}$.

The second case is when there exists $i<n$ such that $d_i=d_n$. Let $i^*$ be the last such step before $n$. Since student welfare improves at every step before $n$ by the mathematical induction hypothesis, $\mu_{d_n}^{i^*-1} \cup X_{d_n}^{i^*}\supseteq \mu_{d_n}^{n-1} \cup X_{d_n}^{n}$. By definition, $\mu_{d_n}^{i^*}=\wt{Ch}_{d_n}(\mu_{d_n}^{i^*-1} \cup X_{d_n}^{i^*})$, which implies by path independence that $\mu_{d_n}^{n-1}\subseteq \wt{Ch}_{d_n}(\mu_{d_n}^{n-1} \cup X_{d_n}^{n})=\mu_{d_n}^n$ since $\mu_{d_n}^{n-1}\subseteq \mu_{d_n}^{i^*}$.

Finally, we need to show that the improvement algorithm terminates. We claim that $\mu^n_{d_n}\neq \mu^{n-1}_{d_n}$. Suppose, for contradiction, that these two matchings are the same. Then, by path independence of $\wt{Ch}_{d_n}$, for every $x\in X^n_{d_n}$, $\wt{Ch}_{d_n}(\mu^{n-1}_{d_n}\cup \{x\})=\mu^{n-1}_{d_n}$. This is a contradiction because there exists at least one blocking contract associated with district $d_n$. Therefore, district $d_n$ gets at least one new contract at Step $n$. Hence, at least one student gets a strictly more preferred contract at every step of the algorithm while every other student gets a weakly more preferred contract. Since the number of contracts is finite, the algorithm has to end in a finite number of steps.

\end{proof}

Because the interdistrict SPDA outcome under $(\chd)_{d\in \D}$ is the same as the interdistrict SPDA outcome under $(\wt{Ch}_d)_{d\in \D}$ and the interdistrict SPDA outcome under $(\chd')_{d\in \D}$ is the same as the interdistrict SPDA outcome under $(\wt{Ch}')_{d\in \D}$, the lemma implies that every student weakly prefers the outcome of interdistrict SPDA under $(\chd)_{d\in \D}$ to the outcome of intradistrict SPDA (which is the same as the interdistrict SPDA outcome under $(\chd')_{d\in \D}$). This completes the proof of the first part.

To prove the second part of the theorem, we show that if at least one district's admissions rule fails to favor own students, then there exists a student  preference profile such that not every student is weakly better off under interdistrict SPDA than under
intradistrict SPDA. Suppose that for some district $d$, there exists a matching $X$, which is feasible for students, such that $Ch_d(X)$ is not a superset of $Ch_d(X^*)$, where $X^* \equiv \{x \in X|d(s(x))=d\}$. Now, consider a matching $Y$ where (i) all students from district $d$ are matched with schools in district $d$, (ii) $Y$ is feasible, and (iii) $Y\supseteq \chd(X^*)$. The existence of such a $Y$ follows from the fact that $\chd(X^*)$ is feasible and $k_{d'} \le \sum_{c:d(c)=d'} q_c$, for every district $d'$ (that is, there are enough seats in district $d'$ to match all students from district $d'$.) Because $Y$ is feasible and $\chd$ is acceptant, $\chd(Y_d)=Y_d$.

Now consider the following student preferences. First we consider students from district $d$. Each student $s$ who has a contract in $X^*$ ranks $X^*_s$ as her top choice. Note that doing so is well defined because $X^*$ is feasible for students. Each student $s$ who has a contract in $X^* \setminus \chd(X^*)$ ranks contract $Y_s$ as her second top choice. Note that, in this case, $Y_s$ cannot be the same as $X^*_s$ because $\chd(Y_d)=Y_d$ and $\chd$ is path independent. Each student $s$ who has a contract in $Y\setminus X^*$ ranks that contract as her top choice. Next we consider students from the other districts. Each student $s$ who has a contract in $X\setminus X^*$ ranks that contract as her top choice. Any other student ranks a contract not associated with district $d$ as her top choice. Complete the rest of the student preferences arbitrarily.

Consider SPDA for district $d$ in intradistrict school choice. At the first step, students who have a contract in $X^*$ propose that contract. The remaining students who have contracts in $Y\setminus X^*$ propose the associated contracts. Because $Y$ is feasible, $Y$ contains $\chd(X^*)$, and $\chd$ is acceptant, only contracts in $X^* \setminus \chd(X^*)$ are rejected. At the second step, these students propose their contracts in $Y_d$, and the set of proposals that the district considers is $Y_d$. Because $\chd(Y_d)=Y_d$, no contract is rejected, and SPDA stops and returns $Y_d$. In particular, every student who has a contract in $\chd(X^*)$ has the corresponding contract at the outcome.

In interdistrict SPDA, at the first step, each student who has a contract in $X$ proposes that contract and every other student proposes a contract associated with a district different from $d$. District $d$ considers $X$ (or $X_d$), and tentatively accepts $\chd(X)$. Because $\chd(X) \not \supseteq \chd(X^*)$ by assumption, at least one student who has a contract in $\chd(X^*)$ is rejected. Therefore, this student is strictly worse off under interdistrict school choice than under intradistrict school choice.
\end{proof}

\medskip

\begin{proof}[Proof of Theorem \ref{thm:imposswithschools}]

To show the result, we first introduce the following weakening of the substitutability condition \citep{hatfield2008matching}.
A district admissions rule $Ch_d$ satisfies \df{weak substitutability} if, for every
$x \in X\subseteq Y \subseteq \X$ with $x\in Ch_d(Y)$ and $|Y_s| \le 1$ for each $s \in \S$,
it must be that $x\in Ch_d(X)$.

Under weak substitutability, the following result is known (the statement is slightly modified for the present setting).

\begin{theorem}[\citet{hatfield2008matching}]
Let $d$ and $d'$ be two distinct districts. Suppose that $\chd$ satisfies IRC
but violates weak substitutability. Then, there exist student preferences and a path-independent admissions rule for $d'$ such that, regardless of the other districts' admissions rules, no stable matching exists.
\end{theorem}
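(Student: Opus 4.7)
My plan is to convert the failure of weak substitutability into an unstable two-district rejection cycle, using a carefully engineered preference profile and a simple (responsive, hence path-independent) admissions rule at $d'$. Starting from witnesses $X\subseteq Y\subseteq \X$ with $|Y_s|\le 1$ for every student $s$ and $x\in X$ satisfying $x\in Ch_d(Y)\setminus Ch_d(X)$, I would first apply IRC repeatedly to delete contracts in $Y\setminus Ch_d(Y)$ without affecting $Ch_d(Y)$. Then, adding the contracts of $Y\setminus X$ back to $X$ one at a time and tracking when $x$ transitions from rejected to chosen, I would isolate a set $Z$ with $X\subseteq Z\subsetneq Y$ and a single trigger $y=(s^*,d,c^*)\in Y\setminus Z$ such that $x\notin Ch_d(Z)$ but $x\in Ch_d(Z\cup\{y\})$. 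Since $|Y_s|\le 1$, the student $s^*$ is distinct from the student of $x$, and comparing $Ch_d(Z)$ with $Ch_d(Z\cup\{y\})$ produces at least one displaced contract $w\in Ch_d(Z)\setminus Ch_d(Z\cup\{y\})$ whose student $\tilde s$ also differs from $s^*$.

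Next, I would design student preferences and a path-independent admissions rule at $d'$ to create a cycle. Introduce a single school $c'$ at $d'$ with unit capacity, together with contracts $z=(s^*,d',c')$ and $\tilde z=(\tilde s,d',c')$. Endow $d'$ with the responsive rule in which $\tilde s$ has strictly higher priority than $s^*$ at $c'$. Let $s^*$ rank $z$ above $y$ and nothing else acceptable; let $\tilde s$ rank $\tilde z$ above $w$ and nothing else acceptable; and let every other student whose contract appears in $Z\cup\{x\}$ rank only that contract as acceptable. With all contracts in other districts rendered unacceptable by these preferences, the other districts' choice rules never bite on any proposal, so the construction holds \emph{regardless} of those rules, as the statement requires.

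I would then case-split on $s^*$'s placement in a candidate stable matching $M$. If $s^*$ is at $z$, then $\tilde s$ is not at $\tilde z$ (capacity-one conflict at $c'$) and hence is at $w$, so $y\notin M$ and $M_d\subseteq Z$; by $x\notin Ch_d(Z)$, the contract $x$ is absent. But $\tilde s$ strictly prefers $\tilde z$, and $(\tilde s,\tilde z)$ blocks at $d'$ because $\tilde s$'s higher priority at $c'$ guarantees $\tilde z\in Ch_{d'}(M_{d'}\cup\{\tilde z\})$. If $s^*$ is not at $z$, then $s^*$ strictly prefers $z$ to the current assignment, and $(s^*,z)$ blocks at $d'$ unless $\tilde s$ is already at $\tilde z$; in that sub-case $w\notin M$, so $M_d$ sits inside $Z\setminus\{w\}$ together with possibly $\{y\}$, and by the trigger property $x\in Ch_d(Z\cup\{y\})$ combined with IRC the contract $x$ appears in $Ch_d$ applied to the augmented set, producing the blocking contract $(s(x),x)$ at $d$. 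Either branch exhibits a blocking contract, so no stable matching exists.

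The main obstacle will be controlling $Ch_d$'s behavior on the intermediate sets $M_d$ that arise in these sub-cases — specifically, when $M_d$ is not literally $Z$ or $Z\cup\{y\}$ but a perturbation thereof, since only IRC and the single non-substitutability witness are available to pin down $Ch_d$. I would address this either by shrinking $Z$ to a genuinely minimal witness of the failure of weak substitutability so that exactly one displacement occurs and IRC determines $Ch_d$ on the relevant perturbations, or, if multiple displacements are forced by the structure of $Ch_d$, by an inductive argument that routes each additional displaced student through a parallel auxiliary school at $d'$ in the same manner as $\tilde s$, so the essential cycle between $x$ and $y$ still closes through the resulting chain of conflicts at $d'$.
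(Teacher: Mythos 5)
First, a point of reference: the paper does not prove this statement at all. It is imported verbatim (with the remark that ``the statement is slightly modified for the present setting'') from \citet{hatfield2008matching} and used as a black box in the proof of Theorem \ref{thm:imposswithschools}. So you are reconstructing a cited result from scratch, and the comparison must be with the known construction in that reference rather than with anything in this paper.

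Your overall strategy (minimize the violation to a single trigger contract $y$, then build a capacity-one auxiliary district $d'$ to create a conflict) is the right one, but the construction has a fatal flaw rather than just the acknowledged loose end. Two problems. First, the displaced contract $w\in Ch_d(Z)\setminus Ch_d(Z\cup\{y\})$ need not exist: nothing rules out $Ch_d(Z\cup\{y\})\supseteq Ch_d(Z)$, in which case $\tilde s$ is undefined. Second, and more seriously, even when $w$ exists, routing the conflict at $d'$ through $s^*$ and $\tilde s$ instead of through $s^*$ and $s(x)$ destroys the rejection cycle. In your construction $\tilde z$ is $\tilde s$'s top choice and $\tilde s$ has top priority at $c'$, so the auxiliary district is resolved deterministically ($\tilde s$ always ends up at $\tilde z$) and there is no feedback loop left to exploit. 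Concretely, the matching $M$ with $M_d=Ch_d(Z\cup\{y\})$, $M_{d'}=\{\tilde z\}$, and every remaining student either holding her unique acceptable contract in $M_d$ or unmatched, is \emph{stable} under your construction: $w\notin Ch_d(Z\cup\{y\})$ makes $\tilde s$'s presence at $d'$ consistent; $x\in Ch_d(Z\cup\{y\})=M_d$ so $s(x)$ is content; $z$ does not block because $\tilde z$ beats it at $c'$; and for any rejected contract $v$ one has $M_d\subseteq M_d\cup\{v\}\subseteq Z\cup\{y\}$, so IRC gives $Ch_d(M_d\cup\{v\})=Ch_d(Z\cup\{y\})=M_d\not\ni v$, hence no block. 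The essential tension in a weak-substitutability violation concerns $x$ itself---accepted when $y$ is present, rejected when it is absent---so the student who needs the outside option at $d'$ is $s(x)$, with priority above $s^*$ at $c'$ and preferences $x\succ x'$, against $s^*$ with preferences $z\succ y$: then $s^*$ at $d'$ forces $x$'s rejection at $d$, which sends $s(x)$ to $d'$ to displace $s^*$, which sends $y$ back to $d$, which re-admits $x$, and so on. Two further issues would remain even after this repair: the paper's model stipulates that every contract is preferred to being unmatched, so ``nothing else acceptable'' is not available and the surplus contracts must be argued harmless; and pinning down $Ch_d$ on the intermediate sets $M_d$ genuinely requires the minimality of the witness together with repeated applications of IRC---this is precisely where the substance of Hatfield and Kojima's argument lies, and your two proposed workarounds do not yet supply it.
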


Given this result, for our purposes it suffices to show the following.

\begin{customthm}{9'}\label{thm:imposswithschools'}
Let $d$ be a district. There exist a set of students, their types, schools in $d$, and
type-specific ceilings for $d$ such that
there is no district admissions rule of $d$ that has district-level type-specific ceilings,
is $d$-weakly acceptant, and
satisfies IRC and weak substitutability.
\end{customthm}

To show this result, consider a district $d$ with $k_{d}=2$. There are three schools $c_{1}$,
$c_{2}$, $c_{3}$ in the district, each with capacity one, and four students
$s_{1}$, $s_{2}$, $s_{3}$, $s_{4}$  of which two are from a different district. Students $s_{1}$ and $s_{2}$
are of type $t_1$ and students $s_{3}$ and $s_{4}$ are of type $t_2$. The
district-level type-specific
ceilings are as follows: $q_{d}^{t_1}=q_{d}^{t_2}=1$.

Suppose, for contradiction, that the district admissions rule has district-level type-specific ceilings,
is $d$-weakly acceptant, and satisfies IRC and weak substitutability.

Consider $Ch_{d}(\{(s_{1},c_{1}),(s_{2},c_{1}),(s_{3},c_{1}),(s_{4},c_{1}) \})$.
Since types are symmetric and two students
are symmetric within each type, without loss of generality, we can assume
$Ch_{d}(\{(s_{1},c_{1}),(s_{2},c_{1}),( s_{3},c_{1}) ,(
s_{4},c_{1}) \})=\{( s_{1},c_{1}) \}$ because $q_{c_1}=1$.

Next, consider $Ch_{d}(\{( s_{2},c_{2}) ,( s_{3},c_{2})
,( s_{4},c_{2}) \})$. Because $q_{c_2}=1$ and $Ch_d$ is $d$-weakly acceptant,
this is either equal to $\{(s_{2},c_{2}) \}$ or $\{( s_{3},c_{2}) \}$ (the case when it
is equal to $\{( s_{4},c_{2}) \}$ is symmetric to the case when
$\{( s_{3},c_{2}) \}$. We analyze these two cases separately.

\begin{enumerate}
\item Suppose $Ch_{d}(\{( s_{2},c_{2}) ,( s_{3},c_{2}),( s_{4},c_{2}) \})=\{( s_{2},c_{2}) \}$. Then, by IRC,
we conclude that $Ch_{d}(\{( s_{2},c_{2}) ,(s_{3},c_{2}) \})=\{( s_{2},c_{2}) \}$. Next, we argue that
$Ch_{d}(\{( s_{1},c_{1}) ,( s_{2},c_{2}) ,(s_{3},c_{2}) \})=\{( s_{2},c_{2}) \}$. This is because the
only two cases that satisfy d-weak acceptance and type-specific ceilings
are $\{( s_{2},c_{2}) \}$ and $\{( s_{1},c_{1}),( s_{3},c_{2}) \}$. The latter would violate weak
substitutability since in that case $( s_{3},c_{2})$ would be accepted in a larger set
$\{( s_{1},c_{1}) ,(s_{2},c_{2}) ,( s_{3},c_{2}) \}$ and rejected from a
smaller set $\{( s_{2},c_{2}) ,( s_{3},c_{2}) \}$.
Then, by IRC, $Ch_{d}(\{( s_{1},c_{1}) ,( s_{2},c_{2}),( s_{3},c_{2}) \})=\{( s_{2},c_{2}) \}$ implies
$Ch_{d}(\{( s_{1},c_{1}) ,( s_{2},c_{2}) \})=\{(s_{2},c_{2}) \}$. Then we note
that $Ch_{d} (\{( s_{1},c_{1}),( s_{2},c_{2}) ,( s_{3},c_{1})\})=\{(
s_{2},c_{2}) ,( s_{3},c_{1}) \}$ since by weak substitutability
$( s_{1},c_{1}) $ cannot be chosen, and
therefore $( s_{2},c_{2}) $ and $( s_{3},c_{1}) $ have
to be chosen due to $d$-weak acceptance. Next, again by weak substitutability,
we note that $Ch_{d}(\{( s_{1},c_{1}),( s_{2},c_{2}) ,( s_{3},c_{1}) \})=\{(
s_{2},c_{2}) ,( s_{3},c_{1}) \}$ implies $Ch_{d}(\{(
s_{1},c_{1}) ,( s_{3},c_{1}) \})=\{( s_{3},c_{1})
\}$. Finally, we note that this contradicts $Ch_{d}
( \{(s_{1},c_{1}),(s_{2},c_{1}),( s_{3},c_{1}) ,(
s_{4},c_{1}) \} )=\{( s_{1},c_{1}) \}$ and IRC.

\item Suppose $Ch_{d}(\{( s_{2},c_{2}) ,( s_{3},c_{2})
,( s_{4},c_{2}) \})=\{( s_{3},c_{2}) \}$. Consider $
Ch_{d}(\{( s_{2},c_{3}) ,( s_{4},c_{3}) \})$. Because $q_{c_3}=1$ and $Ch_d$ is $d$-weakly acceptant,
this is either $\{( s_{2},c_{3}) \}$ or $\{( s_{4},c_{3})\}$. We consider
these two possible cases separately. These two subcases
will follow similar arguments to Case (1) above and change the indices
appropriately in order to get a contradiction.

\begin{enumerate}
\item Suppose $Ch_{d}(\{( s_{2},c_{3}) ,( s_{4},c_{3})\})=\{( s_{2},c_{3}) \}$. Next, we argue that
$Ch_{d}(\{(s_{1},c_{1}) ,( s_{2},c_{3}) ,( s_{4},c_{3})
\}) =\{( s_{2},c_{3}) \}$. This is because the only two cases that
satisfy d-weak acceptance and type-specific ceilings are $\{(
s_{2},c_{3}) \}$ and $\{( s_{1},c_{1}) ,(
s_{4},c_{3}) \}$. The latter would violate weak substitutability since in that case $( s_{4},c_{3}) $ would be
accepted in a larger set $\{( s_{1},c_{1}) ,(
s_{2},c_{3}) ,( s_{4},c_{3}) \}$ and rejected from a
smaller set $\{( s_{2},c_{3}) ,( s_{4},c_{3}) \}$.
Then, by IRC, $Ch_{d}(\{( s_{1},c_{1}) ,( s_{2},c_{3})
,( s_{4},c_{3}) \})=\{( s_{2},c_{3}) \}$ implies $%
Ch_{d}(\{( s_{1},c_{1}) ,( s_{2},c_{3}) \})=\{(
s_{2},c_{3}) \}$. Then we note that $Ch_{d}(\{( s_{1},c_{1})
,( s_{2},c_{3}) ,( s_{4},c_{1}) \})=\{(
s_{2},c_{3}) ,( s_{4},c_{1}) \}$ since by weak substitutability $( s_{1},c_{1}) $ cannot to be chosen,
therefore $( s_{2},c_{3}) $ and $( s_{4},c_{1}) $ have
to be chosen due to d-weak acceptance. Next, again by weak substitutability, we note that $Ch_{d}(\{(
s_{1},c_{1}) ,( s_{2},c_{3}) ,( s_{4},c_{1})
\})=\{( s_{2},c_{3}) ,( s_{4},c_{1}) \}$ implies $%
Ch_{d}(\{
( s_{1},c_{1}) ,( s_{4},c_{1}) \})=\{(
s_{4},c_{1}) \}$. Finally, we note that this contradicts
$Ch_{d}(\{(s_{1},c_{1}),(s_{2},c_{1}),( s_{3},c_{1}) ,(
s_{4},c_{1}) \})=\{( s_{1},c_{1}) \}$ and IRC.

\item Suppose $Ch_{d}(\{( s_{2},c_{3}) ,( s_{4},c_{3})
\})=\{
( s_{4},c_{3}) \}$. Next, we argue that $Ch_{d}(\{(
s_{2},c_{3}) ,( s_{3},c_{2}) ,( s_{4},c_{3})
\})=\{( s_{4},c_{3}) \}$. This is because the only two cases that
satisfy d-weak acceptance and type-specific ceilings are $\{(
s_{4},c_{3}) \}$ and $\{( s_{2},c_{3}) ,(
s_{3},c_{2}) \}$. The latter would violate weak substitutability since in that case $( s_{2},c_{3}) $ would be
accepted in a larger set $\{( s_{2},c_{3}) ,(
s_{3},c_{2}) ,( s_{4},c_{3}) \}$ and rejected from a
smaller set $\{( s_{2},c_{3}) ,( s_{4},c_{3}) \}$.
Then, by IRC, $Ch_{d}(\{( s_{2},c_{3}) ,( s_{3},c_{2})
,( s_{4},c_{3}) \})=\{( s_{4},c_{3}) \}$ implies $%
Ch_{d}(\{( s_{3},c_{2}) ,( s_{4},c_{3}) \})=\{(
s_{4},c_{3}) \}$. Then we note that $Ch_{d}(\{( s_{2},c_{2})
,( s_{3},c_{2}) ,( s_{4},c_{3}) \})=\{(
s_{2},c_{2}) ,( s_{4},c_{3}) \}$ since by weak substitutability $( s_{3},c_{2}) $ cannot to be chosen,
therefore $( s_{4},c_{3}) $ and $( s_{2},c_{2}) $ have
to be chosen due to d-weak acceptance. Next, again by weak substitutability, we note that $Ch_{d}(\{(
s_{2},c_{2}) ,( s_{3},c_{2}) ,( s_{4},c_{3})
\})=\{
( s_{2},c_{2}) ,( s_{4},c_{3}) \}$ implies $%
Ch_{d}(\{( s_{2},c_{2}) ,( s_{3},c_{2}) \})=\{(
s_{2},c_{2}) \}$. Finally, we note that this contradicts
$Ch_{d}(\{( s_{2},c_{2}) ,( s_{3},c_{2}) ,(
s_{4},c_{2}) \})=\{( s_{3},c_{2}) \}$ and IRC.

\end{enumerate}
\end{enumerate}

\end{proof}

\end{document}